\newtheorem{theorem}{Theorem}[section]
\newtheorem{corollary}{Corollary}[section]
\newtheorem{proposition}{Proposition}[section]
\newtheorem{lemma}{Lemma}[section]
\newtheorem{example}{Example}[section]
\newtheorem{remark}{Remark}[section]
\newenvironment{proof}[1][Proof.]{\vspace{0.5em}\textbf{#1} }{\
\hfill\rule{0.5em}{0.5em}}
\newcommand{\Z}{\mathbb{Z}}
\newcommand{\zero}{{\mathbf{0}}}
\newcommand{\one}{{\mathbf{1}}}
\newcommand{\C}{{\cal C}}
\newcommand{\wt}{{\rm wt}}
\newcommand{\two}{{\mathbf{2}}}
\newcommand{\four}{{\mathbf{4}}}
\newcommand{\uu}{\mathbf{u}}
\newcommand{\vv}{\mathbf{v}}
\newcommand{\ww}{\mathbf{w}}
\newcommand{\zz}{\mathbf{z}}
\newcommand{\rank}{\operatorname{rank}}
\newcommand{\kernel}{\operatorname{ker}}
\newcommand{\cA}{{\cal A}}
\newcommand{\cH}{\cal{H}}
\newcommand{\cG}{{\cal G}}
\begin{document}


\title{$\mathbb{Z}_2\mathbb{Z}_4\mathbb{Z}_8$-Additive Hadamard Codes
\thanks{This work has been partially supported by the Spanish MINECO under Grant PID2019-104664GB-I00
(AEI / 10.13039/501100011033) and by Catalan AGAUR scholarship 2020 FI SDUR 00475.}
}



\author{Dipak K. Bhunia, Cristina Fern\'andez-C\'ordoba, Merc\`e Villanueva}

\maketitle



\begin{abstract}
The $\mathbb{Z}_2\mathbb{Z}_4\mathbb{Z}_8$-additive codes are subgroups of $\mathbb{Z}_2^{\alpha_1} \times \mathbb{Z}_4^{\alpha_2} \times \mathbb{Z}_8^{\alpha_3}$, and can be seen as linear codes over $\mathbb{Z}_2$ when $\alpha_2=\alpha_3=0$,  $\mathbb{Z}_4$-additive or $\mathbb{Z}_8$-additive codes when $\alpha_1=\alpha_3=0$ or $\alpha_1=\alpha_2=0$, respectively, or $\mathbb{Z}_2\mathbb{Z}_4$-additive codes when $\alpha_3=0$. A $\mathbb{Z}_2\mathbb{Z}_4\mathbb{Z}_8$-linear  Hadamard code is a Hadamard code which is the Gray map image of a $\mathbb{Z}_2\mathbb{Z}_4\mathbb{Z}_8$-additive code.  
In this paper, we generalize some known results for $\mathbb{Z}_2\mathbb{Z}_4$-linear Hadamard codes to $\mathbb{Z}_2\mathbb{Z}_4\mathbb{Z}_8$-linear Hadamard codes with $\alpha_1 \neq 0$, $\alpha_2 \neq 0$, and $\alpha_3 \neq 0$. First, we give a recursive construction of $\mathbb{Z}_2\mathbb{Z}_4\mathbb{Z}_8$-additive Hadamard codes of type $(\alpha_1,\alpha_2, \alpha_3;t_1,t_2, t_3)$ with $t_1\geq 1$, $t_2 \geq 0$, and $t_3\geq 1$.
Then, we show that in general the $\mathbb{Z}_4$-linear, $\mathbb{Z}_8$-linear and $\mathbb{Z}_2\mathbb{Z}_4$-linear Hadamard codes are not included in the family of $\mathbb{Z}_2\mathbb{Z}_4\mathbb{Z}_8$-linear Hadamard codes with $\alpha_1 \neq 0$, $\alpha_2 \neq 0$, and $\alpha_3 \neq 0$. Actually, we point out that  none of these nonlinear $\mathbb{Z}_2\mathbb{Z}_4\mathbb{Z}_8$-linear Hadamard codes of length $2^{11}$ is equivalent to a $\mathbb{Z}_2\mathbb{Z}_4\mathbb{Z}_8$-linear Hadamard code of any other type, a $\mathbb{Z}_2\mathbb{Z}_4$-linear Hadamard code, or a $\mathbb{Z}_{2^s}$-linear Hadamard code, with $s\geq 2$, of the same length $2^{11}$.
\end{abstract}


\section{Introduction}
Let $\Z_{2^s}$ be the ring of integers modulo $2^s$ with $s\geq1$. The set of
$n$-tuples over $\Z_{2^s}$ is denoted by $\Z_{2^s}^n$. In this paper,
the elements of $\Z^n_{2^s}$ will also be called vectors. 
A code over $\Z_2$ of length $n$ is a nonempty subset of $\Z_2^n$,
and it is linear if it is a subspace of $\Z_{2}^n$. Similarly, a nonempty
subset of $\Z_{2^s}^n$ is a $\Z_{2^s}$-additive code if it is a subgroup of $\Z_{2^s}^n$. A $\Z_2\Z_4\Z_8$-additive code is a subgroup of $\Z_2^{\alpha_1}  \times \Z_4^{\alpha_2} \times \Z_8^{\alpha_3}$. Note that a $\Z_2\Z_4\Z_8$-additive code is a linear code over $\Z_2$ when $\alpha_2=\alpha_3=0$,  a $\Z_4$-additive or $\Z_8$-additive code when $\alpha_1=\alpha_3=0$ or $\alpha_1=\alpha_2=0$, respectively, and a $\Z_2\Z_4$-additive code when $\alpha_3=0$. 
The order of a vector $u\in \Z_{2^s}$, denoted by $o(u)$, is the smallest positive integer $m$ such that $m u =(0,\dots,0)$. Also, the order of a vector $\mathbf u\in \Z_2^{\alpha_1}\times\Z_4^{\alpha_2} \times\Z_8^{\alpha_3}$, denoted by $o(\mathbf u)$, is the smallest positive integer $m$ such that $m \mathbf u =(0,\dots,0\mid 0,\dots,0\mid 0,\dots,0)$.

The Hamming weight of a vector $u\in\Z_{2}^n$, denoted by $\wt_H(u)$, is
the number of nonzero coordinates of $u$. The Hamming distance of two
vectors $u,v\in\Z_{2}^n$, denoted by $d_H(u,v)$, is the number of
coordinates in which they differ.  Note that $d_H(u,v)=\wt_H(u-v)$. The minimum distance of a code $C$ over $\Z_2$ is $d(C)=\min \{ d_H(u,v) : u,v \in C, u \not = v  \}$.

In \cite{Sole}, a Gray map  from $\Z_4$ to $\Z_2^2$ is defined as
$\phi(0)=(0,0)$, $\phi(1)=(0,1)$, $\phi(2)=(1,1)$ and $\phi(3)=(1,0)$. There exist different generalizations of this Gray map, which go from $\Z_{2^s}$ to
$\Z_2^{2^{s-1}}$ \cite{Carlet,Codes2k,dougherty,Nechaev,Krotov:2007}.
The one given in \cite{Nechaev} can be defined in terms of the elements of a Hadamard code \cite{Krotov:2007}, and Carlet's Gray map \cite{Carlet} is a particular case of the one given in \cite{Krotov:2007} 
satisfying $\sum \lambda_i \phi(2^i) =\phi(\sum \lambda_i 2^i)$ \cite{KernelZ2s}. 
In this paper, we focus on Carlet's Gray map \cite{Carlet}, from $\Z_{2^s}$ to $\Z_2^{2^{s-1}}$, which is also a particular case of the one given in \cite{ShiKrotov2019}. Specifically, \begin{gather}\label{eq:GrayMapCarlet}
\phi_s(u)=(u_{s-1},u_{s-1},\dots,u_{s-1})+(u_0,\dots,u_{s-2})Y_{s-1},
\end{gather}
where $u\in\Z_{2^s}$; $[u_0,u_1,\dots,u_{s-1}]_2$ is the binary expansion of $u$, that is, $u=\sum_{i=0}^{s-1}u_i2^i$ with $u_i\in \{0,1\}$; and $Y$ is the  matrix of size $(s-1)\times 2^{s-1}$ whose columns are all the vectors in $\Z_2^{s-1}$. Without loss of generality, we assume that the columns of $Y_{s-1}$ are ordered in ascending order, by considering the elements of $\mathbb{Z}_2^{s-1}$ as the binary expansions of the elements of $\mathbb{Z}_{2^{s-1}}$. Note that $\phi_1$ is the identity map, and $(u_{s-1},\dots,u_{s-1})$ and $(u_0,\dots,u_{s-2})Y_{s-1}$ are binary vectors of length $2^{s-1}$, and that the rows of $Y_{s-1}$ form a basis of a first order Reed-Muller code
after adding the all-one row. We define $\Phi_s:\Z_{2^s}^n \rightarrow \Z_2^{n2^{s-1}}$ as the component-wise extended map of $\phi_s$.
We can also define a Gray map $\Phi$ from $\Z_2^{\alpha_1}  \times \Z_4^{\alpha_2} \times \Z_8^{\alpha_3}$ to $\Z_2^n$, where $n=\alpha_1+2\alpha_2+4\alpha_3$, as follows:
$$
\Phi(u_1\mid u_2 \mid u_3)=(u_1, \Phi_2(u_2),\Phi_3(u_3)),
$$
for any $u_i \in \Z_{2^i}^{\alpha_i}$, where $1\leq i\leq 3$.

Let $\C \subseteq \Z_{2^s}^n$ be a $\Z_{2^s}$-additive code of length $n$. We say that its Gray map image
$C=\Phi_s(\C)$ is a $\Z_{2^s}$-linear code of length $n2^{s-1}$. 
Since $\C$ is a subgroup of
$\Z_{2^s}^n$, it is isomorphic to an abelian structure
$\Z_{2^s}^{t_1}\times\Z_{2^{s-1}}^{t_2}\times
\dots\times\Z_p^{t_s}$, and we say that $\C$, or equivalently
$C=\Phi_s(\C)$, is of type $(n;t_1,\dots,t_{s})$.
Note that $|\C|=p^{st_1}p^{(s-1)t_2}\cdots p^{t_s}$. Similarly, if $\C \subseteq \Z_2^{\alpha_1}  \times \Z_4^{\alpha_2} \times \Z_8^{\alpha_3}$
is a $\Z_2\Z_4\Z_8$-additive code,  we say that its Gray map image $C=\Phi(\C)$ is a $\Z_2\Z_4\Z_8$-linear code of length $\alpha_1+2\alpha_2+4\alpha_3$.
Since $\C$ can be seen as a subgroup of
$\Z_8^{\alpha_1+\alpha_2+\alpha_3}$, it is isomorphic to an abelian structure
$\Z_8^{t_1}\times\Z_4^{t_2} \times\Z_2^{t_3}$, and we say that $\C$, or equivalently
$C=\Phi(\C)$, is of type $(\alpha_1, \alpha_2,\alpha_3;t_1,t_2,t_3)$.
Note that $|\C|=8^{t_1}4^{t_2} 2^{t_3}$. Unlike linear codes over finite fields,
linear codes over rings do not have a basis, but there
exists a generator matrix for these codes having minimum number of rows, that is, $t_1+\dots+t_s$ rows.
If $\alpha_1=\alpha_3=0$ (respectively, $\alpha_1=\alpha_2=0$), then they coincide with $\Z_4$-additive codes (respectively, $\Z_8$-additive codes). If $\alpha_3=0$, then they are also known as $\Z_2\Z_4$-additive codes, and their Gray map images as $\Z_2\Z_4$-linear codes. In the last case, we also say that the code, or equivalently  the Gray map image of the code, is of type $(\alpha_1, \alpha_2;t_1,t_2)$. Note that there are no $\Z_2\Z_4\Z_8$-linear Hadamard codes neither with only $\alpha_1=0$  nor with only $\alpha_2=0$ \cite{Krotov:2007, ShiKrotov2019}.

Two structural properties of codes over $\Z_2$ are the rank and
dimension of the kernel. The rank of a code $C$ over $\Z_2$ is simply the
dimension of the linear span, $\langle C \rangle$,  of $C$.
The kernel of a code $C$ over $\Z_2$ is defined as
$\mathrm{K}(C)=\{\textbf{x}\in \Z_2^n : \textbf{x}+C=C \}$ \cite{BGH83,pKernel}. If the all-zero vector belongs to $C$,
then $\mathrm{K}(C)$ is a linear subcode of $C$.
Note also that if $C$ is linear, then $K(C)=C=\langle C \rangle$.
We denote the rank of $C$ as $\rank(C)$ and the dimension of the kernel as $\kernel(C)$.
These parameters can be used to distinguish between nonequivalent codes, since equivalent ones have the same rank and dimension of the kernel.

A binary code of length $n$, $2n$ codewords and minimum distance $n/2$ is called a Hadamard code. Hadamard codes can be constructed from Hadamard matrices \cite{Key,WMcwill}.
Note that linear Hadamard codes are in fact first order Reed-Muller codes, or
equivalently, the dual of extended Hamming codes \cite[Ch.13 \S 3]{WMcwill}. The $\Z_{2^s}$-additive codes such that after the Gray map $\Phi_s$ give
Hadamard codes are called $\Z_{2^s}$-additive Hadamard codes and the
corresponding images are called $\Z_{2^s}$-linear Hadamard codes. 
Similarly, the $\Z_2\Z_4\Z_8$-additive codes such that after the Gray map $\Phi$ give
Hadamard codes are called $\Z_2\Z_4\Z_8$-additive Hadamard codes and the
corresponding images are called $\Z_2\Z_4\Z_8$-linear Hadamard codes. 

It is known that $\Z_4$-linear Hadamard codes (that is, $\Z_2\Z_4$-linear Hadamard code with $\alpha_1 =0$) and $\Z_2\Z_4$-linear Hadamard codes with $\alpha_1\not = 0$ can be classified by using either the rank or the dimension of the kernel \cite{Kro:2001:Z4_Had_Perf,PRV06}. Moreover, in \cite{KV2015}, it is shown that each $\Z_4$-linear Hadamard code is  equivalent to a  $\Z_2\Z_4$-linear Hadamard code with $\alpha_1 \not =0$.
Later, in \cite{KernelZ2s,HadamardZps,EquivZ2s,ZpsEquivalance}, an iterative construction for $\Z_{p^s}$-linear Hadamard codes is described, the linearity is established, and a partial classification by using the dimension of the kernel is obtained,
giving the exact amount of nonequivalent such codes for some parameters. 
In \cite{fernandez2019mathbb}, a complete classification of $\Z_8$-linear Hadamard codes by using the rank and dimension of the kernel is provided, giving the exact amount of nonequivalent such codes.  For any $t\geq 2$, the full classification of $\Z_p\Z_{p^2}$-linear Hadamard codes of length $p^t$, with $\alpha_1\neq 0$, $\alpha_2\neq 0$, and $p\geq 3$ prime, is given in \cite{ZpZp2Construction,WCC2022,ZpZp2Classification}, by using just the dimension of the kernel. 


  This paper is focused on $\Z_2\Z_4\Z_8$-linear Hadamard codes with $\alpha_1\not =0$, $\alpha_2\not =0$, and $\alpha_3\not =0$, generalizing some results given for $\Z_2\Z_4$-linear Hadamard codes with $\alpha_1\not =0$  in \cite{PRV06,RSV08} related to the construction, linearity, kernel and classification of such codes. These codes are also compared with the $\Z_4$-linear, $\Z_8$-linear, and in general $\Z_{2^s}$-linear Hadamard codes with $s\geq 2$.
This paper is organized as follows.
In Section \ref{Sec:GrayMap}, we recall the definition of the Gray map considered in this paper and some of its properties.
In Section \ref{Sec:construction}, we describe the construction of  $\Z_2\Z_4\Z_8$-linear Hadamard codes of type $(\alpha_1,\alpha_2, \alpha_3;t_1,t_2, t_3)$ with $\alpha_1\not =0$, $\alpha_2\not =0$, and $\alpha_3\not =0$.
We see that they are not included neither in the family of  $\Z_4$-linear Hadamard codes, nor in the family of  $\Z_8$-linear Hadamard codes, nor in the family of  $\Z_2\Z_4$-linear Hadamard codes with $\alpha_1 \not =0$. 
Indeed, we see that all the nonlinear $\Z_2\Z_4\Z_8$-linear Hadamard codes of lenght $2^t$ with $\alpha_1\not =0$, $\alpha_2\not =0$, and $\alpha_3\not =0$ are not equivalent to any $\Z_2\Z_4\Z_8$-linear Hadamard code of any other type, any $\Z_2\Z_4$-linear Hadamard code, and  any $\Z_{2^s}$-linear Hadamard code, with $s\geq 2$, of the same length $2^t$.

\section{Preliminary results on the Gray map}\label{Sec:GrayMap}


In this section, we focus on the generalized Gray maps considered in this paper for elements of $\Z_4$ and $\Z_8$, and in general of $\Z_{2^s}$, $s\geq 2$. We include some of its properties used in the paper.


We consider the Carlet's Gray map from $\Z_{2^s}$ to $\Z_2^{2^{s-1}}$ \cite{Carlet} given in (\ref{eq:GrayMapCarlet}).
For $s=2$ and $s=3$, the Gray maps $\phi_2$ and $\phi_3$ considered in the paper for the elements of $\Z_4$ and $\Z_8$, respectively, are the following:
\begin{equation*}
\left.\begin{array}{cccccccccl}
\phi_2: &\Z_4 \longrightarrow \Z_2^2 &&& \phi_3: & \Z_8 \longrightarrow \Z_2^4\\
&0\mapsto(0,0) &&& & 0\mapsto (0,0,0,0) \\
&1\mapsto(0,1)&&& &1\mapsto (0,1,0,1)\\
&2\mapsto(1,1)&&& &2 \mapsto (0,0,1,1)\\
&3\mapsto(1,0)&&& &3 \mapsto (0,1,1,0)\\
 &            &&& &4 \mapsto (1,1,1,1)\\
  &           &&& &5 \mapsto (1,0,1,0)\\
   &          &&& &6 \mapsto (1,1,0,0)\\
    &         &&& &7 \mapsto (1,0,0,1).\\
\end{array}\right.
\end{equation*}

From \cite{HadamardZps}, we have the following results:


\begin{corollary}\cite{HadamardZps}\label{coro4}
 Let $\lambda, \mu \in \Z_2$. Then,  $\phi_s(\lambda \mu 2^{s-1})= \lambda \phi_s(\mu 2^{s-1})=\lambda \mu \phi_s(2^{s-1})$.
\end{corollary}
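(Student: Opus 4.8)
The plan is to reduce everything to evaluating $\phi_s$ at the two points $0$ and $2^{s-1}$, and then to dispatch the scalars $\lambda,\mu\in\Z_2$ by a short case analysis. First I would compute $\phi_s(2^{s-1})$ directly from the definition (\ref{eq:GrayMapCarlet}). The binary expansion of $2^{s-1}$ has $u_{s-1}=1$ and $u_0=\cdots=u_{s-2}=0$, so the second summand $(u_0,\dots,u_{s-2})Y_{s-1}$ vanishes and $\phi_s(2^{s-1})=(u_{s-1},\dots,u_{s-1})=\one$, the all-one vector of length $2^{s-1}$. Likewise $\phi_s(0)=\zero$, since all digits of $0$ are zero.

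The key auxiliary identity is $\phi_s(\nu 2^{s-1})=\nu\,\phi_s(2^{s-1})$ for every $\nu\in\Z_2$. Indeed, for $\nu=0$ both sides equal $\zero$ by the computation above, and for $\nu=1$ both sides equal $\phi_s(2^{s-1})=\one$; here the scalar product $\nu v$ on binary vectors is the usual one, sending $v$ to $\zero$ when $\nu=0$ and to $v$ when $\nu=1$.

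With this identity in hand the corollary follows immediately. Since $\lambda\mu\in\Z_2$, applying the identity with $\nu=\lambda\mu$ gives the outer equality $\phi_s(\lambda\mu 2^{s-1})=\lambda\mu\,\phi_s(2^{s-1})$. For the middle term, applying the identity with $\nu=\mu$ yields $\phi_s(\mu 2^{s-1})=\mu\,\phi_s(2^{s-1})$, and multiplying by $\lambda$ gives $\lambda\,\phi_s(\mu 2^{s-1})=\lambda\mu\,\phi_s(2^{s-1})$, which matches the right-hand side. Thus all three expressions coincide.

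I do not expect any real obstacle here: the only substantive step is reading off $\phi_s(2^{s-1})=\one$ from the structure of $Y_{s-1}$, and the rest is a routine verification over the four choices of $(\lambda,\mu)\in\Z_2^2$. The one mild point to keep straight is that the products $\lambda\mu$ and $\nu$ are taken in $\Z_2$ while the outer scalar multiplication acts on binary vectors of length $2^{s-1}$; but since both operations amount to "keep or zero out", the two viewpoints agree and cause no ambiguity.
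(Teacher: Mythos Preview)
Your proof is correct. The paper does not give its own proof of this corollary; it simply quotes it from \cite{HadamardZps}, so there is nothing to compare against, and your direct computation of $\phi_s(0)=\zero$ and $\phi_s(2^{s-1})=\one$ followed by the trivial case check on $(\lambda,\mu)\in\Z_2^2$ is exactly the natural argument.
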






\begin{corollary}\cite{HadamardZps} \label{lemma2}
Let $u, v$ $\in \Z_{2^s}$. Then, $\phi_s(2^{s-1}u+v)= \phi_s(2^{s-1}u)+\phi_s(v)$.
\end{corollary}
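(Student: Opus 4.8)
The plan is to reduce the identity to a case analysis on the parity of $u$ together with a direct computation of binary expansions via the defining formula (\ref{eq:GrayMapCarlet}). First I would observe that $2^{s-1}u \bmod 2^s$ depends only on $u_0$, the least significant bit of $u$: indeed $2^{s-1}u \equiv u_0\, 2^{s-1} \pmod{2^s}$, since any even multiple of $2^{s-1}$ vanishes modulo $2^s$. Consequently, taking $\lambda=u_0$ and $\mu=1$ in Corollary \ref{coro4}, and using that $\phi_s(2^{s-1})=(1,\ldots,1)$ directly from (\ref{eq:GrayMapCarlet}) (the only nonzero binary digit of $2^{s-1}$ is the top one), we get $\phi_s(2^{s-1}u)=\phi_s(u_0\, 2^{s-1})=u_0\,\phi_s(2^{s-1})=(u_0,\ldots,u_0)$, the constant vector of length $2^{s-1}$ with entry $u_0$.

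The central step is to describe the binary expansion of $w=2^{s-1}u+v \bmod 2^s$ in terms of that of $v$. Since $2^{s-1}u$ contributes only the term $u_0\, 2^{s-1}$, adding it to $v$ either leaves $v$ unchanged (when $u_0=0$) or flips the coefficient of $2^{s-1}$ (when $u_0=1$); in neither case does a carry propagate into the lower-order bits, because $2^{s-1}$ is the highest power below $2^s$. Thus $w_i=v_i$ for $0\le i\le s-2$ and $w_{s-1}=v_{s-1}+u_0 \pmod 2$. This no-carry observation is the only slightly delicate point of the argument, and it amounts to the fact that adding $u_0\,2^{s-1}$ acts on the top bit alone modulo $2^s$.

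With this in hand I would substitute into (\ref{eq:GrayMapCarlet}). The low-order contribution $(w_0,\ldots,w_{s-2})Y_{s-1}=(v_0,\ldots,v_{s-2})Y_{s-1}$ coincides with the corresponding term of $\phi_s(v)$, while the high-order contribution splits additively over $\Z_2$, namely $(w_{s-1},\ldots,w_{s-1})=(v_{s-1},\ldots,v_{s-1})+(u_0,\ldots,u_0)$. Combining these two facts yields $\phi_s(w)=(u_0,\ldots,u_0)+\phi_s(v)=\phi_s(2^{s-1}u)+\phi_s(v)$, which is exactly the claim. I expect no genuine obstacle beyond the careful bookkeeping of the no-carry step; once that is recorded, the remainder is routine linear algebra over $\Z_2$, and the case $u_0=0$ is immediate since both sides then reduce to $\phi_s(v)$.
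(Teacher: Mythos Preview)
Your argument is correct: the reduction to the least significant bit $u_0$, the no-carry observation for the top bit modulo $2^s$, and the substitution into (\ref{eq:GrayMapCarlet}) together give exactly the desired identity. Note, however, that the paper does not supply its own proof of this corollary; it is quoted from \cite{HadamardZps} without argument, so there is nothing in the present paper to compare your approach against. Your direct computation via the binary expansion is the natural route and is essentially how such a statement would be proved from first principles.
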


\begin{proposition}\cite{HadamardZps}\label{disweight}
Let $u,v\in\Z_{2^s}$. Then, $$d_H(\phi_s(u),\phi_s(v))=\wt_H(\phi_s(u-v)).$$
\end{proposition}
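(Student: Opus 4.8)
The plan is to reduce everything to a single weight computation, exploiting that the Hamming distance between two binary vectors equals the Hamming weight of their sum. Since $\phi_s(u),\phi_s(v)\in\Z_2^{2^{s-1}}$, we have $d_H(\phi_s(u),\phi_s(v))=\wt_H(\phi_s(u)+\phi_s(v))$, so it suffices to prove $\wt_H(\phi_s(u)+\phi_s(v))=\wt_H(\phi_s(u-v))$. The obstacle is that $\phi_s$ is not additive: because of carries in the binary expansion of $u-v$, in general $\phi_s(u)+\phi_s(v)\neq\phi_s(u-v)$. The key idea is that both sides have a Hamming weight that depends only on two coarse pieces of data, and that these two pieces agree.

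First I would isolate the linear part of the Gray map. Writing $\bar u=(u_0,\dots,u_{s-2})$ for the low-order digits and $L(a)=aY_{s-1}$, formula (\ref{eq:GrayMapCarlet}) reads $\phi_s(u)=u_{s-1}\one+L(\bar u)$, where $L\colon\Z_2^{s-1}\to\Z_2^{2^{s-1}}$ is $\Z_2$-linear. Hence $\phi_s(u)+\phi_s(v)=(u_{s-1}+v_{s-1})\one+L(\bar u+\bar v)$, with all additions taken modulo $2$. The next step is to compute $\wt_H(L(a))$. Since the columns of $Y_{s-1}$ run through every vector of $\Z_2^{s-1}$ exactly once, the $j$-th coordinate of $L(a)$ is the inner product of $a$ with the $j$-th column; for $a\neq\zero$ this linear functional is balanced, so $\wt_H(L(a))=2^{s-2}$, while $\wt_H(L(\zero))=0$. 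Complementing by $\one$ then gives $\wt_H(\one+L(a))=2^{s-2}$ for $a\neq\zero$ and $\wt_H(\one)=2^{s-1}$.

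Combining these counts yields a three-case formula for $d_H(\phi_s(u),\phi_s(v))=\wt_H((u_{s-1}+v_{s-1})\one+L(\bar u+\bar v))$: it equals $0$ when $\bar u=\bar v$ and $u_{s-1}=v_{s-1}$; it equals $2^{s-1}$ when $\bar u=\bar v$ and $u_{s-1}\neq v_{s-1}$; and it equals $2^{s-2}$ when $\bar u\neq\bar v$. The same weight count applied directly to $\phi_s(w)=w_{s-1}\one+L(\bar w)$ for $w=u-v$ expresses $\wt_H(\phi_s(w))$ in terms of its own digits. To finish, I would match the cases through the observation that $\bar u$ and $\bar v$ encode $u$ and $v$ modulo $2^{s-1}$, so $\bar u=\bar v$ if and only if $u-v\equiv0\pmod{2^{s-1}}$, that is, $w\in\{0,2^{s-1}\}$; within this situation $u_{s-1}=v_{s-1}$ corresponds to $w=0$ and $u_{s-1}\neq v_{s-1}$ to $w=2^{s-1}$, while $\bar u\neq\bar v$ corresponds to $w\notin\{0,2^{s-1}\}$. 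This identifies $d_H(\phi_s(u),\phi_s(v))$ with $\wt_H(\phi_s(w))$ in every case. The carry problem never intervenes because only the residue of $w$ modulo $2^{s-1}$ together with its top bit governs the weight, and both of these are insensitive to the carries; this insensitivity, which is exactly the content of Corollary~\ref{lemma2}, is the crux of the argument.
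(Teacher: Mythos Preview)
Your argument is correct. The paper itself does not supply a proof of this proposition: it is quoted as a known result from the reference \cite{HadamardZps}, so there is no in-paper proof to compare against.

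Your route---splitting $\phi_s(u)$ into the affine piece $u_{s-1}\one$ and the $\Z_2$-linear piece $L(\bar u)=\bar u\,Y_{s-1}$, then observing that $\wt_H(\epsilon\one+L(a))$ depends only on whether $\epsilon=0$ and whether $a=\zero$---is clean and self-contained. The case matching is right: $\bar u=\bar v$ in $\Z_2^{s-1}$ is equivalent to $u\equiv v\pmod{2^{s-1}}$, i.e.\ $w\in\{0,2^{s-1}\}$, and within that case the top bits distinguish $w=0$ from $w=2^{s-1}$; otherwise $\bar w\neq\zero$ and both sides give weight $2^{s-2}$ regardless of the top bit. One small remark: your closing appeal to Corollary~\ref{lemma2} is a little loose---that corollary says $\phi_s(2^{s-1}u+v)=\phi_s(2^{s-1}u)+\phi_s(v)$, which is related in spirit (order-$2$ elements behave additively under $\phi_s$) but is not literally the ``weight depends only on $\bar w$ and $w_{s-1}$'' statement you invoke. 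It does not affect the validity of your argument, which stands on the explicit weight count you already gave.
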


By Proposition \ref{disweight}, the $\Z_2\Z_4\Z_8$-linear codes obtained from the Gray map $\Phi$ are distance invariant, that is, the Hamming weight distribution is invariant under translation by a codeword. Therefore, their minimum distance coincides with the minimum weight.

\section{Construction of $\Z_2\Z_4\Z_8$-additive Hadamard codes}
\label{Sec:construction}

The description of generator matrices having minimum number of rows for $\Z_4$-additive, $\Z_{2^s}$-additive, and in general for $\Z_{p^s}$-additive Hadamard codes, with $s\geq 2$ and $p$ prime,  are given in \cite{Kro:2001:Z4_Had_Perf}, \cite{KernelZ2s}, and \cite{HadamardZps}, respectively. 
Similarly, generator matrices having minimum number of rows for
$\Z_p\Z_{p^2}$-additive Hadamard codes with $\alpha_1\not =0, \alpha_2\not =0$ and $p$ prime, as long as an iterative construction of these matrices, are given in \cite{PRV06,RSV08} when $p=2$ and in \cite{ZpZp2Construction, WCC2022, ITW2022} when $p\geq 3$. 
In this section, we generalize these results for $\Z_2\Z_4\Z_8$-additive Hadamard codes with $\alpha_1\not =0$, $\alpha_2\not =0$, and $\alpha_3\not =0$. Specifically, we define an iterative construction for the generator matrices of these codes and establish that they generate $\Z_2\Z_4\Z_8$-additive Hadamard codes.

 Let $\zero, \one,\two,\ldots, \mathbf{7}$ be the vectors having the elements $0, 1, 2, \ldots, 7$  repeated in each coordinate, respectively. If $A$ is a generator matrix of a $\Z_2\Z_4\Z_8$-additive code, that is, a subgroup of $\Z_2^{\alpha_1} \times \Z_4^{\alpha_2} \times \Z_8^{\alpha_3}$ for some integers $\alpha_1,\alpha_2,\alpha_3\geq 0$, then we denote by $A_1$ the submatrix of $A$ with the first $\alpha_1$ columns over $\Z_2$, $A_2$ the submatrix with the next $\alpha_2$ columns over $\Z_4$, and $A_3$ the submatrix with the last $\alpha_3$ columns over $\Z_8$. We have that $A=(A_1\mid A_2\mid A_3)$, where the number of columns of $A_i$ is $\alpha_i$ for $i\in \{1,2,3\}$. 

Let $t_1\geq 1$, $t_2\geq 0$, and $t_3\geq 1$ be integers. Now, we construct recursively matrices $A^{t_1,t_2,t_3}$ having $t_1$ rows of order $8$, $t_2$ rows of order $4$, and $t_3$ rows of order $2$ as follows. First, we consider the following matrix:
\begin{equation}\label{eq:recGenMatrix0}
A^{1,0,1}=
\left(\begin{array}{cc|c|c}
1 & 1  & 2  &4 \\
0  & 1 &1  &1  \\
\end{array}\right). 
\end{equation}
Then, we apply the following constructions. 
If we have a matrix  $A^{\ell-1,0,1}=(A_1 \mid A_2 \mid A_3)$, with $\ell \geq 2$, we may construct the matrix
\begin{equation}\label{eq:recGenMatrix1}
\footnotesize
A^{\ell,0,1}=
\left(\begin{array}{cc|ccccc|ccccc}
A_1 & A_1 &M_1 &A_2 &A_2 &A_2 &A_2 &M_2 &A_3 &A_3 &\cdots &A_3 \\
\mathbf{0}  & \mathbf{1} & \mathbf{1}  &\mathbf{0} &\mathbf{1}  &\mathbf{2} &\mathbf{3} &\mathbf{1} &\mathbf{0} &\mathbf{1} &\cdots &\mathbf{7} \\
\end{array}\right),
\end{equation}
where $M_1=\{\mathbf{z}^T:\mathbf{z}\in\lbrace2\rbrace\times \lbrace0,2\rbrace^{\ell-1}\}$ and $M_2=\{\mathbf{z}^T: \mathbf{z}\in\lbrace4\rbrace\times \lbrace0,2,4,6\rbrace^{\ell-1}\}$. We perform construction (\ref{eq:recGenMatrix1}) until $\ell=t_1$. If we have a matrix $A^{t_1,\ell-1,1}=(A_1\mid A_2 \mid A_3)$, with $t_1 \geq 1$ and $\ell\geq 1$, we may construct the matrix 
\begin{equation}\label{eq:recGenMatrix2}
\footnotesize
A^{t_1,\ell,1}=
\left(\begin{array}{cc|ccccc|cccc}
A_1 & A_1 &M_1 &A_2 &A_2 &A_2 &A_2 &A_3 &A_3 &A_3 &A_3 \\
\mathbf{0}  & \mathbf{1} & \mathbf{1}  &\mathbf{0} &\mathbf{1}  &\mathbf{2} &\mathbf{3}  &\mathbf{0} &\mathbf{2} &\mathbf{4} &\mathbf{6} \\
\end{array}\right),
\end{equation}
where $M_1=\{\mathbf{z}^T:\mathbf{z}\in\lbrace2\rbrace\times \lbrace0,2\rbrace^{t_1+\ell-1}\}$. We repeat construction (\ref{eq:recGenMatrix2}) until $\ell=t_2$. Finally, if we have a matrix $A^{t_1,t_2,\ell-1}=(A_1\mid A_2 \mid A_3)$, with $t_1\geq 1$, $t_2\geq 0$, and $\ell\geq 2$, we may construct the matrix
\begin{equation}\label{eq:recGenMatrix3}
A^{t_1,t_2,\ell}=
\left(\begin{array}{cc|cc|cc}
A_1 & A_1 & A_2 & A_2 &A_3 &A_3 \\
\mathbf{0}  & \mathbf{1} & \mathbf{0} &\mathbf{2}  &\mathbf{0}  &\mathbf{4}  \\
\end{array}\right).
\end{equation}
We repeat construction (\ref{eq:recGenMatrix3}) until $\ell=t_3$. Thus, in this way, we obtain $A^{t_1,t_2,t_3}$.

Summarizing, in order to achieve $A^{t_1,t_2,t_3}$ from $A^{1,0,1}$, first we add $t_1-1$ rows of order $8$  by applying construction (\ref{eq:recGenMatrix1}) $t_1-1$ times, starting from $A^{1,0,1}$ up to obtain $A^{t_1,0,1}$; then we add $t_2$ rows of order $4$ by applying construction (\ref{eq:recGenMatrix2}) $t_2$ times, up to generate $A^{t_1,t_2,1}$; and,  finally, we add $t_3-1$ rows of order $2$ by applying construction (\ref{eq:recGenMatrix3}) $t_3-1$ times to achieve $A^{t_1,t_2,t_3}$. Note that in the first row there is always the row $(\one \mid \two \mid \four)$.

\begin{example} \label{ex:MatricesA}
By using the constructions described in (\ref{eq:recGenMatrix1}), (\ref{eq:recGenMatrix2}), and (\ref{eq:recGenMatrix3}), we obtain the following matrices $A^{2,0,1}$, $A^{1,1,1}$ and $A^{1,1,2}$, respectively, starting from the matrix $A^{1,0,1}$ given in (\ref{eq:recGenMatrix0}):
\begin{equation}\label{eq:A201}
A^{2,0,1}=
\left(\begin{array}{cc|cc|cc}
11 &11 &22 &2 2 2 2 &4 4 4 4 &4 4 4 4 4 4 4 4 \\
0  1 &0  1 &0 2 &1 1 1 1 &0 2 4 6 &1 1 1 1 1 1 1 1 \\
0 0 &1 1 &1 1 &0 1 2 3 &1 1 1 1 &0 1 2 3 4 5 6 7\\ 
\end{array}\right),
\end{equation}
\begin{equation}\label{eq:A111}
A^{1,1,1}=
\left(\begin{array}{cc|cc|c}
 11 &11 &22&2222 &4444\\
 01&01 &02&1111 &1111\\
 00&11 &11&0123 &0246\\ 
\end{array}\right), 
\end{equation}
$$
A^{1,1,2}=
\left(\begin{array}{cc|cc|cc}
 1111 & 1111 &222222 &222222 &4444 &4444\\
 0101 & 0101 &021111 &021111 &1111 &1111\\
 0011 & 0011 &110123 &110123 &0246 &0246\\ 
 0000 &1111  &000000 &222222 &0000 &4444\\
\end{array}\right).
$$
\end{example}

In order to obtain $A^{2,1,1}$, we start with $A^{1,0,1}$,  we apply construction (\ref{eq:recGenMatrix1}) to obtain $A^{2,0,1}=(A_1\mid A_2\mid A_3)$ given in (\ref{eq:A201}), and then we apply (\ref{eq:recGenMatrix2}) to obtain 
 \begin{equation*}\label{eq:A_211}
A^{2,1,1}=\left(\begin{array}{cc|@{}cccccc|cccc}
      A_1  &A_1 &&\begin{matrix}
            22 2 2\\
            0 0 2 2\\
            0 2 0 2\\
      \end{matrix} &A_2  &A_2  &A_2  &A_2 &A_3 &A_3 &A_3 &A_3\\
      \mathbf{0} &\mathbf{1} &&\mathbf{1} &\mathbf{0} &\mathbf{1} &\mathbf{2} &\mathbf{3} &\mathbf{0} &\mathbf{2} &\mathbf{4} &\mathbf{6} \\
   \end{array}\right).
\end{equation*}

The $\Z_2\Z_4\Z_8$-additive code generated by $A^{t_1,t_2,t_3}$ is denoted by ${\cH}^{t_1,t_2,t_3}$, and the corresponding $\Z_2\Z_4\Z_8$-linear code $\Phi( {\cH}^{t_1,t_2,t_3})$ by $H^{t_1,t_2, t_3}$.

\begin{lemma}\label{relation:t1a1}
Let $t_1\geq 1$ and $t_2\geq 0$ be integers. Let ${\cH}^{t_1,t_2,1}$ be the $\Z_2\Z_4 \Z_8$-additive code of type $(\alpha_1, \alpha_2, \alpha_3; t_1,t_2,1)$ generated by the matrix $A^{t_1,t_2,1}$. Then, $2^{t_1+t_2}=\alpha_1$, $4^{t_1+t_2}=\alpha_1+2\alpha_2$ and $8^{t_1}4^{t_2}=\alpha_1+2\alpha_2+ 4\alpha_3$.
\end{lemma}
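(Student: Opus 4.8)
The plan is to argue by induction following the recursive construction of $A^{t_1,t_2,1}$, establishing the three identities in the order stated, since each one feeds into the next. Because $t_3=1$, the matrix $A^{t_1,t_2,1}$ is obtained from the base matrix $A^{1,0,1}$ in (\ref{eq:recGenMatrix0}) by applying construction (\ref{eq:recGenMatrix1}) exactly $t_1-1$ times (raising the first index from $1$ to $t_1$) and then construction (\ref{eq:recGenMatrix2}) exactly $t_2$ times (raising the second index from $0$ to $t_2$); construction (\ref{eq:recGenMatrix3}) is never used here. The base case is immediate: $A^{1,0,1}$ has type $(2,1,1;1,0,1)$, and indeed $2^{1}=2=\alpha_1$, $4^{1}=4=\alpha_1+2\alpha_2$, and $8^{1}4^{0}=8=\alpha_1+2\alpha_2+4\alpha_3$. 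I would avoid invoking the (not yet established) Hadamard property and argue purely from the block structure.

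The key step is to read off from the block structure how the three column counts $(\alpha_1,\alpha_2,\alpha_3)$ transform under each construction. Writing $A=(A_1\mid A_2\mid A_3)$ for the matrix being extended, I would count: under (\ref{eq:recGenMatrix1}) the $\Z_2$-block becomes $(A_1\mid A_1)$, the $\Z_4$-block becomes $(M_1\mid A_2\mid A_2\mid A_2\mid A_2)$, and the $\Z_8$-block becomes $(M_2\mid A_3\mid\cdots\mid A_3)$ with eight copies of $A_3$; since $M_1$ has $2^{\ell-1}$ columns and $M_2$ has $4^{\ell-1}$ columns (being indexed by $\{2\}\times\{0,2\}^{\ell-1}$ and $\{4\}\times\{0,2,4,6\}^{\ell-1}$), this gives $\alpha_1\mapsto 2\alpha_1$, $\alpha_2\mapsto 2^{\ell-1}+4\alpha_2$, $\alpha_3\mapsto 4^{\ell-1}+8\alpha_3$ when passing from $A^{\ell-1,0,1}$ to $A^{\ell,0,1}$. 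Likewise, under (\ref{eq:recGenMatrix2}) one reads $\alpha_1\mapsto 2\alpha_1$, $\alpha_2\mapsto 2^{t_1+\ell-1}+4\alpha_2$ (as $M_1$ now has $2^{t_1+\ell-1}$ columns), and $\alpha_3\mapsto 4\alpha_3$. The first identity $\alpha_1=2^{t_1+t_2}$ then drops out at once, because the $\Z_2$-block is duplicated at every step, so $\alpha_1$ simply doubles each time; starting from $\alpha_1=2$ and applying $(t_1-1)+t_2$ doublings yields $\alpha_1=2^{t_1+t_2}$.

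For the remaining two identities I would run the induction using the transformation rules above together with the already-established lower-order identities. To see that $\alpha_1+2\alpha_2=4^{t_1+t_2}$ advances correctly under (\ref{eq:recGenMatrix1}), I substitute the inductive values $\alpha_1=2^{\ell-1}$ and $\alpha_1+2\alpha_2=4^{\ell-1}$ into the new quantity $2\alpha_1+2(2^{\ell-1}+4\alpha_2)$ and simplify, using $8\alpha_2=4(4^{\ell-1}-2^{\ell-1})$, to obtain exactly $4^{\ell}$; the computation under (\ref{eq:recGenMatrix2}) is identical with $\ell$ replaced by $t_1+\ell$. With $\alpha_1+2\alpha_2$ now in hand, the same scheme settles $\alpha_1+2\alpha_2+4\alpha_3=8^{t_1}4^{t_2}$: I expand the new $4\alpha_3$ through the transformation rule and eliminate the old $\alpha_3$ by means of the inductive value of $\alpha_1+2\alpha_2+4\alpha_3$, after which the terms telescope to the claimed power. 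The computations are purely arithmetic once the transformation rules are fixed; the only point requiring genuine care — and the one I regard as the real obstacle — is the correct enumeration of the columns of $M_1$ and $M_2$, since these are the genuinely new columns (not mere copies of $A_2$ or $A_3$), and an off-by-one in their exponents would break the telescoping. Note finally that the recurrences for $\alpha_1+2\alpha_2$ and for $\alpha_1+2\alpha_2+4\alpha_3$ are not self-contained: they borrow the value of $\alpha_1$, respectively of $\alpha_1+2\alpha_2$, from the previous level, which is exactly why the three identities must be proved in the stated order.
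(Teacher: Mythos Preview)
Your proposal is correct and follows essentially the same approach as the paper: a two-stage induction (first on $t_1$ via construction~(\ref{eq:recGenMatrix1}), then on $t_2$ via construction~(\ref{eq:recGenMatrix2})) with the same transformation rules $\alpha_1\mapsto 2\alpha_1$, $\alpha_2\mapsto 2^{\ell-1}+4\alpha_2$, $\alpha_3\mapsto 4^{\ell-1}+8\alpha_3$ (respectively $\alpha_3\mapsto 4\alpha_3$) read off from the block structure, followed by direct algebraic verification of the three identities. The only cosmetic difference is that you isolate the first identity as an immediate consequence of repeated doubling, whereas the paper carries all three simultaneously through the induction; the arithmetic and the key column-counting of $M_1$ and $M_2$ are identical.
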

\begin{proof}
First, we prove this lemma for the code ${\cH}^{t_1,0,1}$ by induction on $t_1\geq 1$. Note that the lemma is true for the code ${\cH}^{1,0,1}$ of type $(2,1,1;1,0,1)$. Assume that the lemma is true for the code ${\cH}^{t_1,0,1}$ of type $(\alpha_1, \alpha_2, \alpha_3; t_1,0,1)$, that is, 
\begin{align}\label{t1_0_1}
    2^{t_1}=\alpha_1,
    4^{t_1}=\alpha_1+2\alpha_2~and~
    8^{t_1}=\alpha_1+2\alpha_2+4\alpha_3. 
\end{align}
By using construction (\ref{eq:recGenMatrix1}), the type of ${\cH}^{t_1+1,0,1}$ is $(\alpha'_1, \alpha'_2, \alpha'_3; t_1+1,0,1)$, where 
\begin{align}\label{t1p1_0_1}
     \alpha'_1=2\alpha_1,
    \alpha'_2=2^{t_1}+4\alpha_2~and~
    \alpha'_3=4^{t_1}+8\alpha_3.
\end{align}
Thus, from (\ref{t1_0_1}) and (\ref{t1p1_0_1}), $2^{t_1+1}=2\alpha_1=\alpha'_1$, $4^{t_1+1}=4\alpha_1+8\alpha_2=2\alpha_1+2\alpha_1+8\alpha_2=\alpha'_1+2^{t_1+1}+8\alpha_2=\alpha'_1+2\alpha'_2$ and $8^{t_1+1}=8\alpha_1+16\alpha_2+32\alpha_3=2\alpha_1+(2\alpha_1+8\alpha_2)+(4\alpha_1+8\alpha_2+32\alpha_3)=2\alpha_1+(2^{t_1+1}+8\alpha_2)+(4^{t_1+1}+32\alpha_3)=\alpha'_1+2\alpha'_2+4\alpha'_3$. Therefore, the lemma is true for the code ${\cH}^{t_1,0,1}$.

Next, we prove this lemma for the code ${\cH}^{t_1,t_2,1}$ by induction on $t_2\geq 0$. Assume that the lemma holds for the code ${\cH}^{t_1,t_2,1}$ of type $(\alpha_1, \alpha_2, \alpha_3; t_1,t_2,1)$, that is, 
\begin{align}\label{t1_t2_1}
      2^{t_1+t_2}=\alpha_1,
    4^{t_1+t_2}=\alpha_1+2\alpha_2,~and~
    8^{t_1}4^{t_2}=\alpha_1+2\alpha_2+4\alpha_3.
\end{align}
By using construction (\ref{eq:recGenMatrix2}), the type of ${\cH}^{t_1,t_2+1,1}$ is $(\alpha'_1, \alpha'_2, \alpha'_3; t_1,t_2+1,1)$, where 
\begin{align}\label{t1_t2p1_1}
    \alpha'_1=2\alpha_1,
    \alpha'_2=2^{t_1+t_2}+4\alpha_2 ~and ~
    \alpha'_3=4\alpha_3.
\end{align}
Thus, from (\ref{t1_t2_1}) and (\ref{t1_t2p1_1}), $2^{t_1+(t_2+1)}=2\alpha_1=\alpha'_1$, $4^{t_1+(t_2+1)}=4\alpha_1+8\alpha_2=2\alpha_1+2\alpha_1+8\alpha_2=\alpha'_1+2^{t_1+t_2+1}+8\alpha_2=\alpha'_1+2\alpha'_2$ and $8^{t_1}4^{t_2+1}=4\alpha_1+8\alpha_2+16\alpha_3=2\alpha_1+(2\alpha_1+8\alpha_2)+16\alpha_3=\alpha'_1+(2^{t_1+t_2+1}+8\alpha_2)+4\alpha'_3=\alpha'_1+2\alpha'_2+4\alpha'_3$. Therefore, the lemma is true for the code ${\cH}^{t_1,t_2+1,1}$. This completes the proof.
\end{proof}

\begin{proposition}\label{relation:t1a1a3}
Let $t_1\geq 1$, $t_2\geq 0$, and $t_3\geq 1$ be integers. Let ${\cH}^{t_1,t_2,t_3}$ be the $\Z_2\Z_4 \Z_8$-additive code of type $(\alpha_1, \alpha_2, \alpha_3; t_1,t_2,t_3)$ generated by the matrix $A^{t_1,t_2,t_3}$. Then,  
\begin{align} \label{eq:propt1t2t3}
\begin{split}
    &\alpha_1=2^{t_1+t_2+t_3-1},\\ 
    &\alpha_1+2\alpha_2=4^{t_1+t_2}2^{t_3-1},\\ 
    &\alpha_1+2\alpha_2+ 4\alpha_3= 8^{t_1}4^{t_2}2^{t_3-1}.
\end{split}
\end{align}
\end{proposition}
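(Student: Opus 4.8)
The plan is to prove the statement by induction on $t_3\geq 1$, using Lemma \ref{relation:t1a1} as the base case and construction (\ref{eq:recGenMatrix3}) to carry out the inductive step. First I would note that the base case $t_3=1$ is exactly Lemma \ref{relation:t1a1}: substituting $t_3=1$ into (\ref{eq:propt1t2t3}) reduces the three equations to $\alpha_1=2^{t_1+t_2}$, $\alpha_1+2\alpha_2=4^{t_1+t_2}$, and $\alpha_1+2\alpha_2+4\alpha_3=8^{t_1}4^{t_2}$, which is precisely what that lemma asserts. So the inductive machinery only needs to propagate the formulas from $t_3=\ell-1$ to $t_3=\ell$.

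For the inductive step, I would assume (\ref{eq:propt1t2t3}) holds for the code ${\cH}^{t_1,t_2,\ell-1}$ of type $(\alpha_1,\alpha_2,\alpha_3;t_1,t_2,\ell-1)$ with $\ell\geq 2$. The key observation is that construction (\ref{eq:recGenMatrix3}) simply repeats each block $A_1$, $A_2$, $A_3$ twice and appends one new bottom row; hence the resulting code ${\cH}^{t_1,t_2,\ell}$ has type $(\alpha'_1,\alpha'_2,\alpha'_3;t_1,t_2,\ell)$ with $\alpha'_1=2\alpha_1$, $\alpha'_2=2\alpha_2$, and $\alpha'_3=2\alpha_3$. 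I would then substitute these doubling relations into the induction hypothesis and verify the three equations for the new type. For instance, $\alpha'_1=2\alpha_1=2\cdot 2^{t_1+t_2+\ell-2}=2^{t_1+t_2+\ell-1}$, and likewise $\alpha'_1+2\alpha'_2=2(\alpha_1+2\alpha_2)$ and $\alpha'_1+2\alpha'_2+4\alpha'_3=2(\alpha_1+2\alpha_2+4\alpha_3)$, so each right-hand side of (\ref{eq:propt1t2t3}) simply picks up one extra factor of $2$, matching the increment of the exponent of $2$ as $\ell-1\mapsto \ell$.

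In contrast to Lemma \ref{relation:t1a1}, where constructions (\ref{eq:recGenMatrix1}) and (\ref{eq:recGenMatrix2}) enlarge the $\Z_4$- and $\Z_8$-blocks by the columns $M_1$ and $M_2$ whose widths depend on $t_1$ and $t_2$, this step is comparatively routine: since construction (\ref{eq:recGenMatrix3}) only doubles each block, the arithmetic collapses to multiplying every right-hand side of (\ref{eq:propt1t2t3}) by $2$, and I do not anticipate a genuine obstacle. The only point requiring care is to confirm, directly from the explicit form of (\ref{eq:recGenMatrix3}), that the appended bottom row $(\zero\ \one \mid \zero\ \two \mid \zero\ \four)$ has order exactly $2$ (multiplying it by $2$ yields the all-zero vector in each of the three parts). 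This guarantees that the new generator matrix contributes precisely one additional row of order $2$, so that ${\cH}^{t_1,t_2,\ell}$ has type $(\alpha'_1,\alpha'_2,\alpha'_3;t_1,t_2,\ell)$ with the correct distribution of row orders, closing the induction.
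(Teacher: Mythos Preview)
Your proposal is correct and follows essentially the same approach as the paper: induction on $t_3$ with Lemma~\ref{relation:t1a1} as the base case, and the observation from construction~(\ref{eq:recGenMatrix3}) that $\alpha'_i=2\alpha_i$ for $i=1,2,3$ to push the induction through. The extra remark about the appended row having order~$2$ is a harmless clarification; the paper simply asserts the resulting type without spelling this out.
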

\begin{proof}
We prove this result for the code ${\cH}^{t_1,t_2,t_3}$ by induction on $t_3\geq 1$. By Lemma \ref{relation:t1a1}, the proposition is true for $t_3=1$, that is, for the code ${\cH}^{t_1,t_2,1}$.
Assume that it holds for the code ${\cH}^{t_1,t_2,t_3}$ of type $(\alpha_1, \alpha_2, \alpha_3; t_1,t_2,t_3)$, that is, (\ref{eq:propt1t2t3}) holds.
By using construction (\ref{eq:recGenMatrix3}), the type of ${\cH}^{t_1,t_2,t_3+1}$ is $(\alpha'_1, \alpha'_2, \alpha'_3; t_1,t_2,t_3+1)$, where 
\begin{align}\label{t1_t2p1_t_3}
    \alpha'_1=2\alpha_1,
    \alpha'_2=2\alpha_2, ~\mbox{and} ~
    \alpha'_3=2\alpha_3.
\end{align}
Thus, from (\ref{eq:propt1t2t3}) and (\ref{t1_t2p1_t_3}), $2^{t_1+t_2+t_3}=2\alpha_1=\alpha'_1$, $4^{t_1+t_2}2^{t_3}=2\alpha_1+4\alpha_2=\alpha'_1+2\alpha'_2$ and $8^{t_1}4^{t_2}2^{t_3}=2\alpha_1+4\alpha_2+8\alpha_3=\alpha'_1+2\alpha'_2+4\alpha'_3$. Therefore, the proposition is true for the code ${\cH}^{t_1,t_2,t_3+1}$. This completes the proof.
\end{proof}

\begin{corollary} \label{coro:Length}
Let $t_1\geq 1$, $t_2\geq 0$, and $t_3\geq 1$ be integers. Let ${\cH}^{t_1,t_2,t_3}$ be the $\Z_2\Z_4 \Z_8$-additive code of type $(\alpha_1, \alpha_2, \alpha_3; t_1,t_2,t_3)$ generated by the matrix $A^{t_1,t_2,t_3}$. 
 Then,
\begin{align*}
\begin{split}
    &\alpha_1=2^{t_1+t_2+t_3-1},\\ 
    &\alpha_2=4^{t_1+t_2}2^{t_3-2}-2^{t_1+t_2+t_3-2},\\ &\alpha_3=8^{t_1}4^{t_2-1}2^{t_3-1}-4^{t_1+t_2-1}2^{t_3-1}.
\end{split}
\end{align*}
\end{corollary}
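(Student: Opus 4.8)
The plan is to obtain this corollary as an immediate algebraic consequence of Proposition~\ref{relation:t1a1a3}, by solving the three equations in (\ref{eq:propt1t2t3}) for $\alpha_1$, $\alpha_2$, and $\alpha_3$ one at a time. First, the expression for $\alpha_1$ requires no work: it is verbatim the first equation of (\ref{eq:propt1t2t3}), so $\alpha_1 = 2^{t_1+t_2+t_3-1}$ directly.

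To obtain $\alpha_2$, I would take the second equation $\alpha_1 + 2\alpha_2 = 4^{t_1+t_2}2^{t_3-1}$, substitute the known value of $\alpha_1$, and solve for $\alpha_2$; dividing through by $2$ yields $\alpha_2 = 4^{t_1+t_2}2^{t_3-2} - 2^{t_1+t_2+t_3-2}$, as claimed. Likewise, to obtain $\alpha_3$ I would use the third equation $\alpha_1 + 2\alpha_2 + 4\alpha_3 = 8^{t_1}4^{t_2}2^{t_3-1}$. Since the left-hand sum $\alpha_1 + 2\alpha_2$ has already been identified by the second equation as $4^{t_1+t_2}2^{t_3-1}$, subtracting this quantity and dividing by $4$ gives $\alpha_3 = 8^{t_1}4^{t_2}2^{t_3-3} - 4^{t_1+t_2}2^{t_3-3}$; rewriting the exponents as $8^{t_1}4^{t_2-1}2^{t_3-1}$ and $4^{t_1+t_2-1}2^{t_3-1}$ recovers the stated formula.

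There is no genuine obstacle here: the corollary is a routine back-substitution in the linear system already established in Proposition~\ref{relation:t1a1a3}. The only point demanding a little care is the bookkeeping of powers of $2$, namely checking that $4^{t_2}2^{t_3-3} = 4^{t_2-1}2^{t_3-1}$ (and the analogous identity for the $8^{t_1}$ term), so that my derived expressions match the ones written in the statement. One might also remark that the hypotheses $t_1 \geq 1$, $t_2 \geq 0$, $t_3 \geq 1$ ensure these quantities are well defined and nonnegative.
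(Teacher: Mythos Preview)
Your proposal is correct and matches the paper's approach: the paper states this result as a corollary of Proposition~\ref{relation:t1a1a3} without giving any separate proof, so the intended argument is exactly the back-substitution you describe. Your algebra checks out, including the exponent rewriting $4^{t_2}2^{t_3-3}=4^{t_2-1}2^{t_3-1}$ and $4^{t_1+t_2}2^{t_3-3}=4^{t_1+t_2-1}2^{t_3-1}$.
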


\begin{remark}
By Corollary \ref{coro:Length},  we have that the $\Z_2\Z_4 \Z_8$-additive codes $\mathcal{H}^{t_1,t_2,t_3}$ of type $(\alpha_1, \alpha_2, \alpha_3; t_1,t_2,t_3)$ generated by the matrix $A^{t_1,t_2,t_3}$, so constructed recursively from (\ref{eq:recGenMatrix1}), (\ref{eq:recGenMatrix2}), and (\ref{eq:recGenMatrix3}), satisfy that $\alpha_1\not =0$, $\alpha_2\not =0$, and $\alpha_3\not =0$. 
\end{remark}

\begin{remark}
We can see the construction of the generator matrices $A^{t_1,t_2,t_3}$ as a generalization of the recursive construction of the generator matrices of the $\Z_2\Z_4$-additive Hadamard codes of type $(\alpha_1,\alpha_2;t_1,t_2)$ with $\alpha_1\not =0$ and $\alpha_2 \not =0$, given in \cite{RSV08}. Note that if we do not consider the coordinates over $\Z_8$ in constructions (\ref{eq:recGenMatrix1}), (\ref{eq:recGenMatrix2}), and (\ref{eq:recGenMatrix3}), we have that (\ref{eq:recGenMatrix1}) and (\ref{eq:recGenMatrix2}) become 
\begin{equation}\label{eq:recZ2Z4GenMatrix1}
\footnotesize
A^{\ell,1}=
\left(\begin{array}{cc|ccccc}
A_1 & A_1 &M_1 &A_2 &A_2 &A_2 &A_2  \\
\mathbf{0}  & \mathbf{1} & \mathbf{1}  &\mathbf{0} &\mathbf{1}  &\mathbf{2} &\mathbf{3} \\
\end{array}\right),
\end{equation}
where $A^{\ell-1,1} =(A_1\mid A_2)$ and  $M_1=2A_1=\{\mathbf{z}^T:\mathbf{z}\in\lbrace2\rbrace\times \lbrace0,2\rbrace^{\ell-1}\}$ (up to a column permutation); and
construction (\ref{eq:recGenMatrix3}) become 
\begin{equation}\label{eq:recZ2Z4GenMatrix3}
A^{t_1,\ell}=
\left(\begin{array}{cc|cc}
A_1 & A_1 & A_2 & A_2 \\
\mathbf{0}  & \mathbf{1} & \mathbf{0} &\mathbf{2}   \\
\end{array}\right),
\end{equation}
where $A^{t_1, \ell-1} =(A_1\mid A_2)$.  Then, starting from the following matrix:
\begin{equation}\label{eq:recZ2Z4GenMatrix0}
A^{1,1}=
\left(\begin{array}{cc|c}
1 & 1  & 2  \\
0  & 1 &1    \\
\end{array}\right), 
\end{equation}
and applying (\ref{eq:recZ2Z4GenMatrix1}) and (\ref{eq:recZ2Z4GenMatrix3}) in the same way as above, we obtain the generator matrices $A^{t_1,t_2}$ of the known $\Z_2\Z_4$-additive Hadamard codes of type $(\alpha_1,\alpha_2;t_1,t_2)$ with $\alpha_1\not =0$ and $\alpha_2 \not =0$ \cite{PRV06,RSV08}. The $\Z_2\Z_4$-additive code generated by $A^{t_1,t_2}$ is denoted by ${\cH}^{t_1,t_2}$, and the corresponding $\Z_2\Z_4$-linear code $\Phi( {\cH}^{t_1,t_2})$ by $H^{t_1,t_2}$.
\end{remark}

When we include all the elements of  $\Z_{2^i}$, where $1\leq i\leq 3$,  as coordinates of a vector, we place them in increasing order. 
For a set $S \subseteq \Z_{2^i}$ and $\lambda \in \Z_{2^i}$, where $i\in \{1,2,3\}$, we define $\lambda S=\{\lambda j: j\in S\}$ and $S+\lambda=\{j+\lambda: j\in S\}$. As before, when including all the elements in those sets as coordinates of a vector, we place them in increasing order. 
For example, 
$2\Z_8=\{0,4,6,8\}$, $(\Z_4,\Z_4)=(0,1,2,3,0,1,2,3) \in \Z_4^8$ and $(\Z_2 \mid \Z_4\mid 2\Z_8, 4\Z_8)=(0,1 \mid 0,1,2,3\mid 0,2,4,6, 0,4) \in \Z_2^2 \times \Z_4^4 \times \Z_8^6$.

\begin{lemma}\label{Lemma:cases}
Let $1\leq i\leq 3$ and  $j \in \{0, 1,\dots, i-1\}$. 
\begin{enumerate} 
\item \label{2z22} If $\mu \in 2^j\Z_{2^i}$, then $2^j\Z_{2^i}+\mu = 2^j\Z_{2^i}$.
\item\label{item:vector} If $\mu \in 2^j\Z_{2^i}$, then $(2^j\Z_{2^i}, \stackrel{m}{\dots}, 2^j\Z_{2^i})+ \mu \one$, where $m \geq  1$, is a permutation of the vector $(2^j\Z_{2^i}, \stackrel{m}{\dots}, 2^j\Z_{2^i})$.
\item \label{2z2_}If $\mu \in 2\Z_{2^i}$, then $(\Z_{2^i}{\setminus}2\Z_{2^i})+\mu = \Z_{2^i}{\setminus}2\Z_{2^i}$.
\item\label{item:N2^-} If $\mu \in\Z_{2^i}$, then
$(\zero, \dots, \mathbf{2^i-1})+(\mu,\stackrel{\ell\cdot 2^i}{\dots}, \mu)$, where $\ell\geq 1$ and  $\mathbf{k}=(k,\stackrel{\ell}{\dots}, k)$ for $k \in \Z_{2^i}$, is a permutation of  $(\Z_{2^i},\stackrel{\ell}{\dots}, \Z_{2^i})$.
\end{enumerate}
\end{lemma}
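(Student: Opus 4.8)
The plan is to treat all four items as consequences of a single fact: for any fixed $\mu$, the translation $x\mapsto x+\mu$ is a bijection of $\Z_{2^i}$, and the items differ only in which subset this translation is required to preserve. First I would handle item 1 by observing that $2^j\Z_{2^i}$ is exactly the cyclic subgroup of $\Z_{2^i}$ generated by $2^j$, of order $2^{i-j}$. Since a subgroup is closed under addition and $\mu$ lies in it, the map $x\mapsto x+\mu$ sends $2^j\Z_{2^i}$ into itself; being injective on a finite set, it is then a bijection of $2^j\Z_{2^i}$, which gives the set equality $2^j\Z_{2^i}+\mu=2^j\Z_{2^i}$.

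Item 2 would follow by applying item 1 coordinatewise. The vector $(2^j\Z_{2^i},\stackrel{m}{\dots},2^j\Z_{2^i})$ lists the elements of $2^j\Z_{2^i}$ (in the increasing-order convention fixed before the lemma) repeated $m$ times, and adding $\mu\one$ adds $\mu$ to every entry. By item 1, each block maps onto the same set of values, so the multiset of entries is unchanged and the result is a permutation of the original vector. For item 3, I would note that $\Z_{2^i}\setminus 2\Z_{2^i}$ is precisely the set of odd residues; since $\mu$ is even, $x+\mu$ is odd whenever $x$ is odd, so translation maps the odd residues into themselves, and injectivity on a finite set upgrades this to $(\Z_{2^i}\setminus 2\Z_{2^i})+\mu=\Z_{2^i}\setminus 2\Z_{2^i}$. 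Alternatively one can apply item 1 with $j=0$ to get $\Z_{2^i}+\mu=\Z_{2^i}$, apply item 1 again to get $2\Z_{2^i}+\mu=2\Z_{2^i}$, and take the set difference.

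Finally, item 4 is the global version of the same idea: $x\mapsto x+\mu$ is a bijection of all of $\Z_{2^i}$, so adding $\mu\one$ to the vector $(\zero,\dots,\mathbf{2^i-1})$, which contains each residue of $\Z_{2^i}$ exactly $\ell$ times, produces a vector in which each residue still appears exactly $\ell$ times, i.e. a permutation of $(\Z_{2^i},\stackrel{\ell}{\dots},\Z_{2^i})$.

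The main obstacle here is bookkeeping rather than mathematics. The substantive content of each item is only the closure and injectivity of translation on the relevant subset of $\Z_{2^i}$; the care needed is in distinguishing an equality of \emph{sets} (items 1 and 3) from a permutation of a \emph{vector with repeated entries} (items 2 and 4), and in matching the increasing-order convention so that the word ``permutation'' is justified at the level of multisets. Since $1\le i\le 3$ the statement could in principle be checked by finite enumeration, but I would prefer the uniform subgroup/bijection argument above, as it makes transparent why the same three elementary observations suffice for every case.
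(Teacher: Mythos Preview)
Your proposal is correct and follows essentially the same approach as the paper: item~1 via closure of the subgroup/ideal $2^j\Z_{2^i}$ under translation, item~2 as an immediate consequence, item~3 via preservation of the odd residues under an even shift (the paper phrases this as a contradiction argument rather than a direct parity check, but the content is identical), and item~4 via the bijectivity of translation on all of $\Z_{2^i}$. Your explicit attention to the set-versus-vector distinction and to injectivity on finite sets is, if anything, slightly more careful than the paper's write-up.
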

\begin{proof}
Item 1 follows from the fact that $\Z_{2^i}$ is a ring and $2^j\Z_{2^i}$ is an ideal of $\Z_{2^i}$. Item 2 follows from Item 1. 


For Item 3, it $x\in (\Z_{2^i}{\setminus}2\Z_{2^i})+\mu$, then $x-\mu \in \Z_{2^i}{\setminus} 2\Z_{2^i}$. Assume that $x \notin \Z_{2^i}{\setminus} 2\Z_{2^i}$, so $x \in 2\Z_{2^i}$. Since $2\Z_{2^i}$ is an ideal of $\Z_{2^i}$, we have that $x-\mu\in 2\Z_{2^i}$, which is a contradiction. Thus, $x \in \Z_{2^i}{\setminus} 2\Z_{2^i}$ and hence $(\Z_{2^i}{\setminus} 2\Z_{2^i})+\mu \subseteq \Z_{2^i}{\setminus} 2\Z_{2^i}$. In the same way,  $(\Z_{2^i}{\setminus} 2\Z_{2^i})-\mu \subseteq \Z_{2^i}{\setminus} 2\Z_{2^i}$. Hence, $\Z_{2^i}{\setminus} 2\Z_{2^i} \subseteq (\Z_{2^i}{\setminus} 2\Z_{2^i})+\mu $ and therefore
$(\Z_{2^i}{\setminus} 2\Z_{2^i})+\mu = \Z_{2^i}{\setminus}2\Z_{2^i}$.

For Item 4, note that $(\zero, \dots, \mathbf{2^i-1})+(\mu,\stackrel{\ell\cdot 2^i}{\dots}, \mu)$ is a permutation of 
\begin{equation}\label{a}
   (\Z_{2^i},\stackrel{\ell}{\dots}, \Z_{2^i})+(\mu,\stackrel{\ell\cdot 2^i}{\dots}, \mu).
\end{equation}
Since $\Z_{2^i}+\mu =\Z_{2^i}$, (\ref{a}) is a permutation of $(\Z_{2^i},\stackrel{\ell}{\dots}, \Z_{2^i})$.
\end{proof}

\begin{lemma}\label{lemm:uu+lambda}
Let $1\leq i\leq 3$, $\lambda\in\Z_{2^i}{\setminus} 2\Z_{2^i}$, and $u\in\Z_{2^i}^n$. Then,
$$
(u,\stackrel{2^i}{\dots},u)+\lambda(\zero, \dots, \mathbf{2^i-1})
$$
is a permutation of $(\Z_{2^i},\stackrel{n}{\dots}, \Z_{2^i})$.
\end{lemma}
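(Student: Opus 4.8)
The plan is to analyze the sum coordinate by coordinate, exploiting the fact that $\lambda$ is a unit of $\Z_{2^i}$. Write $u=(u_1,\dots,u_n)$, so that the first summand $(u,\stackrel{2^i}{\dots},u)$ consists of $2^i$ consecutive blocks of length $n$, each equal to $u$, while the second summand $\lambda(\zero,\dots,\mathbf{2^i-1})$ consists of the $2^i$ blocks $(\lambda k,\stackrel{n}{\dots},\lambda k)$ for $k=0,1,\dots,2^i-1$. Adding them termwise, the $(k+1)$-th block of the resulting vector is $(u_1+\lambda k,\dots,u_n+\lambda k)$, for $k=0,\dots,2^i-1$.

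First I would fix a coordinate index $j\in\{1,\dots,n\}$ and collect the $j$-th entry of every block; these entries are exactly $u_j+\lambda k$ for $k=0,1,\dots,2^i-1$, that is, the length-$2^i$ vector $u_j\one+\lambda(0,1,\dots,2^i-1)$. Since $\lambda\in\Z_{2^i}\setminus 2\Z_{2^i}$, the element $\lambda$ is odd and hence a unit in $\Z_{2^i}$, so multiplication by $\lambda$ permutes $\Z_{2^i}$; therefore $\{\lambda k:k\in\Z_{2^i}\}=\Z_{2^i}$. Translating by the constant $u_j$ preserves this set, since adding a fixed element permutes the additive group $\Z_{2^i}$ (the case $j=0$ of the first item of Lemma \ref{Lemma:cases}). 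Thus these $2^i$ entries run through each element of $\Z_{2^i}$ exactly once.

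To finish, I would assemble this over all $n$ coordinates. For each $j$, the $j$-th entries of the $2^i$ blocks realize every element of $\Z_{2^i}$ exactly once, so letting $j$ range over $1,\dots,n$ each element of $\Z_{2^i}$ occurs exactly $n$ times in total. This is precisely the multiset of entries of $(\Z_{2^i},\stackrel{n}{\dots},\Z_{2^i})$, in which every residue appears $n$ times. Hence the sum and $(\Z_{2^i},\stackrel{n}{\dots},\Z_{2^i})$ have the same multiset of coordinates, which is exactly what it means for the sum to be a permutation of $(\Z_{2^i},\stackrel{n}{\dots},\Z_{2^i})$.

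I do not expect a serious obstacle: the entire content is the observation that $\lambda$ is invertible modulo $2^i$, which makes $k\mapsto u_j+\lambda k$ a bijection of $\Z_{2^i}$ for each fixed $j$. The only point requiring a little care is the bookkeeping, namely making precise that \emph{permutation} here means equality of the multisets of coordinates, and verifying that the two index ranges (the $2^i$ blocks and the $n$ positions within a block) combine to give exactly $n$ copies of each residue of $\Z_{2^i}$.
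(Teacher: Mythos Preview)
Your proof is correct and follows essentially the same idea as the paper's: both arguments use that $\lambda$ is a unit in $\Z_{2^i}$ so that $k\mapsto u_j+\lambda k$ is a bijection of $\Z_{2^i}$ for each fixed position $j$, and then combine over the $n$ positions. The only cosmetic difference is that the paper first reduces to $\lambda=1$ (since $\lambda(\zero,\dots,\mathbf{2^i-1})$ is a permutation of $(\zero,\dots,\mathbf{2^i-1})$) and then groups by coordinate, whereas you handle the unit $\lambda$ and the translation by $u_j$ in a single step.
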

\begin{proof}
Since  $\lambda\in\Z_{2^i}{\setminus} 2\Z_{2^i}$, $\lambda(\zero, \dots, \mathbf{2^i-1})$ is a permutation of $(\zero, \dots, \mathbf{2^i-1})$ and we may consider $\lambda=1$. 
Then, $(u,\dots,u)+(\zero, \dots, \mathbf{2^i-1})$ is a permutation of $(u_1+\Z_{2^i},\dots,u_n+\Z_{2^i})=(\Z_{2^i},\stackrel{n}{\dots}, \Z_{2^i})$, where  $u=(u_1,\dots,u_n)$.
\end{proof}

\begin{lemma}\label{lemm:ord8_u2}
Let  $u=(\mu, \stackrel{m}{\dots}, \mu, 2\Z_{4}, \stackrel{n}{\dots}, 2\Z_{4}, \Z_{4}{\setminus}2\Z_{4},\stackrel{r}{\dots},\Z_{4}{\setminus}2\Z_{4})\in \Z_4^{m+2n+2r}$, where  $m,n,r\geq 0$ and $\mu \in \Z_4{\setminus} 2\Z_4 =\{1,3\}$. Then, 
 $$
 (u,u,u,u)+(\zero, \two, \zero, \two)
 $$
 is a permutation of $(2\Z_{4}, \stackrel{4n}{\dots}, 2\Z_{4}, \Z_{4}{\setminus}2\Z_{4},\stackrel{4r+2m}{\dots},\Z_{4}{\setminus}2\Z_{4})$.
\end{lemma}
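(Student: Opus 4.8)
The plan is to first rewrite the expression in a transparent form and then argue block by block, matching the multisets of coordinates. Since $\two$ is the all-$2$ vector of the appropriate length, adding $(\zero,\two,\zero,\two)$ simply adds $2$ to every coordinate of the second and fourth copies of $u$. Hence $(u,u,u,u)+(\zero,\two,\zero,\two)=(u,u+\two,u,u+\two)$, where $u+\two$ denotes the coordinate-wise addition of $2$ to $u$. It therefore suffices to understand what each of the three kinds of blocks of $u$ becomes after adding $2$, and then to count how often each resulting block type appears over the four copies.

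Next I would examine the blocks of $u$ separately. For the $m$ constant coordinates equal to $\mu$, adding $2$ produces $\mu+2$; since $\mu\in\Z_4{\setminus}2\Z_4=\{1,3\}$, we also have $\mu+2\in\{1,3\}$, and in fact $\{\mu,\mu+2\}=\Z_4{\setminus}2\Z_4$. For each of the $n$ blocks equal to $2\Z_4$, Item \ref{2z22} of Lemma \ref{Lemma:cases} (with $i=2$, $j=1$, $\mu=2$) gives $2\Z_4+2=2\Z_4$, so the corresponding block in $u+\two$ is a permutation of $2\Z_4$. For each of the $r$ blocks equal to $\Z_4{\setminus}2\Z_4$, Item \ref{2z2_} of Lemma \ref{Lemma:cases} (with $i=2$, $\mu=2$) gives $(\Z_4{\setminus}2\Z_4)+2=\Z_4{\setminus}2\Z_4$, so the corresponding block in $u+\two$ is a permutation of $\Z_4{\setminus}2\Z_4$.

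Then I would collect the contributions over the four copies $(u,u+\two,u,u+\two)$. The $2\Z_4$-type blocks occur $n$ times in each copy, each being a permutation of $2\Z_4$, so the four copies together contribute a permutation of $4n$ copies of $2\Z_4$. The $\Z_4{\setminus}2\Z_4$-type blocks occur $r$ times in each copy, contributing a permutation of $4r$ copies of $\Z_4{\setminus}2\Z_4$. Finally, the constant blocks contribute $2m$ coordinates equal to $\mu$ (from the two copies of $u$) together with $2m$ coordinates equal to $\mu+2$ (from the two copies of $u+\two$); since $\{\mu,\mu+2\}=\{1,3\}$, these $4m$ coordinates form a permutation of $2m$ copies of $\Z_4{\setminus}2\Z_4$. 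Putting these together yields exactly $4n$ copies of $2\Z_4$ and $4r+2m$ copies of $\Z_4{\setminus}2\Z_4$, which is the claimed vector.

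There is no serious obstacle here; the argument is a direct structural verification resting on Items \ref{2z22} and \ref{2z2_} of Lemma \ref{Lemma:cases}. The only point requiring a little care is the bookkeeping of the constant $\mu$-blocks: one must notice that, although $u$ and $u+\two$ differ on these coordinates, the two values $\mu$ and $\mu+2$ are precisely the two elements of $\Z_4{\setminus}2\Z_4$, so that across the four copies they recombine into $2m$ complete copies of $\Z_4{\setminus}2\Z_4$ and merge seamlessly with the $4r$ copies coming from the genuine $\Z_4{\setminus}2\Z_4$-blocks. Equivalently, one could simply finish with a multiset count of the entries $0,1,2,3$ on both sides, which gives $4n$ zeros, $4n$ twos, $4r+2m$ ones, and $4r+2m$ threes in either case, confirming that the two vectors are permutations of one another.
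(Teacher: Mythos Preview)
Your proof is correct and follows essentially the same approach as the paper: both arguments rewrite the sum as $(u,u+\two,u,u+\two)$, invoke Items~\ref{2z22} and~\ref{2z2_} of Lemma~\ref{Lemma:cases} to see that the $2\Z_4$ and $\Z_4{\setminus}2\Z_4$ blocks are preserved under adding $2$, and then observe that the constant $\mu$-blocks together with their shifts $\mu+2$ assemble into $2m$ copies of $\Z_4{\setminus}2\Z_4$. The paper's version is slightly terser (it isolates the constant part as ${\bf k}=(\mu,\dots,\mu)$ and handles it in one line), but the underlying decomposition and the cited lemma items are identical.
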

\begin{proof}
 By Items \ref{2z22} and \ref{2z2_} of Lemma \ref{Lemma:cases}, $u+\bf{2}$ is a permutation of $(\mu+2, \stackrel{m}{\dots}, \mu+2, 2\Z_{4}, \stackrel{n}{\dots}, 2\Z_{4}, \Z_{4}{\setminus}2\Z_{4},\stackrel{r}{\dots},\Z_{4}{\setminus}2\Z_{4})$. Let ${\bf k}=(\mu, \stackrel{m}{\dots}, \mu)$. Since $\mu\in\{1,3\}$, we have that
$
({\bf k},{\bf k},{\bf k},{\bf k})+(\zero,\two,\zero,\two)
$
is a permutation of $(\Z_{4}{\setminus}2\Z_{4},\stackrel{2m}{\dots},\Z_{4}{\setminus}2\Z_{4})$. Therefore, $(u,u,u,u)+(\zero, \two, \zero, \two)$ is a permutation of $(2\Z_{4}, \stackrel{4n}{\dots}, 2\Z_{4}, \Z_{4}{\setminus}2\Z_{4},\stackrel{4r+2m}{\dots},\Z_{4}{\setminus}2\Z_{4})$.
\end{proof}

\begin{lemma}\label{lemm:ord8_u3} 
Let  $u=(\mu', \stackrel{m'}{\dots}, \mu', \mu'', \stackrel{m'}{\dots}, \mu'', 2\Z_{8}, \stackrel{n'}{\dots}, 2\Z_{8}, \Z_{8}{\setminus}2\Z_{8},\stackrel{r'}{\dots},\Z_{8}{\setminus}2\Z_{8})\in \Z_8^{2m'+4n'+4r'}$, where  $m', n',r'\geq 0$ and $\mu, \mu'\in \Z_8{\setminus}2\Z_8= \{1,3,5,7\}$. Then, 
\begin{enumerate}
    \item\label{ord8u3_case1} $(u,u,u,u)+(\zero, \two,\four, \mathbf{6})$ is a permutation of $(2\Z_{8}, \stackrel{4n'}{\dots}, 2\Z_{8}, \Z_{8}{\setminus}2\Z_{8},\stackrel{4r'+2m'}{\dots},\Z_{8}{\setminus}2\Z_{8})$;
    \item \label{ord8u3_case2}  $(u,u,u,u)+(\zero, \four, \zero, \four)$ is a permutation of $(\mu',\stackrel{4m'}{\dots}, \mu', \mu'+4,\stackrel{4m'}{\dots},\mu'+4, 2\Z_{8}, \stackrel{4n'}{\dots}, 2\Z_{8}, \Z_{8}{\setminus}2\Z_{8},\stackrel{4r'}{\dots},\Z_{8}{\setminus}2\Z_{8})$ if $\mu' =\mu''$ or $\mu'=\mu''+4$, or a permutation of $(2\Z_{8}, \stackrel{4n'}{\dots}, 2\Z_{8},\Z_{8}{\setminus}2\Z_{8},\stackrel{4r'+2m'}{\dots},\Z_{8}{\setminus}2\Z_{8})$ otherwise.
\end{enumerate}
\end{lemma}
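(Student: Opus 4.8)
The plan is to evaluate both sums coordinate-block by coordinate-block. I would partition the coordinates of $u$ into its four natural parts---the $m'$ singleton coordinates equal to $\mu'$, the $m'$ singletons equal to $\mu''$, the $n'$ blocks equal to $2\Z_8$, and the $r'$ blocks equal to $\Z_8\setminus 2\Z_8$---and note that each of $(u,u,u,u)+(\zero,\two,\four,\mathbf{6})$ and $(u,u,u,u)+(\zero,\four,\zero,\four)$ is the concatenation of four copies of $u$, shifted by the four scalar values in the offset vector. Hence it suffices to determine, for each type of block and each of the four shifts, the multiset obtained, and then to collect the contributions.

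For Item \ref{ord8u3_case1} the shifts are $0,2,4,6$. First I would handle the blocks: adding any even value fixes $2\Z_8$ by Item \ref{2z22} of Lemma \ref{Lemma:cases} and fixes $\Z_8\setminus 2\Z_8$ by Item \ref{2z2_}, so the $n'$ blocks of the first kind contribute $4n'$ copies of $2\Z_8$ and the $r'$ blocks of the second kind contribute $4r'$ copies of $\Z_8\setminus 2\Z_8$. For a singleton equal to an odd $\mu$, the four shifted values $\mu,\mu+2,\mu+4,\mu+6$ are precisely the coset $\mu+2\Z_8=\Z_8\setminus 2\Z_8$, so each of the $2m'$ singletons (from $\mu'$ and $\mu''$) contributes one copy of $\Z_8\setminus 2\Z_8$. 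Collecting everything gives $4n'$ copies of $2\Z_8$ and $2m'+4r'$ copies of $\Z_8\setminus 2\Z_8$, as claimed.

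For Item \ref{ord8u3_case2} the shifts are $0,4,0,4$. The block analysis is the same: since $0$ and $4$ are even (and $4\in2\Z_8$), Items \ref{2z22} and \ref{2z2_} of Lemma \ref{Lemma:cases} give $4n'$ copies of $2\Z_8$ and $4r'$ copies of $\Z_8\setminus 2\Z_8$ from the blocks. The singletons now behave differently: each coordinate equal to $\mu'$ yields the multiset $\{\mu',\mu'+4,\mu',\mu'+4\}$, so the $\mu'$-singletons together give $2m'$ copies of $\mu'$ and $2m'$ copies of $\mu'+4$, and likewise the $\mu''$-singletons give $2m'$ copies each of $\mu''$ and $\mu''+4$. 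The result then hinges on how the pairs $\{\mu',\mu'+4\}$ and $\{\mu'',\mu''+4\}$ overlap.

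I expect this case split to be the main obstacle, though it is short once set up correctly. The key observation is that the two pairs $\{\mu',\mu'+4\}$ and $\{\mu'',\mu''+4\}$ are exactly the two cosets $\{1,5\}$ and $\{3,7\}$ of the subgroup $\{0,4\}$ inside $\Z_8\setminus 2\Z_8$, and that the condition ``$\mu'=\mu''$ or $\mu'=\mu''+4$'' is precisely the statement that $\mu'$ and $\mu''$ lie in the same such coset. In that case the two pairs coincide, the singleton contributions reinforce to $4m'$ copies of $\mu'$ and $4m'$ copies of $\mu'+4$, and adjoining the block contributions gives the first asserted permutation. Otherwise the two pairs are disjoint and their union is all of $\Z_8\setminus 2\Z_8=\{1,3,5,7\}$, so each odd value appears $2m'$ times; the singletons then contribute $2m'$ copies of $\Z_8\setminus 2\Z_8$, and together with the blocks this yields $4n'$ copies of $2\Z_8$ and $4r'+2m'$ copies of $\Z_8\setminus 2\Z_8$, which is the second asserted permutation.
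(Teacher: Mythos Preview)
Your proposal is correct and follows essentially the same approach as the paper: decompose $u$ into its singleton and block parts, invoke Items \ref{2z22} and \ref{2z2_} of Lemma \ref{Lemma:cases} to see that the $2\Z_8$ and $\Z_8\setminus 2\Z_8$ blocks are preserved under even shifts, and analyse the singleton contributions directly. Your explanation of the Item~\ref{ord8u3_case2} case split via the cosets $\{1,5\}$ and $\{3,7\}$ of $4\Z_8$ inside $\Z_8\setminus 2\Z_8$ is a slightly more explicit rendering of the same argument the paper carries out on the block $\mathbf{k}'=(\mu',\dots,\mu',\mu'',\dots,\mu'')$.
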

\begin{proof}
For Item 1, by Items \ref{2z22} and  \ref{2z2_} of Lemma \ref{Lemma:cases}, if $j\in\{0,2,4,6\}$, then $u+\bf{j}$ is a permutation of $(\mu'+j, \stackrel{m'}{\dots}, \mu'+j, \mu''+j, \stackrel{m'}{\dots}, \mu''+j, 2\Z_{8}, \stackrel{n'}{\dots}, 2\Z_{8}, \Z_{8}{\setminus}2\Z_{8},\stackrel{r'}{\dots},\Z_{8}{\setminus}2\Z_{8})$. Let ${\bf k}'=(\mu', \stackrel{m'}{\dots}, \mu', \mu'' \stackrel{m'}{\dots}, \mu'')$.  Since $\mu',\mu''\in\{1,3,5,7\}$,  we have that
$
({\bf k}', \stackrel{4}{\dots}, {\bf k}')+(\zero, \two, \four, \mathbf{6})
$ 
is a permutation of $(\Z_{8}{\setminus}2\Z_{8},\stackrel{2m'}{\dots},\Z_{8}{\setminus}2\Z_{8})$ and hence $(u,u,u,u)+(\zero, \two,\four, \mathbf{6})$ is a permutation of $(2\Z_{8}, \stackrel{4n'}{\dots}, 2\Z_{8}, \Z_{8}{\setminus}2\Z_{8},\stackrel{4r'+2m'}{\dots},\Z_{8}{\setminus}2\Z_{8})$. 

For item 2, we have that
$
({\bf k}', \stackrel{4}{\dots}, {\bf k}')+(\zero, \four, \zero, \four)
$
is a permutation of $(\mu',\stackrel{4m'}{\dots}, \mu', \mu'+4,\stackrel{4m'}{\dots},\mu'+4)$ if $\mu' =\mu''$ or $\mu'=\mu''+4$, or a permutation of $(\Z_{8}{\setminus}2\Z_{8},\stackrel{2m'}{\dots},\Z_{8}{\setminus}2\Z_{8})$ otherwise. Therefore,  $(u,u,u,u)+(\zero, \four, \zero, \four)$ is a permutation of $(\mu',\stackrel{4m'}{\dots}, \mu', \mu'+4,\stackrel{4m'}{\dots},\mu'+4, 2\Z_{8}, \stackrel{4n'}{\dots}, 2\Z_{8}, \Z_{8}{\setminus}2\Z_{8},\stackrel{4r'}{\dots},\Z_{8}{\setminus}2\Z_{8})$ if $\mu' =\mu''$ or $\mu'=\mu''+4$, or a permutation of $(2\Z_{8}, \stackrel{4n'}{\dots}, 2\Z_{8},\Z_{8}{\setminus}2\Z_{8},\stackrel{4r'+2m'}{\dots},\Z_{8}{\setminus}2\Z_{8})$ otherwise.
\end{proof}

\begin{lemma}\label{lemm:ord4_u3}
Let $u=(\mu,\stackrel{m}{\dots}, \mu,4\Z_8,\stackrel{n}{\dots},4\Z_8,2\Z_8{\setminus}4\Z_8, \stackrel{r}{\dots}, 2\Z_8{\setminus}4\Z_8)\in \Z_8^{m+2n+2r}$, where  $m, n,r\geq 0$ and $\mu \in 2\Z_8{\setminus}4\Z_8=\{2,6\}$.
Then,
\begin{enumerate}
    \item\label{lemm:ord4_case1} $(u,u,u,u)+(\zero, \two,\four, \mathbf{6})$ is a permutation of $(2\Z_8, \stackrel{2r+2n+m}{\dots},2\Z_8)$;
    \item \label{lemm:ord4_case2} $(u,u,u,u)+(\zero, \four, \zero, \four)$ is a permutation of $(4\Z_8,\stackrel{4n}{\dots},4\Z_8,2\Z_8{\setminus}4\Z_8, \stackrel{4r+2m}{\dots}, 2\Z_8{\setminus}4\Z_8)$.
\end{enumerate}
\end{lemma}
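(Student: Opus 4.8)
The plan is to follow the pattern already used in Lemmas \ref{lemm:ord8_u2} and \ref{lemm:ord8_u3}: decompose $(u,u,u,u)$ according to the three kinds of coordinate blocks coming from $u$ (the $m$ constant coordinates equal to $\mu$, the $n$ blocks equal to $4\Z_8$, and the $r$ blocks equal to $2\Z_8\setminus 4\Z_8$), determine for each block which multiset of values its four translated copies produce, and then reassemble. The first thing I would record is that \emph{every} entry of $u$ lies in $2\Z_8$, since $\mu\in\{2,6\}\subset 2\Z_8$, $4\Z_8\subset 2\Z_8$, and $2\Z_8\setminus 4\Z_8\subset 2\Z_8$; this is what makes both shift vectors act cleanly.

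For Item \ref{lemm:ord4_case1} I would argue coordinate by coordinate rather than block by block. Fix a coordinate of $u$ with value $v\in 2\Z_8$. In $(u,u,u,u)+(\zero,\two,\four,\mathbf{6})$ the four entries sitting over this position are $v,\,v+2,\,v+4,\,v+6$, i.e.\ the set $v+2\Z_8$. Since $v\in 2\Z_8$, Item \ref{2z22} of Lemma \ref{Lemma:cases} (with $i=3$, $j=1$) gives $v+2\Z_8=2\Z_8$. Hence each of the $m+2n+2r$ coordinates of $u$ contributes exactly one copy of $2\Z_8$, so the whole vector is a permutation of $(2\Z_8,\stackrel{2r+2n+m}{\dots},2\Z_8)$; the total number of coordinates on both sides is $4(m+2n+2r)$, confirming the count.

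For Item \ref{lemm:ord4_case2} the shift is $(\zero,\four,\zero,\four)$, so I would treat the three block types separately. A $4\Z_8$-block contributes $4\Z_8,\,4\Z_8+4,\,4\Z_8,\,4\Z_8+4$; since $4\in 4\Z_8$, Item \ref{2z22} of Lemma \ref{Lemma:cases} gives $4\Z_8+4=4\Z_8$, so each of the $n$ blocks yields four copies of $4\Z_8$, i.e.\ $4n$ in total. A $(2\Z_8\setminus 4\Z_8)$-block contributes $\{2,6\},\,\{2,6\}+4,\,\{2,6\},\,\{2,6\}+4$, and a direct check gives $\{2,6\}+4=\{6,2\}=2\Z_8\setminus 4\Z_8$, so each of the $r$ blocks yields four copies of $2\Z_8\setminus 4\Z_8$, i.e.\ $4r$. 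Finally, a constant coordinate $\mu$ contributes $\mu,\,\mu+4,\,\mu,\,\mu+4$; since $\{\mu,\mu+4\}=\{2,6\}=2\Z_8\setminus 4\Z_8$, this amounts to two copies of $2\Z_8\setminus 4\Z_8$ per coordinate, hence $2m$ in total. Collecting these, $(u,u,u,u)+(\zero,\four,\zero,\four)$ is a permutation of $4n$ copies of $4\Z_8$ followed by $4r+2m$ copies of $2\Z_8\setminus 4\Z_8$, as claimed.

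I do not expect a serious obstacle; the only care needed is the bookkeeping of multiset multiplicities and the fact that $2\Z_8\setminus 4\Z_8$ is \emph{not} an ideal, so its translates must be checked by hand (or by noting that it is the nonzero coset of $4\Z_8$ inside $2\Z_8$, which $+4$ fixes) instead of quoting Item \ref{2z22} directly. As a cleaner cross-check I would observe that the doubling map $\psi:\Z_4\to 2\Z_8$, $x\mapsto 2x$, is an injective group homomorphism onto $2\Z_8$ sending $2\Z_4\mapsto 4\Z_8$ and $\Z_4\setminus 2\Z_4\mapsto 2\Z_8\setminus 4\Z_8$; under $\psi$ the present statement is exactly the image of Lemma \ref{lemm:uu+lambda} (case $i=2$, $\lambda=1$) for Item \ref{lemm:ord4_case1} and of Lemma \ref{lemm:ord8_u2} for Item \ref{lemm:ord4_case2}, since $\psi(\zero,\one,\two,\mathbf{3})=(\zero,\two,\four,\mathbf{6})$ and $\psi(\zero,\two,\zero,\two)=(\zero,\four,\zero,\four)$.
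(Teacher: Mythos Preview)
Your proof is correct and follows essentially the same block-by-block decomposition the paper uses: analyze separately what the shifts do to the constant-$\mu$ coordinates, the $4\Z_8$ blocks, and the $2\Z_8\setminus 4\Z_8$ blocks, invoking Item~\ref{2z22} of Lemma~\ref{Lemma:cases} for the ideal $4\Z_8\subset 2\Z_8$ and checking the non-ideal coset $\{2,6\}$ by hand. The paper organizes the computation slightly differently (first describing $u+\mathbf{j}$ for each fixed $j$, then assembling), but the content is the same; your coordinate-by-coordinate argument for Item~\ref{lemm:ord4_case1} is in fact a touch cleaner. The doubling-map cross-check via $\psi:\Z_4\to 2\Z_8$ is a nice extra observation not in the paper.
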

\begin{proof}
By Item \ref{2z22} of Lemma \ref{Lemma:cases}, if $j\in \{0,4\}$, then $u+\mathbf{j}$ is a permutation of $ (\mu+j,\stackrel{m}{\dots}, \mu+j,4\Z_8,\stackrel{n}{\dots},4\Z_8,2\Z_8{\setminus}4\Z_8, \stackrel{r}{\dots}, 2\Z_8{\setminus}4\Z_8)$. Similarly,  if $j\in \{2,6\}$, then $u+\mathbf{j}$ is a permutation of $ (\mu+j,\stackrel{m}{\dots}, \mu+j,4\Z_8,\stackrel{r}{\dots},4\Z_8,2\Z_8{\setminus}4\Z_8, \stackrel{n}{\dots}, 2\Z_8{\setminus}4\Z_8)$. Let $\mathbf{k}=(\mu, \stackrel{m}{\dots}, \mu)$.  

For Item 1, since $\mu \in \{2,6\}$, we have that 
    $
    (\mathbf{k}, \stackrel{4}{\dots}, \mathbf{k})+(\zero, \two, \four, \mathbf{6})
    $
    is a permutation of $(2\Z_8,\stackrel{m}{\dots}, 2\Z_8)$, and hence $(u,u,u,u)+(\zero, \two,\four, \mathbf{6})$ is a permutation of $(2\Z_8, \stackrel{2r+2n+m}{\dots},2\Z_8)$. 
    
    For Item 2,  we have that 
    $
    (\mathbf{k}, \stackrel{4}{\dots}, \mathbf{k})+(\zero, \four, \zero, \four)
    $
    is a permutation of $(2\Z_8{\setminus}4\Z_8, \stackrel{2m}{\dots}, 2\Z_8{\setminus}4\Z_8)$. Therefore, $(u,u,u,u)+(\zero, \four, \zero, \four)$ is a permutation of $(4\Z_8,\stackrel{4n}{\dots},4\Z_8,2\Z_8{\setminus}4\Z_8, \stackrel{4r+2m}{\dots}, 2\Z_8{\setminus}4\Z_8)$.
\end{proof}

\medskip
 Let $t_1\geq1, t_2\geq 0$, and $t_3\geq1$ be integers.  Let ${\cal{G}}^{t_1,t_2,t_3}$ be the set of all codewords of the code generated by the matrix obtained from $A^{t_1,t_2,t_3}$ after removing the row $(\mathbf{1\mid 2\mid 4})$.


\begin{lemma}\label{lem:z_t1p1_0_1}
    Let $t_1\geq 1$ be an integer. Let $$\zz=(u_1,u_1 \mid x_1,u_2, u_2, u_2, u_2\mid x_2,u_3,\stackrel{8}{\dots},u_3)\in {\cal{G}}^{t_1+1,0,1},$$  where $\uu=(u_1\mid u_2\mid u_3) \in {\cal{G}}^{t_1,0,1}$ and $x_{i-1}\in (2\Z_{2^i})^{2^{(i-1)t_1}}$ for $i \in \{2,3\}$.
    Then,
    \begin{enumerate}
    \item if $o(\zz)=8$, then $x_{i-1}$ is a permutation  of $(2\Z_{2^{i}},\stackrel{2^{(i-1)(t_1-1)}}{\dots}, 2\Z_{2^{i}})$ for $i \in \{2,3\}$.
        \item if $o(\zz)=4$, then $x_1=\zero$ and $x_2$ is a permutation of $(4\Z_8, \stackrel{2\cdot 4^{t_1-1}}{\dots}, 4\Z_8)$.
        \item if $o(\zz)=2$, then $x_1=\zero$ and $x_2=\zero$.
    \end{enumerate}
\end{lemma}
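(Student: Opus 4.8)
The plan is to read off the blocks of an arbitrary codeword of $\mathcal{G}^{t_1+1,0,1}$ directly from the recursive construction (\ref{eq:recGenMatrix1}). Deleting the first row $(\one\mid\two\mid\four)$ from $A^{t_1+1,0,1}$, the remaining generators are the expansions $R_2,\dots,R_{t_1+1}$ of the $t_1$ order-$8$ rows of $A^{t_1,0,1}$ together with the new order-$8$ bottom row $R_{\mathrm{bot}}=(\zero,\one\mid\one,\zero,\one,\two,\mathbf{3}\mid\one,\zero,\one,\dots,\mathbf{7})$. Writing $\zz=\sum_{j=2}^{t_1+1}c_jR_j+aR_{\mathrm{bot}}$ with $c_j,a\in\Z_8$ and comparing with the prescribed form, the eight $A_3$-blocks of $R_{\mathrm{bot}}$ are $\zero,\one,\dots,\mathbf{7}$, so if all eight $A_3$-blocks of $\zz$ equal $u_3$ then $a\,(\zero,\one,\dots,\mathbf{7})=\zero$, forcing $a=0$. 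Hence $\uu=(u_1\mid u_2\mid u_3)=\sum_jc_j(\text{row }j\text{ of }A^{t_1,0,1})\in\mathcal{G}^{t_1,0,1}$, while $x_1=\sum_jc_j(M_1)_j$ and $x_2=\sum_jc_j(M_2)_j$ are precisely the $M_1$- and $M_2$-projections.

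The key observation is that, restricted to the generator rows $j=2,\dots,t_1+1$, the columns of $M_1$ run once through all of $\{0,2\}^{t_1}=2\mathbb{F}_2^{t_1}$ and those of $M_2$ once through all of $\{0,2,4,6\}^{t_1}=2\Z_4^{t_1}$; the fixed first coordinate $\{2\}$, respectively $\{4\}$, of $\mathbf{z}$ lies in the deleted row and so does not intervene. Consequently each coordinate of $x_1$ has the form $2\langle\bar c,\mathbf{e}\rangle$ with $\mathbf{e}$ ranging over $\mathbb{F}_2^{t_1}$ and $\bar c=(c_2,\dots,c_{t_1+1})\bmod 2$, and each coordinate of $x_2$ has the form $2\langle\tilde c,\mathbf{f}\rangle$ with $\mathbf{f}$ ranging over $\Z_4^{t_1}$ and $\tilde c=(c_2,\dots,c_{t_1+1})\bmod 4$. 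Since the order-$8$ rows freely generate $\mathcal{G}^{t_1,0,1}\cong\Z_8^{t_1}$ and the blocks $x_1,x_2$ never raise the order, one has $o(\zz)=o(\uu)=8/\gcd(c_2,\dots,c_{t_1+1},8)$, so the hypothesis on $o(\zz)$ translates into congruence conditions on the $c_j$.

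The three cases then reduce to the balance of linear functionals. If $o(\zz)=8$ some $c_j$ is odd, hence $\bar c\neq\zero$ and $\tilde c$ has a unit coordinate: a nonzero $\mathbb{F}_2$-functional is balanced, so $x_1$ is a permutation of $2^{t_1-1}$ copies of $2\Z_4$, and a $\Z_4$-functional with a unit coordinate attains every value of $\Z_4$ equally often, so $x_2$ is a permutation of $4^{t_1-1}$ copies of $2\Z_8$ (recall $2^{t_1-1}=2^{(i-1)(t_1-1)}$ for $i=2$ and $4^{t_1-1}=2^{(i-1)(t_1-1)}$ for $i=3$). If $o(\zz)=4$ then all $c_j$ are even, so $\bar c=\zero$ and $x_1=\zero$; writing $c_j=2d_j$ the coordinate of $x_2$ becomes $4\langle\bar d,\mathbf{f}\rangle$ over $\mathbb{F}_2$ with $\bar d\neq\zero$, which is balanced in $\{0,4\}=4\Z_8$, giving $x_2$ as a permutation of $2\cdot 4^{t_1-1}$ copies of $4\Z_8$. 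If $o(\zz)=2$ then every $c_j\equiv0\pmod 4$, so $\bar c=\zero$ and $\tilde c=\zero$, whence $x_1=\zero$ and $x_2=\zero$.

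I expect the main obstacle to be the bookkeeping behind the key observation: one must verify carefully how the recursively defined column blocks of $M_1$ and $M_2$ align with the generator rows rather than with the deleted row, so as to justify that the restricted columns enumerate $2\mathbb{F}_2^{t_1}$ and $2\Z_4^{t_1}$ without repetition. Everything else---ruling out $a\neq0$, the identity $o(\zz)=o(\uu)$, and the three balance computations---is routine once this enumeration is established; alternatively the enumeration itself can be proved by induction on $t_1$ straight from the Cartesian-product definitions $M_1=\{\mathbf{z}^T:\mathbf{z}\in\{2\}\times\{0,2\}^{t_1}\}$ and $M_2=\{\mathbf{z}^T:\mathbf{z}\in\{4\}\times\{0,2,4,6\}^{t_1}\}$.
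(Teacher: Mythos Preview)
Your proposal is correct and follows essentially the same route as the paper: both write $\zz$ as a $\Z_8$-combination of rows $2,\dots,t_1+1$ (after observing the coefficient of the bottom row vanishes), identify $x_1,x_2$ as combinations of the rows of $M'_1=\{\zz^T:\zz\in\{0,2\}^{t_1}\}$ and $M'_2=\{\zz^T:\zz\in\{0,2,4,6\}^{t_1}\}$, and then read off the three cases from the $2$-adic valuation of the coefficients. The only cosmetic difference is that you phrase the final step as balance of linear functionals over $\mathbb{F}_2$ and $\Z_4$, whereas the paper invokes Item~\ref{2z22} of Lemma~\ref{Lemma:cases} (coset stability of $2^j\Z_{2^i}$); these are equivalent formulations of the same counting argument.
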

\begin{proof}
    Let $\ww_j$, where $j\in \{1,\dots,t_1+2\}$, be the $j$th row of the matrix $A^{t_1+1,0,1}$. Note that $\ww_1=(\mathbf{1\mid 2\mid 4})$, and $\ww_2,\dots, \ww_{t_1+2}$ are the rows of order 8, where $\ww_{t_1+2}=(\zero, \one \mid \one,\zero, \one, \two,  \mathbf{3}\mid \one, \zero, \dots, \mathbf{7})$.  
    Since any element of ${\cal{G}}^{t_1+1,0,1}$ can be written as $\zz+\lambda \ww_{t_1+2}$, where $\lambda \in \Z_8$, then $\zz=\sum_{j=2}^{t_1+1}r_j\ww_j=(u_1,u_1 \mid x_1,u_2, u_2, u_2, u_2\mid x_2,u_3,\stackrel{8}{\dots},u_3)$, where $r_j \in \Z_8$.  By construction, 
    $x_1$ and $x_2$ are generated by the rows of $M'_1=\{\mathbf{z}^T:\mathbf{z}\in \lbrace 0,2\rbrace^{t_1}\}$ and $M'_2=\{\mathbf{z}^T: \mathbf{z}\in \lbrace0,2,4,6\rbrace^{t_1}\}$, respectively. Thus,  $x_1=\zero$ or $x_1$ is a permutation of $(2\Z_4, \stackrel{2^{t_1-1}}{\dots}, 2\Z_4)$, and $x_2=\zero$ or $x_2$ is a permutation of $(2\Z_8, \stackrel{4^{t_1-1}}{\dots}, 2\Z_8)$ or $(4\Z_8, \stackrel{2\cdot 4^{t_1-1}}{\dots}, 4\Z_8)$.
    
    For Item 1, we have that there exists at least one $j\in \{2,\dots, t_1+1\}$ such that $r_j \in \{1,3,5,7\}$. Therefore, by  Item \ref{2z22} of Lemma \ref{Lemma:cases}, $x_{i-1}$ is a permutation  of $(2\Z_{2^{i}},\stackrel{2^{(i-1)(t_1-1)}}{\dots}, 2\Z_{2^{i}})$ for $i \in \{2,3\}$. 

    For Item 2, we have that $r_j\in 2\Z_8$ for all $j\in \{2,\dots, t_1+1\}$ and there exist at least one $j\in \{2,\dots, t_1+1\}$ such that $r_j\in \{2,6\}$. Therefore, $x_1=\zero$ and, by Item \ref{2z22} of Lemma \ref{Lemma:cases}, $x_2$ is a permutation of $(4\Z_8, \stackrel{2\cdot 4^{t_1-1}}{\dots}, 4\Z_8)$. 

    For Item 3,  we have that $r_j\in 4\Z_8$ for all $j\in \{2,\dots, t_1+1\}$ and there exist at least one $j\in \{2,\dots, t_1+1\}$ such that  $r_j=4$. Therefore, $x_1=\zero$ and $x_2=\zero$.
\end{proof}

\begin{lemma}\label{lem:z_t1t2p1_1}
     Let $t_1\geq 1$ and $t_2\geq 0$ be integers.   Let $$\zz=(u_1,u_1 \mid x_1,u_2, u_2, u_2, u_2\mid u_3,u_3, u_3, u_3) \in {\cal{G}}^{t_1,t_2+1,1},$$ where $\uu=(u_1\mid u_2\mid u_3) \in {\cal{G}}^{t_1,t_2,1}$ and $x_1\in (2\Z_4)^{2^{t_1+t_2}}$.
    Then,
    \begin{enumerate}
    \item if $o(\zz)=8$, then $x_1$ is a permutation  of $(2\Z_4,\stackrel{2^{t_1+t_2-1}}{\dots}, 2\Z_4)$.
        \item if $o(\zz)=4$, then $x_1=\zero$ if $u_1=\zero$, and $x_1$ is a permutation of $(2\Z_4,\stackrel{2^{t_1+t_2-1}}{\dots}, 2\Z_4)$ otherwise.
        \item if $o(\zz)=2$, then $x_1=\zero$.
    \end{enumerate}
\end{lemma}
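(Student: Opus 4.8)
The plan is to follow the template of the proof of Lemma~\ref{lem:z_t1p1_0_1}, but now tracking the $\Z_2$-block and the new $\Z_4$-block introduced by construction (\ref{eq:recGenMatrix2}). Denote by $\ww_1,\dots,\ww_{t_1+t_2+2}$ the rows of $A^{t_1,t_2+1,1}$, where $\ww_1=(\one\mid\two\mid\four)$, the rows $\ww_2,\dots,\ww_{t_1+1}$ have order $8$, the rows $\ww_{t_1+2},\dots,\ww_{t_1+t_2+1}$ are the order-$4$ rows inherited from $A^{t_1,t_2,1}$, and $\ww_{t_1+t_2+2}$ is the order-$4$ row appended by (\ref{eq:recGenMatrix2}). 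First I would argue that the prescribed shape of $\zz$ rules out the new row: its $\Z_8$-part $(\zero,\two,\four,\mathbf 6)$ is not constant across the four $\Z_8$-blocks, while $\zz$ has four equal $\Z_8$-blocks $u_3$, so the coefficient $\lambda$ of $\ww_{t_1+t_2+2}$ must satisfy $2\lambda\equiv 0\pmod 8$, i.e. $\lambda\in\{0,4\}$, and then $\lambda\ww_{t_1+t_2+2}=\zero$ because that row has order $4$. Hence $\zz=\sum_{j=2}^{t_1+t_2+1}r_j\ww_j$ with $r_j\in\Z_8$, and the same coefficients applied to the rows of $A^{t_1,t_2,1}$ produce the given $\uu\in{\cal G}^{t_1,t_2,1}$.

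Next I would read off the two blocks that matter. The new $\Z_4$-columns come from $M_1=\{\mathbf z^T:\mathbf z\in\{2\}\times\{0,2\}^{t_1+t_2}\}$, so $x_1=\sum_{j=2}^{t_1+t_2+1}r_jM_1^{(j)}$, where $M_1^{(j)}$ is the $j$th row of $M_1$. Since every entry of $M_1^{(j)}$ lies in $\{0,2\}$, we get $r_jM_1^{(j)}=\bar r_jM_1^{(j)}$ in $\Z_4$ with $\bar r_j:=r_j\bmod 2$, so $x_1=2\sum_{j=2}^{t_1+t_2+1}\bar r_j\widehat M_1^{(j)}$ in $\Z_4$, where $\widehat M_1^{(j)}=M_1^{(j)}/2\in\Z_2^{2^{t_1+t_2}}$. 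The rows $\widehat M_1^{(2)},\dots,\widehat M_1^{(t_1+t_2+1)}$ are precisely the rows of $Y_{t_1+t_2}$ (their columns exhaust $\{0,1\}^{t_1+t_2}$): they are linearly independent over $\Z_2$, and by the first-order Reed--Muller property recalled in Section~\ref{Sec:GrayMap} every nonzero $\Z_2$-combination of them is balanced. Thus $x_1=\zero$ iff all $\bar r_j=0$, and otherwise $x_1$ has exactly $2^{t_1+t_2-1}$ zeros and $2^{t_1+t_2-1}$ twos, i.e. it is a permutation of $(2\Z_4,\stackrel{2^{t_1+t_2-1}}{\dots},2\Z_4)$. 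In the same way $u_1=\sum_{j=2}^{t_1+t_2+1}\bar r_jv_j$, where $v_2,\dots,v_{t_1+t_2+1}$ are the rows of the $\Z_2$-part of $A^{t_1,t_2,1}$, again a copy of $Y_{t_1+t_2}$ and hence linearly independent over $\Z_2$.

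It remains to convert the hypothesis on $o(\zz)$ into parity conditions on the $\bar r_j$. Using ${\cal G}^{t_1,t_2,1}\cong\Z_8^{t_1}\times\Z_4^{t_2}$, the order of $\zz$ equals the order of $\uu$ (the block $x_1\in(2\Z_4)$ has order at most $2$), which is the least common multiple of the orders of the $r_j$ in their cyclic factors. For Item 1, $o(\zz)=8$ forces some order-$8$ coefficient $r_j$ ($2\le j\le t_1+1$) to be odd, hence some $\bar r_j\neq 0$ and $x_1$ is balanced. For Item 3, $o(\zz)=2$ forces $r_j\in\{0,4\}$ for $j\le t_1+1$ and $r_j\in\{0,2\}$ for $j\ge t_1+2$, all even, so all $\bar r_j=0$ and $x_1=\zero$. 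For Item 2, $o(\zz)=4$ forces $\bar r_j=0$ for the order-$8$ rows, so both $u_1=\sum_{j\ge t_1+2}\bar r_jv_j$ and $x_1/2=\sum_{j\ge t_1+2}\bar r_j\widehat M_1^{(j)}$ depend only on the parities of the order-$4$ coefficients; since the $v_j$ with $j\ge t_1+2$ are linearly independent, $u_1=\zero$ iff all those $\bar r_j$ vanish iff $x_1=\zero$, and $u_1\neq\zero$ iff some $\bar r_j\neq 0$ iff $x_1$ is a permutation of $(2\Z_4,\dots,2\Z_4)$.

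The only genuinely delicate point is Item 2: I must be sure that, once the order-$8$ coefficients are even, the vanishing of $u_1$ is governed by exactly the same parities $\bar r_j$ ($j\ge t_1+2$) that govern $x_1$. This rests on the linear independence over $\Z_2$ of the $\Z_2$-parts $v_j$ of the order-$4$ rows, which are a subset of the rows of $Y_{t_1+t_2}$. The remaining bookkeeping — that the appended row drops out of the decomposition and that $o(\zz)=o(\uu)$ — is routine and parallels Lemma~\ref{lem:z_t1p1_0_1}.
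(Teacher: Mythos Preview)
Your proof is correct and follows essentially the same line as the paper's: decompose $\zz$ as $\sum_{j=2}^{t_1+t_2+1} r_j\ww_j$, observe that $x_1$ is generated by the rows of $M_1'=\{\mathbf z^T:\mathbf z\in\{0,2\}^{t_1+t_2}\}$, and then read off the three cases from the parities of the $r_j$. You add two pieces of justification that the paper leaves implicit: (i) an explicit argument, via the $\Z_8$-block, that the coefficient of the new row $\ww_{t_1+t_2+2}$ must vanish for $\zz$ of the prescribed shape, and (ii) the identification of both $\widehat M_1$ and the $\Z_2$-part of $A^{t_1,t_2,1}$ with $Y_{t_1+t_2}$, together with the first-order Reed--Muller balancedness, to obtain the ``zero or balanced'' dichotomy and the biconditional $u_1=\zero\Leftrightarrow x_1=\zero$ in Item~2. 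These are useful clarifications but not a different method.
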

\begin{proof}
    Let $\ww_i$, where $i\in \{1,\dots,t_1+t_2+2\}$, be the $i$th row of the matrix $A^{t_1, t_2+1,1}$. Note that $\ww_1=(\mathbf{1\mid 2\mid 4})$, $\ww_2, \dots, \ww_{t_1+1}$ are the rows of order $8$, and $\ww_{t_1+2}, \dots, \ww_{t_1+t_2+2}$ are the rows of order $4$, where $\ww_{t_1+t_2+2}=(\zero, \one \mid \one, \zero, \one, \two, \mathbf{3}\mid  \zero, \two, \four, \mathbf{6})$.  
    Since any element of ${\cal{G}}^{t_1,t_2+1,1}$ can be written as $\zz+\lambda \ww_{t_1+t_2+2}$, where $\lambda \in \{0,1,2,3\}$, then $\zz=\sum_{i=2}^{t_1+t_2+1}r_i\ww_i=(u_1,u_1 \mid x_1,u_2, u_2, u_2, u_2\mid u_3,u_3, u_3, u_3)$, where $r_i \in \Z_8$ for $i \in \{2,\dots, t_1+1\}$ and $r_i\in \{0,1,2,3\}$ for $i\in \{t_1+2,\dots, t_1+t_2+1\}$. By construction, 
    $x_1$ is generated by the rows of $M'_1=\{\mathbf{z}^T:\mathbf{z}\in \lbrace 0,2\rbrace^{t_1+t_2}\}$. Thus,  $x_1=\zero$ or $x_1$ is a permutation of $(2\Z_4, \stackrel{2^{t_1+t_2-1}}{\dots}, 2\Z_4)$.

    For Item 1, we have that there exists at least one $i\in \{2,\dots, t_1+1\}$ such that $r_i \in \{1,3,5,7\}$. Therefore, since $x_1$ is of order at most two,  $x_1\neq \zero$.

 For Item 2, we have that $r_i\in 2\Z_8$ for all $i \in \{2,\dots, t_1+1\}$ and $r_i\in \{0,1,2,3\}$ for all $i\in \{t_1+2,\dots, t_1+t_2+1\}$.  Note that, since $x_1$ and $u_1$ are of order at most two,  $x_1\neq \zero$ if and only if there exists at least one $i$ for $i\in \{t_1+2,\dots, t_1+t_2+1\}$ such that $r_i\in \{1,3\}$, or equivalently, if and only if $u_1\neq \zero$.


For Item 2, we have that $r_i\in 4\Z_8=\{0,4\}$ for all $i \in \{2,\dots, t_1+1\}$ and $r_i\in \{0,2\}$ for all $i\in \{t_1+2,\dots, t_1+t_2+1\}$. Therefore, since $x_1$ is of order at most two, $x_1=\zero$.
\end{proof}

\begin{lemma}\label{lemma:same_num1}
Let $t_1\geq 1$ be an integer. Let ${\cH}^{t_1,0,1}$ be the $\Z_2\Z_4 \Z_8$-additive code of type $(\alpha_1,\alpha_2,$ $\alpha_3;$ $t_1,0,1)$ generated by the matrix $A^{t_1,0,1}$. 
Let $\uu=(u_1\mid u_2\mid u_3) \in {\cal{G}}^{t_1,0,1}$. Then,

 \begin{enumerate}
 \item \label{proper-1}if $o(\uu)=8$, then $u_1$ contains every element of $\Z_2$ the same number of times, $u_2$ is a permutation of $(\mu, \stackrel{m}{\dots}, \mu, 2\Z_{4}, \stackrel{n}{\dots}, 2\Z_{4}, \Z_{4}{\setminus}2\Z_{4},\stackrel{r}{\dots},\Z_{4}{\setminus}2\Z_{4})$ for some integers $m,n,r\geq 0$ and $\mu \in \{1,3\}$, and $u_3$ is a permutation of $(\mu', \stackrel{m'}{\dots}, \mu', \mu'', \stackrel{m'}{\dots}, \mu'', 2\Z_{8}, \stackrel{n'}{\dots}, 2\Z_{8}, \Z_{8}{\setminus}2\Z_{8},\stackrel{r'}{\dots},\Z_{8}{\setminus}2\Z_{8})$ for some integers  $m', n',r'\geq 0$ and $\mu, \mu'\in \{1,3,5,7\}$.
 
  \item \label{proper-2} if $o(\uu)=4$, then $u_1=\zero$,
         $u_2$ contains the element in $2\Z_4{\setminus}\{0\}=\{2\}$ exactly $\frac{1}{2}(\frac{\alpha_1}{2}+\alpha_2)=4^{t_1-1}$ times and  $\frac{\alpha_2}{2}-\frac{\alpha_1}{4}=4^{t_1-1} -2^{t_1-1}$ times the element  $0$,
   and  $u_3$ is a permutation of $(\mu,\stackrel{m}{\dots}, \mu,4\Z_8,\stackrel{n}{\dots},4\Z_8,2\Z_8{\setminus}4\Z_8, \stackrel{r}{\dots}, 2\Z_8{\setminus}4\Z_8)$ for some integers  $m, n,r\geq 0$ and $\mu \in \{2,6\}$.

  \item \label{proper-3} if $o(\uu)=2$, then $u_1=\zero$, $u_2=\zero$, and $u_3$ contains the element in $4\Z_8{\setminus}\{0\}=\{4\}$ exactly $\frac{1}{4}(\frac{\alpha_1}{2}+\alpha_2+2\alpha_3)= 8^{t_1-1} $ times and  $\frac{\alpha_3}{2}-\frac{1}{4}(\frac{\alpha_1}{2}+\alpha_2)= 8^{t_1-1} -4^{t_1-1}$ times the element  $0$.
   \end{enumerate}
\end{lemma}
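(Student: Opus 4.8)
The plan is to prove all three items simultaneously by induction on $t_1\geq 1$, using the recursion (\ref{eq:recGenMatrix1}) that builds $A^{t_1+1,0,1}$ from $A^{t_1,0,1}$ together with the decomposition established in Lemma \ref{lem:z_t1p1_0_1}. For the base case $t_1=1$, the code $\mathcal{G}^{1,0,1}$ is generated by the single order-$8$ row $(0,1\mid 1\mid 1)$ of $A^{1,0,1}$, so its eight codewords $\lambda\,(0,1\mid 1\mid 1)$, $\lambda\in\Z_8$, can be listed explicitly and each of the three items checked by inspection (the blocks have length one, so $m=n=r=m'=n'=r'=0$).

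For the inductive step I would take an arbitrary $\zz'\in\mathcal{G}^{t_1+1,0,1}$ and, as in Lemma \ref{lem:z_t1p1_0_1}, write $\zz'=\zz+\lambda\ww_{t_1+2}$, where $\ww_{t_1+2}=(\zero,\one\mid\one,\zero,\one,\two,\mathbf{3}\mid\one,\zero,\dots,\mathbf{7})$ is the last order-$8$ row, $\lambda\in\Z_8$, and $\zz=(u_1,u_1\mid x_1,u_2,u_2,u_2,u_2\mid x_2,u_3,\stackrel{8}{\dots},u_3)$ is the lift of some $\uu=(u_1\mid u_2\mid u_3)\in\mathcal{G}^{t_1,0,1}$; by Lemma \ref{lem:z_t1p1_0_1} the new coordinates $x_1,x_2$ carry no higher order, so $o(\zz)=o(\uu)$. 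When $\uu\neq\zero$ the induction hypothesis describes $u_1,u_2,u_3$ in terms of $o(\uu)$ (the case $\uu=\zero$ requiring no hypothesis), and Lemma \ref{lem:z_t1p1_0_1} describes $x_1,x_2$ in terms of $o(\zz)$. Writing $\bar\lambda=\lambda\bmod 2$, the three blocks of $\zz'$ are
\begin{align*}
&(\Z_2):\quad (u_1,\,u_1+\bar\lambda\,\one),\\
&(\Z_4):\quad (x_1+\lambda\one,\,u_2,\,u_2+\lambda\one,\,u_2+2\lambda\one,\,u_2+3\lambda\one),\\
&(\Z_8):\quad (x_2+\lambda\one,\,u_3,\,u_3+\lambda\one,\dots,u_3+7\lambda\one).
\end{align*}

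To determine $o(\zz')$, note that only the $\Z_8$ block can carry order $8$, so $o(\zz')=8$ exactly when that block has a unit entry, with $o(\zz')\in\{4,2\}$ characterized analogously; this partitions the work into the pairs $(o(\uu),\lambda)$. In each block the effect of $+\lambda\ww_{t_1+2}$ is read off from the lemmas already proved: if $\lambda$ is a unit, the four copies of $u_2$ and the eight copies of $u_3$ become fully equidistributed $(\Z_4,\dots,\Z_4)$ and $(\Z_8,\dots,\Z_8)$ by Lemma \ref{lemm:uu+lambda}; if $\lambda\in\{2,6\}$ the $\Z_4$ shift is $(\zero,\two,\zero,\two)$, handled by Lemma \ref{lemm:ord8_u2}, while the $\Z_8$ shift splits into two copies of $(\zero,\two,\four,\mathbf{6})$, handled by Lemmas \ref{lemm:ord8_u3}(\ref{ord8u3_case1}) and \ref{lemm:ord4_u3}(\ref{lemm:ord4_case1}) according to whether $o(\uu)=8$ or $o(\uu)\le 4$; and if $\lambda=4$ the $\Z_8$ shift splits into two copies of $(\zero,\four,\zero,\four)$, handled by Lemmas \ref{lemm:ord8_u3}(\ref{ord8u3_case2}) and \ref{lemm:ord4_u3}(\ref{lemm:ord4_case2})---precisely the place where the isolated entries $\mu',\mu''$ are retained or merged. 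The $\Z_2$ block is immediate, since $(u_1,u_1+\bar\lambda\,\one)$ is balanced when $u_1$ is and equals $\zero$ once the coefficients forcing $o(\zz')\le 4$ make $u_1=\zero$. Inserting the $x_1,x_2$ contributions from Lemma \ref{lem:z_t1p1_0_1} and matching each block against the claimed normal form yields Items \ref{proper-1}--\ref{proper-3} for $\mathcal{G}^{t_1+1,0,1}$.

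The only genuinely delicate point is the \emph{exact} multiplicities in Items \ref{proper-2} and \ref{proper-3}, since the normal forms leave the parameters free. I would track these inside the same induction. In an order-$4$ branch $\lambda$ is even, and the number of $2$'s in the $\Z_4$ block of $\zz'$ is the contribution of the $x_1$-part (forced to be $\zero$ by Lemma \ref{lem:z_t1p1_0_1}, and equal to all-$\two$ when $\lambda\in\{2,6\}$) plus that of the four translated copies of $u_2$, where a translation by $\two$ interchanges the counts of $0$ and $2$; substituting $2^{t_1}=\alpha_1$ and $4^{t_1}=\alpha_1+2\alpha_2$ from Lemma \ref{relation:t1a1}, every branch collapses to $4^{t_1}$, the value required for $\mathcal{G}^{t_1+1,0,1}$. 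The count of $4$'s in $u_3$ is handled identically, using $8^{t_1}=\alpha_1+2\alpha_2+4\alpha_3$ to obtain $8^{t_1}$. Hence the main obstacle is organizational rather than conceptual: the proof is a systematic sweep over the pairs $(o(\uu),\lambda)$, and the chief risks---assigning a shift to the wrong structural lemma, or dropping a term in the multiplicity count---are both controlled by fixing the block decomposition above once and for all.
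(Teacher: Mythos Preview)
Your proposal is correct and follows essentially the same route as the paper: induction on $t_1$ via the decomposition $\zz'=\zz+\lambda\ww_{t_1+2}$, with Lemma~\ref{lem:z_t1p1_0_1} controlling the inserted blocks $x_1,x_2$ and Lemmas~\ref{lemm:uu+lambda}--\ref{lemm:ord4_u3} handling the translated copies of $u_2,u_3$ according to the residue class of $\lambda$; the paper organises the case split by $o(\vv)$ first and then by $(o(\zz),\lambda)$ rather than by $(o(\uu),\lambda)$ directly, but the content and the multiplicity bookkeeping via Lemma~\ref{relation:t1a1} are identical.

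One small point to patch: you write that the $\Z_8$ block is handled by Lemma~\ref{lemm:ord4_u3} whenever $o(\uu)\le 4$, but when $o(\uu)=2$ the vector $u_3$ lies in $\{0,4\}^{\alpha_3}$ with \emph{unequal} multiplicities ($8^{t_1-1}$ fours versus $8^{t_1-1}-4^{t_1-1}$ zeros), so it does not literally fit the hypothesis $\mu\in\{2,6\}$ of that lemma. The paper treats this subcase separately, writing $u_3$ as a permutation of $(4,\dots,4,4\Z_8,\dots,4\Z_8)$ and invoking Item~\ref{2z22} of Lemma~\ref{Lemma:cases} to see that the four shifts $\{0,2,4,6\}$ turn each part into full copies of $2\Z_8$; you should either cite that lemma here or remark that the proof of Lemma~\ref{lemm:ord4_u3} goes through verbatim if one also allows $\mu\in\{0,4\}$.
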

\begin{proof}
We prove this lemma by induction on $t_1\geq 1$. If $t_1=1$,  then by Lemma \ref{relation:t1a1}, $\alpha_1=2$, $\alpha_2=1$, $\alpha_3=1$, and ${\cal{G}}^{1,0,1}= \langle (0,1\mid 1\mid 1)\rangle$. Let $\uu=(u_1\mid u_2\mid u_3) \in {\cal{G}}^{1,0,1}$. Then, $\uu=\lambda(0,1\mid 1\mid 1)$, where  $ \lambda \in \Z_8$. Thus, we have that $u_1=\lambda(0,1)$, $u_2=(\lambda)$, and $u_3=(\lambda)$. If $o(\uu)=8$, then $\lambda \in \Z_8{\setminus}2\Z_8$. Therefore, $\uu$ satisfies property \ref{proper-1}. If $o(\uu)=4$, then $\lambda \in \{2,6\}$. In this case, $u_1=(0,0)$, $u_2=(2)$ contains the element in $2\Z_4{\setminus}\{0\}=\{2\}$ exactly $1=\frac{1}{2}(\frac{\alpha_1}{2}+\alpha_2)$ time and  $0=\frac{\alpha_2}{2}-\frac{\alpha_1}{4}$ times the element  $0$, and $u_3=(\lambda)$. Thus, $\uu$ satisfies property \ref{proper-2}. If $o(\uu)=2$, then $\lambda=4$. In this case, $u_1=(0,0)$, $u_2=(0)$, and $u_3=(4)$ contains the element in $4\Z_8{\setminus}\{0\}=\{4\}$ exactly $1=\frac{1}{4}(\frac{\alpha_1}{2}+\alpha_2+2\alpha_3)$ time and  $0=\frac{\alpha_3}{2}-\frac{1}{4}(\frac{\alpha_1}{2}+\alpha_2)$ times the element  $0$. Thus, $\uu$ satisfies property \ref{proper-3}. Therefore, the lemma is true for $t_1=1$.
 
 Assume that the lemma holds for the code ${\cH}^{t_1,0,1}$ of type $(\alpha_1, \alpha_2, \alpha_3; t_1,0,1)$ with $t_1\geq 1$.
By Lemma \ref{relation:t1a1}, we have that
 \begin{align}\label{eq:t1_t2_1}
      2^{t_1}=\alpha_1,
    4^{t_1}=\alpha_1+2\alpha_2,~\mbox{and}~
    8^{t_1}=\alpha_1+2\alpha_2+4\alpha_3.
\end{align}
 Now, we have to show that the lemma is also true for the code ${\cH}^{t_1+1,0,1}$.
 
 Let $\vv=(v_1\mid v_2 \mid v_3)\in {\cal{G}}^{t_1+1,0,1}$. We can write 
 \begin{align*}
     \vv=\zz+\lambda \ww,
 \end{align*}
 where  $\zz=(u_1,u_1 \mid x_1,u_2, u_2, u_2, u_2\mid x_2,u_3,\stackrel{8}{\dots},u_3)$, $\ww=(\zero, \one \mid \one,\zero, \one, \two,  \mathbf{3}\mid \one, \zero, \dots, \mathbf{7})$, $\uu=(u_1\mid u_2\mid u_3) \in {\cal{G}}^{t_1,0,1}$, $\lambda \in \Z_8$,   $x_1\in (2\Z_4)^{2^{t_1}}$ such that either $x_1=\zero$ or $x_1$ is a permutation of $(2\Z_4, \stackrel{2^{t_1-1}}{\dots}, 2\Z_4)$, and  $x_2\in (2\Z_8)^{4^{t_1}}$ such that either $x_2=\zero$ or $x_2$ is a permutation of $(2\Z_8, \stackrel{4^{t_1-1}}{\dots}, 2\Z_8)$ or $(4\Z_8, \stackrel{2\cdot 4^{t_1-1}}{\dots}, 4\Z_8)$. Then, $v_1=(u_1, u_1)+\lambda(\zero, \one)$ and,  for $i \in \{2,3\}$, 
\begin{align}\label{eq:vi}
     v_i=(x_{i-1}, u_i,\stackrel{2^i}{\dots}, u_i)+\lambda (\one, \zero, \dots, \mathbf{2^i-1}).
\end{align}
If $\mathbf{z}=\zero$, then $\vv=\lambda \ww$ and it is easy to see that $\vv$ satisfies property \ref{proper-1} if $\lambda\in \Z_8{\setminus} 2\Z_8=\{1,3,5,7\}$, property \ref{proper-2} if $\lambda\in\{2,6\}$, and property \ref{proper-3} if $\lambda=4$. 
Therefore, we focus on the case when $\mathbf{z}\neq \zero$. 

Case 1: Assume that $o(\vv)=8$. We have two subcases: when $o(\zz)$  is arbitrary and $\lambda \in \Z_8{\setminus}2\Z_8$, and  when  $o(\zz)=8$ and $\lambda \in 2\Z_8$. In both subcases, note that $v_1$ contains every element of $\Z_2$ the same number of times.  For the first subcase, we have that  $(u_i,\stackrel{2^i}{\dots}, u_i)+\lambda (\zero, \dots, \mathbf{2^i-1})$, for $i \in \{2,3\}$, is a permutation of $(\Z_{2^i},\stackrel{\alpha_i}{\dots},\Z_{2^i})$ by Lemma \ref{lemm:uu+lambda}. 
Thus, from (\ref{eq:vi}), $v_i$ is  a permutation of $(x_{i-1}+\lambda \one,\Z_{2^i},\stackrel{\alpha_i}{\dots}, \Z_{2^i})$. Since either $x_{i-1}+\lambda \one=\lambda \one$, or $x_{i-1}+\lambda \one$ is a permutation of $(\Z_{2^i}{\setminus}2\Z_{2^i},  \stackrel{2^{(i-1)(t_1-1)}}{\dots}, \Z_{2^i}{\setminus}2\Z_{2^i})$, $\vv$ satisfies property \ref{proper-1}.

For the second subcase when $o(\vv)=8$, that is, when $o(\zz)=8$ and $\lambda\in2\Z_8$, we have that $o(\uu)=8$ and, by Item 1 of Lemma \ref{lem:z_t1p1_0_1}, $x_{i-1}$ is a permutation  of $(2\Z_{2^{i}},\stackrel{2^{(i-1)(t_1-1)}}{\dots}, 2\Z_{2^{i}})$ for $i \in \{2,3\}$. 
By induction hypothesis, $\uu$ satisfies property \ref{proper-1} and then $u_2$ is a permutation of 
\begin{equation*}\label{eq:v2}
(\mu, \stackrel{m}{\dots}, \mu, 2\Z_{4}, \stackrel{n}{\dots}, 2\Z_{4}, \Z_{4}{\setminus}2\Z_{4},\stackrel{r}{\dots},\Z_{4}{\setminus}2\Z_{4}),
\end{equation*} where  $m,n,r\geq 0$ and $\mu \in \{1,3\}$, and $u_3$ is a permutation of 
\begin{equation*}\label{eq:v3}
(\mu', \stackrel{m'}{\dots}, \mu', \mu'', \stackrel{m'}{\dots}, \mu'', 2\Z_{8}, \stackrel{n'}{\dots}, 2\Z_{8}, \Z_{8}{\setminus}2\Z_{8},\stackrel{r'}{\dots},\Z_{8}{\setminus}2\Z_{8}),
\end{equation*}
where  $m', n',r'\geq 0$ and $\mu', \mu''\in \{1,3,5,7\}$. From (\ref{eq:vi}), $v_2=(x_1, u_2,u_2,u_2,u_2)+\lambda(\one,\zero,\one,\two,\mathbf{3}).$
If $\lambda \in \{ 0,4\}$, then $v_2=(x_1, u_2,u_2,u_2,u_2)$  in $\vv$ satisfies the same property as $u_2$ in $\uu$; that is, property \ref{proper-1}. If $\lambda\in\{2,6\}$, then $v_2=(x_1,u_2,u_2,u_2,u_2)+(\two,\zero, \two, \zero, \two)$.
By Item \ref{2z22} of Lemma \ref{Lemma:cases}, we have that $x_1+\two$ is a permutation of $(2\Z_{4}, \stackrel{2^{t_1-1}}{\dots},2\Z_{4})$.
Thus, by Lemma \ref{lemm:ord8_u2}, $v_2$ is a permutation of 
$$
(2\Z_{4}, \stackrel{4n+2^{t_1-1}}{\dots},2\Z_{4}, \Z_{4}{\setminus}2\Z_{4},\stackrel{4r+2m}{\dots},\Z_{4}{\setminus}2\Z_{4}).
$$
Therefore, for $\lambda\in2\Z_8$, $v_2$  satisfies property \ref{proper-1}. Now, we consider the coordinates in $v_3$. From (\ref{eq:vi}), $v_3=(x_2, u_3,\stackrel{8}{\dots},u_3)+\lambda(\one, \zero, \dots, \mathbf{7}).$ 
By Item \ref{2z22} of Lemma \ref{Lemma:cases}, we have that, for $\lambda\in 2\Z_8$, $x_2+\lambda \one$ is a permutation of $(2\Z_{8}, \stackrel{4^{t_1-1}}{\dots},  2\Z_{8})$. If $\lambda=0$, it is easy to see that $v_3$ satisfies property \ref{proper-1}.
Note that  $\lambda(\zero, \dots, \mathbf{7})$ is a permutation of $(\zero, \two, \four, \mathbf{6},\zero, \two, \four, \mathbf{6})$ if $\lambda \in \{2,6\}$, and a permutation of $(\zero, \four, \zero, \four, \zero, \four, \zero, \four)$ if $\lambda=4$. Thus, by Lemma \ref{lemm:ord8_u3}, $v_3$ satisfies property \ref{proper-1}.  Therefore, if $o(\vv)=8$, then $\vv$ satisfies property \ref{proper-1}.

Case 2: Assume that $o(\vv)=4$. We have two subcases: when $o(\zz)=4$ and $\lambda \in 2\Z_8$, and when $o(\zz)=2$ and $\lambda \in \{2,6\}$. 
For the first subcase, since $o(\zz)=4$, we have that $o(\uu)=4$. Moreover, $x_1=\zero$ and $x_2$ is a permutation of $(4\Z_8, \stackrel{2\cdot 4^{t_1-1}}{\dots}, 4\Z_8)$ by Item 2 of Lemma \ref{lem:z_t1p1_0_1}.
By induction hypothesis, $\uu$ satisfies property \ref{proper-2}. Then, $u_1=\zero$, $u_2$ contains the element in $2\Z_4{\setminus}\{0\}=\{2\}$ exactly $4^{t_1-1}$ times and $4^{t_1-1}-2^{t_1-1}$ times the element $0$,
   and  $u_3$ is a permutation of 
   \begin{equation*}\label{eq:4v2}
      (\mu,\stackrel{m}{\dots}, \mu,4\Z_8,\stackrel{n}{\dots},4\Z_8,2\Z_8{\setminus}4\Z_8, \stackrel{r}{\dots}, 2\Z_8{\setminus}4\Z_8) 
   \end{equation*}
     for some integers  $m, n,r\geq 0$ and $\mu \in \{2,6\}$.  Since $v_1=(u_1, u_1)+\lambda(\zero, \one)$, $u_1=\zero$, and $\lambda\in 2\Z_8$, we have that $v_1=\zero$.  From (\ref{eq:vi}), $v_2=(x_1, u_2,u_2,u_2,u_2)+\lambda(\one, \zero, \one, \two, \mathbf{3}).$
    If $\lambda \in \{0,4\}$, then $v_2=(x_1, u_2,u_2,u_2,u_2)$. Since $x_1=\zero$ is of length $2^{t_1}$, it is easy to see that $v_2$ in $\vv$ satisfies the same property as $u_2$ in $\uu$; that is, property \ref{proper-2}. If $\lambda \in \{2,6\}$, then 
    $v_2=(x_1, u_2, u_2,u_2,u_2)+(\two, \zero, \two, \zero, \two),$ where $x_1=\zero$ is of length $2^{t_1}$. 
    Note that $u_2+\two$ contains the element in $2\Z_4{\setminus}\{0\}=\{2\}$ as many times as $u_2$ contains the element $0$, and the element $0$ as many times as $u_2$ contains the element $2$. Thus, $v_2$ contains the element in $2\Z_4{\setminus}\{0\}=\{2\}$ exactly 
    $2^{t_1} + 2 (4^{t_1-1}) + 2( 4^{t_1-1}-2^{t_1-1})=4^{t_1}$ times
    and  
    $2 (4^{t_1-1}) + 2( 4^{t_1-1}-2^{t_1-1})=4^{t_1} -2^{t_1}$ times
    the element $0$. Therefore, for $\lambda \in 2\Z_8$, $v_2$  satisfies property \ref{proper-2}. Now, we consider the coordinates in $v_3$. From (\ref{eq:vi}), $v_3=(x_2, u_3,\stackrel{8}{\dots},u_3)+\lambda(\one, \zero, \dots, \mathbf{7})$.
    If $\lambda=0$, it is easy to see that $v_3$ satisfies property \ref{proper-2}.
    For $\lambda=4$, $x_2+\lambda \one$ is a permutation of  $(4\Z_8, \stackrel{2\cdot 4^{t_1-1}}{\dots}, 4\Z_8)$, and 
    for $\lambda \in \{2,6\}$, it is a permutation of  
    $$
    (2\Z_8{\setminus}4\Z_8, \stackrel{2\cdot 4^{t_1-1}}{\dots}, 2\Z_8{\setminus}4\Z_8).
    $$
    Note that  $\lambda(\zero, \dots, \mathbf{7})$ is a permutation of $(\zero, \two, \four, \mathbf{6},\zero, \two, \four, \mathbf{6})$ if $\lambda \in \{2,6\}$, and a permutation of $(\zero, \four, \zero, \four, \zero, \four, \zero, \four)$ if $\lambda=4$.
    Hence, by Lemma \ref{lemm:ord4_u3}, $v_3$ also satisfies property \ref{proper-2}, and so does $\vv$.
    
Now, we consider the second subcase, that is, when $o(\zz)=2$  and $\lambda \in \{2,6\}$. Since $o(\zz)=2$, we have that $o(\uu)=2$. Then, by Item 3 of Lemma \ref{lem:z_t1p1_0_1}, $x_1=\zero$ and $x_2=\zero$.  
By induction hypothesis, $\uu$ satisfies property \ref{proper-3}, so
 $u_1=\zero$,  $u_2=\zero$, and  $u_3$ contains the element in $4\Z_8{\setminus}\{0\}=\{4\}$ exactly $m=8^{t_1-1}$ times and   $m'=8^{t_1-1}-4^{t_1-1} $ times the element $0$. Since $v_1=(u_1, u_1)+\lambda(\zero, \one)$, $u_1=\zero$, and  $\lambda \in \{2,6\}$, we have that $v_1=\zero$. From (\ref{eq:vi}), $v_2=(x_1, u_2,u_2,u_2,u_2)+(\two, \zero, \two, \zero, \two)$. Since $x_1=\zero$ and $u_2=\zero$, of length $\alpha_1$ and $\alpha_2$, respectively, we have that $v_2=(\two, \zero, \two, \zero, \two).$ Therefore, $v_2$ contains the element in $2\Z_4{\setminus}\{0\}=\{2\}$ exactly $\alpha_1+2\alpha_2=4^{t_1}$ times and  $2\alpha_2= 4^{t_1}-2^{t_1}$ times  the element $0$, by (\ref{eq:t1_t2_1}). Therefore, $v_2$  satisfies property \ref{proper-2}. Now, we consider the coordinates in $v_3$. From (\ref{eq:vi}), $v_3=(x_2, u_3,\stackrel{8}{\dots},u_3)+\lambda(\one, \zero, \dots, \mathbf{7}).$ Since $x_2=\zero$,  $x_2+\lambda \one=(\lambda,\stackrel{4^{t_1}}{\dots},\lambda)$. Note that $u_3$ is a permutation of 
 $$
 (4,\stackrel{m-m'}{\dots}, 4, 4\Z_8,\stackrel{m'}{\dots}, 4\Z_8).
 $$ 
Moreover, since $\lambda \in \{2,6\}$, $\lambda(\zero,\dots, \mathbf{7})$ is a permutation of $(\zero, \two, \four, \mathbf{6}, \zero, \two, \four, \mathbf{6})$. Thus, by Item \ref{2z22} of  Lemma \ref{Lemma:cases}, $(u_3,\stackrel{8}{\dots},u_3)+(\zero,\two, \four, \mathbf{6},\zero,\two, \four, \mathbf{6})$ is a permutation of $$
 (2\Z_8,\stackrel{2(m-m')+4m'}{\dots}, 2\Z_8).
 $$
 Thus, $v_3$ is a permutation of $(\lambda,\stackrel{4^{t_1}}{\dots},\lambda, 2\Z_8,\stackrel{2(m-m')+4m'}{\dots}, 2\Z_8)$ with $\lambda \in \{2,6\}$, and hence $v_3$ also  satisfies property \ref{proper-2} and so does $\vv$. Therefore, if $o(\vv)=4$, then $\vv$ satisfies property \ref{proper-2}.

Case 3: Assume that $o(\vv)=2$. Then,  $o(\zz)=2$ and $\lambda \in \{0,4\}$. 
Since $o(\zz)=2$, then $o(\uu)=2$.  Moreover, $x_1= \zero$ and $x_2=\zero$ by Item 3 of Lemma \ref{lem:z_t1p1_0_1}. By induction hypothesis, $\uu$ satisfies property \ref{proper-3}, and then $u_1=\zero$, $u_2=\zero$, and $u_3$ contains the element in $4\Z_8{\setminus}\{0\}=\{4\}$ exactly $8^{t_1-1}$ times and  $8^{t_1-1}-4^{t_1-1} $ times the element  $0$. Since $v_1=(u_1, u_1)+\lambda(\zero, \one)$, $v_1=\zero$. From (\ref{eq:vi}), $v_2=(x_1, u_2,u_2,u_2,u_2)+\lambda(\one, \zero, \one, \two, \mathbf{3})$, where $x_1=\zero$ and $u_2=\zero$, so $v_2=\zero$.
  From (\ref{eq:vi}), $v_3=(x_2, u_3,\stackrel{8}{\dots},u_3)+\lambda(\one,\zero, \dots, \mathbf{7})$, where $x_2=\zero$ is of length $4^{t_1}$. If $\lambda=0$, it is easy to see that $v_3$ satisfies property \ref{proper-3}. If $\lambda=4$, $v_3=(x_2, u_3,\stackrel{8}{\dots},u_3)+(\four,\zero, \four, \zero, \four,\zero, \four,\zero, \four)$. Note that $u_3+\four$ contains the element in $4\Z_8{\setminus}\{0\}=\{4\}$ as many times as $u_3$ contains the element $0$, and the element $0$ as many times as $u_3$ contains the element $4$.
Then, $v_3$ contains the element $4$ exactly 
$4^{t_1}+4( 8^{t_1-1}) + 4(8^{t_1-1} - 4^{t_1-1})=8^{t_1} $ times and 
$4( 8^{t_1-1}) + 4(8^{t_1-1} - 4^{t_1-1})=8^{t_1}-4^{t_1} $
the element 0. Therefore, $\vv$ satisfies property \ref{proper-3}. This completes the proof.
\end{proof}

\begin{lemma}\label{lemma:same_num2}
Let $t_1\geq 1$ and $t_2\geq 0$ be integers. Let ${\cH}^{t_1,t_2,1}$ be the $\Z_2\Z_4 \Z_8$-additive code of type $(\alpha_1,\alpha_2,\alpha_3; t_1,t_2,1)$ generated by the matrix $A^{t_1,t_2,1}$. 
Let $\uu=(u_1\mid u_2\mid u_3) \in {\cal{G}}^{t_1,t_2,1}$.

 \begin{enumerate}
 \item \label{proper_1} If $o(\uu)=8$, then $\uu$ has the following property:
 \begin{enumerate}
     \item \label{proper_1a}$u_1$ contains every element of $\Z_2$ the same number of times, $u_2$ is a permutation of $(\mu, \stackrel{m}{\dots}, \mu, 2\Z_{4}, \stackrel{n}{\dots}, 2\Z_{4}, \Z_{4}{\setminus}2\Z_{4},\stackrel{r}{\dots},\Z_{4}{\setminus}2\Z_{4})$ for some integers  $m,n,r\geq 0$ and $\mu \in \{1,3\}$, and $u_3$ is a permutation of $(\mu', \stackrel{m'}{\dots}, \mu', \mu'', \stackrel{m'}{\dots}, \mu'', 2\Z_{8}, \stackrel{n'}{\dots}, 2\Z_{8}, \Z_{8}{\setminus}2\Z_{8},\stackrel{r'}{\dots},\Z_{8}{\setminus}2\Z_{8})$ for some integers  $m', n',r'\geq 0$ and $\mu, \mu'\in \{1,3,5,7\}$.
 \end{enumerate}
  \item \label{proper_2} If $o(\uu)=4$, then $\uu$ has one of the following properties:
  \begin{enumerate}
      \item \label{proper_2a}$u_1=\zero$, $u_2$ contains the element in $2\Z_4{\setminus}\{0\}=\{2\}$ exactly $\frac{1}{2}(\frac{\alpha_1}{2}+\alpha_2)=4^{t_1+t_2-1}$ times and  $\frac{\alpha_2}{2}-\frac{\alpha_1}{4}=4^{t_1+t_2-1}-2^{t_1+t_2-1}$ times the element $0$,
   and  $u_3$ is a permutation of $(\mu,\stackrel{m}{\dots}, \mu,4\Z_8,\stackrel{n}{\dots},4\Z_8,2\Z_8{\setminus}4\Z_8, \stackrel{r}{\dots}, 2\Z_8{\setminus}4\Z_8)$ for some integers  $m, n,r\geq 0$ and $\mu \in \{2,6\}$.
    \item \label{proper_2b} $u_1$ contains every element of $\Z_2$ the same number of times,  $u_2$ is a permutation of  $(\mu,\stackrel{m}{\dots},\mu,2\Z_4,\stackrel{n}{\dots},2\Z_4,\Z_4{\setminus}2\Z_4, \stackrel{r}{\dots}, \Z_4{\setminus}2\Z_4)$ for some integers  $m, n,r\geq 0$ and $\mu \in \{1,3\}$, 
    and 
    $u_3$ is a permutation of $(4\Z_{8}, \stackrel{t}{\dots}, 4\Z_{8}, 2\Z_{8}{\setminus}4\Z_{8},\stackrel{t'}{\dots},2\Z_{8}{\setminus}4\Z_{8})$ for some integers $t, t'\geq 0$.
    
  \end{enumerate}
  \item \label{proper_3} If $o(\uu)=2$, then $\uu$ has one of the following properties:
  \begin{enumerate}
   \item \label{proper_3a}$u_1=\zero$, $u_2=\zero$, and $u_3$ contains the element in $4\Z_8{\setminus}\{0\}=\{4\}$ exactly $\frac{1}{4}(\frac{\alpha_1}{2}+\alpha_2+2\alpha_3)=8^{t_1-1}4^{t_2}$ times and  $\frac{\alpha_3}{2}-\frac{1}{4}(\frac{\alpha_1}{2}+\alpha_2)=8^{t_1-1}4^{t_2}-4^{t_1+t_2-1}$ times the element  $0$.
  \item \label{proper_3b}$u_1=\zero$,  $u_2$ contains the element in $2\Z_4{\setminus}\{0\}=\{2\}$ exactly $\frac{1}{2}(\frac{\alpha_1}{2}+\alpha_2)=4^{t_1+t_2-1}$ times and  $\frac{\alpha_2}{2}-\frac{\alpha_1}{4}=4^{t_1+t_2-1}-2^{t_1+t_2-1}$ times the element  $0$, and $u_3$ is a permutation of $(4\Z_8,\stackrel{m}{\dots},4\Z_8)$ for some $m\geq 0$.
  \end{enumerate}
   \end{enumerate}
\end{lemma}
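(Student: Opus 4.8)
The plan is to argue by induction on $t_2\geq 0$. The base case $t_2=0$ is exactly Lemma \ref{lemma:same_num1}: there no row of order $4$ has yet been added, so for each order only the first alternative (\ref{proper_1a}, \ref{proper_2a}, \ref{proper_3a}) can occur and the present statement collapses to the previous one. In particular the stated multiplicities $4^{t_1+t_2-1}$ and $8^{t_1-1}4^{t_2}$ reduce to $4^{t_1-1}$ and $8^{t_1-1}$ when $t_2=0$, matching properties \ref{proper-2} and \ref{proper-3} of Lemma \ref{lemma:same_num1}.

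For the inductive step I would pass from $\mathcal{G}^{t_1,t_2,1}$ to $\mathcal{G}^{t_1,t_2+1,1}$ as in the proof of Lemma \ref{lemma:same_num1}. Writing $\ww=(\zero,\one \mid \one,\zero,\one,\two,\mathbf{3} \mid \zero,\two,\four,\mathbf{6})$ for the newly added row of order $4$, every codeword of $\mathcal{G}^{t_1,t_2+1,1}$ has the form $\vv=\zz+\lambda\ww$ with $\lambda\in\{0,1,2,3\}$ and $\zz=(u_1,u_1 \mid x_1,u_2,u_2,u_2,u_2 \mid u_3,u_3,u_3,u_3)$, where $\uu=(u_1\mid u_2\mid u_3)\in\mathcal{G}^{t_1,t_2,1}$ and $x_1\in(2\Z_4)^{2^{t_1+t_2}}$. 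Lemma \ref{lem:z_t1t2p1_1} fixes the shape of $x_1$ from $o(\zz)$, the induction hypothesis fixes the shape of $\uu$, and then $v_1=(u_1,u_1)+\lambda(\zero,\one)$ while $v_2$ and $v_3$ are obtained by adding $\lambda$ times the corresponding block of $\ww$ to the repeated blocks of $\zz$. I would dispose of the degenerate case $\zz=\zero$ (where $\vv=\lambda\ww$) directly, and otherwise split on $o(\vv)\in\{8,4,2\}$.

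The one structural novelty compared with Lemma \ref{lemma:same_num1} is that the $\Z_8$ block of $\ww$ lies entirely in $2\Z_8$, so $\lambda\ww$ never introduces an odd $\Z_8$ entry. Consequently $o(\vv)=8$ forces $o(\zz)=o(\uu)=8$, and property \ref{proper_1a} is inherited: $v_1$ is balanced whether $\lambda$ is even or odd, while $v_2$ and $v_3$ are handled by Lemmas \ref{lemm:uu+lambda}, \ref{lemm:ord8_u2}, and \ref{lemm:ord8_u3} according to the parity of $\lambda$. The genuinely new behaviour appears when $o(\vv)\in\{4,2\}$ and $\lambda$ is odd: then $v_1=(u_1,u_1+\one)$ is balanced rather than $\zero$, which is precisely what forces the second alternative. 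For $o(\vv)=4$ I would separate $\lambda$ even (so $v_1$ inherits the zero/balanced status of $u_1$ and, via Lemma \ref{lemm:ord4_u3}, $\vv$ keeps the same alternative as $\uu$) from $\lambda$ odd (so $v_1$ is balanced and $v_3=(u_3,u_3,u_3,u_3)+\lambda(\zero,\two,\four,\mathbf{6})$ collapses, up to a reordering of blocks, into $2\Z_8$, yielding property \ref{proper_2b}); the subcase $o(\zz)=2$, $\lambda\in\{1,3\}$ is treated identically. For $o(\vv)=2$ one has $o(\zz)=2$ and $\lambda\in\{0,2\}$: here $\lambda=0$ gives \ref{proper_3a}, while $\lambda=2$ gives \ref{proper_3b}, since $v_2$ then acquires the element $2$ from the $(\two,\zero,\two,\zero,\two)$ contribution while $v_3=(u_3,u_3,u_3,u_3)+(\zero,\four,\zero,\four)$ stays inside $4\Z_8$.

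The main obstacle is the multiplicity bookkeeping rather than the order-theoretic skeleton. One must verify that the counts of the element $2$ in $u_2$ and of the element $4$ in $u_3$ transform correctly under the doubling of the length and the translation by $\lambda\ww$, matching the closed forms in the statement. These reconcile using the relations of Lemma \ref{relation:t1a1} together with the type formulas in (\ref{t1_t2p1_1}), and the elementary observation (via Items \ref{2z22} and \ref{2z2_} of Lemma \ref{Lemma:cases}) that adding $\two$ to a $\Z_4$ coordinate, or $\four$ to a $\Z_8$ coordinate, interchanges $0$ with $2$ (respectively $0$ with $4$): a block contributing $c$ copies of the nonzero element contributes its complementary count after translation. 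Keeping this correspondence consistent across all even and odd $\lambda$ subcases is where essentially all of the work lies.
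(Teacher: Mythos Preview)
Your plan is essentially the paper's proof: induction on $t_2$ with base case Lemma~\ref{lemma:same_num1}, the decomposition $\vv=\zz+\lambda\ww$ with Lemma~\ref{lem:z_t1t2p1_1} controlling $x_1$, and the same case split on $o(\vv)$ handled via Lemmas~\ref{lemm:uu+lambda}, \ref{lemm:ord8_u2}, \ref{lemm:ord8_u3}, and \ref{lemm:ord4_u3}. Your structural observation that the $\Z_8$ block of $\ww$ lies in $2\Z_8$, forcing $o(\vv)=8\Rightarrow o(\zz)=8$, is exactly the point the paper uses.

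One slip: in the case $o(\vv)=2$ you write ``$\lambda=0$ gives \ref{proper_3a}, while $\lambda=2$ gives \ref{proper_3b}''. This is only right when $\uu$ itself satisfies \ref{proper_3a}. By the induction hypothesis (for $t_2\geq 1$) $\uu$ may already satisfy \ref{proper_3b}, and then $\lambda=0$ yields $\vv$ satisfying \ref{proper_3b}, not \ref{proper_3a} (indeed $v_2=(x_1,u_2,u_2,u_2,u_2)$ with $x_1=\zero$ inherits the nonzero $2$-count from $u_2$). The paper accordingly splits the $o(\vv)=2$ analysis into the two sub-cases $\uu\in$\,\ref{proper_3a} and $\uu\in$\,\ref{proper_3b}, just as you correctly did for $o(\vv)=4$. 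This is a bookkeeping omission rather than a structural error; once you add that second branch the argument goes through exactly as in the paper.
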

\begin{proof}
We prove this lemma by induction on $t_2\geq 0$. The lemma is true for the code  ${\cH}^{t_1,0,1}$ by Lemma \ref{lemma:same_num1}.
Assume that the lemma holds for the code ${\cH}^{t_1,t_2,1}$ of type $(\alpha_1, \alpha_2, \alpha_3; t_1,t_2,1)$ with $t_1\geq 1$ and $t_2\geq 0$.
By Lemma \ref{relation:t1a1}, we have that
 \begin{align}\label{eq:t1t2}
      2^{t_1+t_2}=\alpha_1, 
    4^{t_1+t_2}=\alpha_1+2\alpha_2,~\mbox{and}~ 
    8^{t_1}4^{t_2}=\alpha_1+2\alpha_2+4\alpha_3.
\end{align}
 Now, we have to show that the lemma is also true for the code ${\cH}^{t_1,t_2+1,1}$. 
 
  Let $\vv=(v_1\mid v_2 \mid v_3)\in {\cal{G}}^{t_1,t_2+1,1}$. We can write 
 \begin{align*}
     \vv=\zz+\lambda \ww,
 \end{align*}
 where  $\zz=(u_1,u_1 \mid x_1,u_2, u_2, u_2, u_2\mid u_3,u_3, u_3, u_3)$, $\ww=(\zero, \one \mid \one, \zero, \one, \two, \mathbf{3}\mid  \zero, \two, \four, \mathbf{6})$, $\uu=(u_1\mid u_2\mid u_3) \in {\cal{G}}^{t_1,t_2,1}$, $\lambda \in \{0,1,2,3\}$, and $x_1\in (2\Z_4)^{2^{t_1+t_2}}$ such that either $x_1=\zero$ or  a permutation  of $(2\Z_4,\stackrel{2^{t_1+t_2-1}}{\dots}, 2\Z_4)$. Then,  
\begin{equation}\label{eq:vi_}
\begin{split}
   v_1&=(u_1, u_1)+\lambda(\zero, \one),\\
     v_2&=(x_1,u_2,u_2,u_2,u_2)+\lambda (\one, \zero, \one, \two, \mathbf{3}),\\
     v_3&=(u_3,u_3,u_3,u_3)+\lambda(\zero, \two, \four, \mathbf{6}). 
\end{split}
\end{equation}
If $\mathbf{z}=\zero$, then $\vv=\lambda \ww$. It is easy to see that $\vv$ satisfies property \ref{proper_2b} if $\lambda\in \{1,3\}$ and  property \ref{proper_3b} if $\lambda=2$.
Therefore, we focus on the case  when $\mathbf{z}\neq \zero$. 

Case 1: Assume that $o(\vv)=8$. Then, $o(\zz)=8$ and $\lambda \in \{0,1,2,3\}$. We have that $o(\uu)=8$ and,  by Item 1 of Lemma \ref{lem:z_t1t2p1_1}, $x_1$ is a permutation of $(2\Z_4,\stackrel{2^{t_1+t_2-1}}{\dots}, 2\Z_4)$. By induction hypothesis, $\uu$ satisfies property \ref{proper_1a}. Then, $u_1$ contains every element of $\Z_2$ the same number of times, $u_2$ is a permutation of 
\begin{equation*}\label{eq:v2_}
(\mu, \stackrel{m}{\dots}, \mu, 2\Z_{4}, \stackrel{n}{\dots}, 2\Z_{4}, \Z_{4}{\setminus}2\Z_{4},\stackrel{r}{\dots},\Z_{4}{\setminus}2\Z_{4}),
\end{equation*} where  $m,n,r\geq 0$ and $\mu \in \{1,3\}$, and $u_3$ is a permutation of 
\begin{equation*}\label{eq:v3_}
(\mu', \stackrel{m'}{\dots}, \mu', \mu'', \stackrel{m'}{\dots}, \mu'', 2\Z_{8}, \stackrel{n'}{\dots}, 2\Z_{8}, \Z_{8}{\setminus}2\Z_{8},\stackrel{r'}{\dots},\Z_{8}{\setminus}2\Z_{8}),
\end{equation*}
where  $m', n',r'\geq 0$ and $\mu', \mu''\in \{1,3,5,7\}$. First, since $ v_1=(u_1, u_1)+\lambda(\zero, \one)$, $v_1$ contains every element of $\Z_2$ the same number of times, for any $\lambda \in \{0,1,2,3\}$. Second, from (\ref{eq:vi_}), $v_2=(x_1, u_2,u_2,u_2,u_2)+\lambda(\one,\zero,\one,\two,\mathbf{3}).$ If $\lambda=0$, then $v_2$ clearly satisfies \ref{proper_1a}. 
If $\lambda \in \{1,3\}$, then  we have that $(u_2,u_2,u_2,u_2)+\lambda(\zero,\one,\two,\mathbf{3})$ is a permutation of $(\Z_4,\stackrel{\alpha_2}{\dots}, \Z_4)$ by Lemma \ref{lemm:uu+lambda}. For $\lambda \in \{1,3\}$, since $x_1+\lambda \one$ is a permutation of $(\Z_4{\setminus}2\Z_4, \stackrel{2^{t_1+t_2-1}}{\dots},\Z_4{\setminus}2\Z_4)$ by Item  \ref{2z2_} of Lemma \ref{Lemma:cases}, we have that 
$v_2$  satisfies property \ref{proper_1a}. 
If $\lambda=2$, $v_2=(x_1, u_2,u_2,u_2,u_2)+(\two,\zero,\two,\zero,\two).$ By Item  \ref{2z22} of Lemma \ref{Lemma:cases}, we have that $x_1+\two$ is a permutation of $(2\Z_{4}, \stackrel{2^{t_1+t_2-1}}{\dots},2\Z_{4})$. Therefore, by Lemma \ref{lemm:ord8_u2}, $v_2$ is a permutation of $(2\Z_{4}, \stackrel{4n+2^{t_1+t_2-1}}{\dots}, 2\Z_{4}, \Z_{4}{\setminus}2\Z_{4},\stackrel{4r+2m}{\dots},\Z_{4}{\setminus}2\Z_{4})$ and then $v_2$  satisfies property \ref{proper_1a}. Finally, we consider the coordinates in $v_3$. From (\ref{eq:vi_}), $v_3=(u_3,u_3, u_3, u_3)+\lambda(\zero, \two, \four, \mathbf{6}).$ If $\lambda=0$, then $v_3$ clearly satisfies \ref{proper_1a}. Note that $\lambda(\zero, \two, \four, \mathbf{6})=(\zero, \four, \zero, \four)$ if $\lambda=2$ and $\lambda(\zero, \two, \four, \mathbf{6})$ is a permutation of $(\zero, \two, \four, \mathbf{6})$ if $\lambda \in \{1,3\}$. Therefore, by Lemma \ref{lemm:ord8_u3}, $v_3$  satisfies property  \ref{proper_1a}, and so does $\vv$.

Case 2: Assume that $o(\vv)=4$. We have two subcases: when $o(\zz)=4$ and $\lambda \in \{0,1,2,3\}$, and when $o(\zz)=2$ and $\lambda \in \{1,3\}$. For the first subcase, since $o(\zz)=4$, $o(\uu)=4$. By induction hypothesis, $\uu$ satisfies property \ref{proper_2a} or \ref{proper_2b}. Assume that $\uu$ satisfies property \ref{proper_2a}. Then, $u_1=\zero$, $u_2$ contains the element in $2\Z_4{\setminus}\{0\}=\{2\}$ exactly $ 4^{t_1+t_2-1}$ times and $ 4^{t_1+t_2-1} - 2^{t_1+t_2-1}$ times the element $0$,
   and  $u_3$ is a permutation of 
   \begin{equation*}\label{eq:4v2_}
      (\mu,\stackrel{m}{\dots}, \mu,4\Z_8,\stackrel{n}{\dots},4\Z_8,2\Z_8{\setminus}4\Z_8, \stackrel{r}{\dots}, 2\Z_8{\setminus}4\Z_8) 
   \end{equation*}
     for some integers  $m, n,r\geq 0$ and $\mu \in \{2,6\}$. Note that, in this case,  $x_1=\zero$ by Item 2 of Lemma \ref{lem:z_t1t2p1_1}. 
     If $\lambda=0$, then it is easy to see that $\vv$ satisfies property \ref{proper_2a}. If $\lambda=2$, we show that $\vv$ satisfies property \ref{proper_2a}.  Since $v_1=(u_1, u_1)+\lambda(\zero, \one)$, $u_1=\zero$, and $\lambda=2$, we have that $v_1=\zero$. 
     From (\ref{eq:vi_}), $v_2=(x_1, u_2,u_2,u_2,u_2)+(\two, \zero, \two, \zero, \two)$, where $x_1=\zero$ is of length $2^{t_1+t_2}$. 
     Note that $u_2+\two$ contains the element in $2\Z_4{\setminus}\{0\}=\{2\}$ as many times as $u_2$ contains the element $0$, and the element $0$ as many times as $u_2$ contains the element $2$. Thus, $v_2$ contains the element in $2\Z_4{\setminus}\{0\}=\{2\}$ exactly 
     $2^{t_1+t_2} + 2(4^{t_1+t_2-1}) + 2(4^{t_1+t_2-1} - 2^{t_1+t_2-1} ) =4^{t_1+t_2}$ times
     and  $2(4^{t_1+t_2-1}) + 2(4^{t_1+t_2-1} - 2^{t_1+t_2-1} ) =4^{t_1+t_2}-2^{t_1+t_2}$ times
     the element $0$, 
     so $v_2$ satisfies property \ref{proper_2a}. 
     From (\ref{eq:vi_}), $v_3=( u_3,u_3,u_3,u_3)+( \zero, \four, \zero, \four)$. By Item \ref{lemm:ord4_case2} of Lemma \ref{lemm:ord4_u3}, $v_3$ is a permutation of 
     $$
     (4\Z_8,\stackrel{4n}{\dots},4\Z_8,2\Z_8{\setminus}4\Z_8, \stackrel{4r+2m}{\dots}, 2\Z_8{\setminus}4\Z_8).
     $$
     Therefore, for $\lambda =2$, $\vv$ satisfies property \ref{proper_2a}. Finally, if $\lambda \in \{1,3\}$, we show that $\vv$ satisfies property \ref{proper_2b}. 
     Since $v_1=(u_1, u_1)+\lambda(\zero, \one)$, $u_1=\zero$, and $\lambda \in \{1,3\}$, we have that $v_1$ contains every element of $\Z_2$ the same number of times. 
     From (\ref{eq:vi_}), $v_2=(x_1, u_2,u_2,u_2,u_2)+\lambda(\one, \zero, \one, \two, \mathbf{3})$, where $x_1=\zero$ is of length $2^{t_1+t_2}$. Since $\lambda \in \{1,3\}$, by Lemma \ref{lemm:uu+lambda}, we have that $v_2$ is a permutation of $(\lambda, \stackrel{2^{t_1+t_2}}{\dots},\lambda,\Z_4,\stackrel{\alpha_2}{\dots}, \Z_4)$.
     From (\ref{eq:vi_}), $v_3=(u_3,u_3, u_3, u_3)+\lambda(\zero, \two, \four, \mathbf{6}).$ Note that, for $\lambda \in \{1,3\}$,  $\lambda(\zero, \two, \four, \mathbf{6})$ is a permutation of $(\zero, \two, \four, \mathbf{6})$. Thus, by Item \ref{lemm:ord4_case1} of Lemma \ref{lemm:ord4_u3}, $v_3$  satisfies property  \ref{proper_2b}, and so does $\vv$. Therefore, if $o(\uu)=4$ and $\uu$ satisfies property 
     \ref{proper_2a}, we have that $\vv$ satisfies either property \ref{proper_2a} or \ref{proper_2b}. 
     
     We continue with the first subcase, when $o(\zz)=4$ and $\lambda \in \{0,1,2,3\}$. Again, we have that  $o(\uu)=4$. Now, we assume that $\uu$ satisfies property 
     \ref{proper_2b}. 
     Then, $u_1$ contains every element of $\Z_2$ the same number of times,  $u_2$ is a permutation of 
     $$
     (\mu,\stackrel{m}{\dots},\mu,2\Z_4,\stackrel{n}{\dots},2\Z_4,\Z_4{\setminus}2\Z_4, \stackrel{r}{\dots}, \Z_4{\setminus}2\Z_4)
     $$ 
    for some integers  $m, n,r\geq 0$ and $\mu \in \{1,3\}$,
    and $u_3$ is a permutation of 
    $(4\Z_{8}, \stackrel{t}{\dots}, 4\Z_{8}, 2\Z_{8}{\setminus}4\Z_{8},\stackrel{t'}{\dots},2\Z_{8}{\setminus}4\Z_{8})$
    for some integers $t, t'\geq 0$. 
   Note that, in this case,  $x_1$ is a permutation of $(2\Z_4, \stackrel{2^{t_1+t_2-1}}{\dots}, 2\Z_4)$ by Item 2 of Lemma \ref{lem:z_t1t2p1_1}.
    Now, we show that $\vv$ satisfies property \ref{proper_2b}.  
    Since $v_1=(u_1, u_1)+\lambda(\zero, \one)$ and $u_1$ contains every element of $\Z_2$ the same number of times, we have that $v_1$ contains every element of $\Z_2$ the same number of times, for any $\lambda \in \{0,1,2,3\}$. 
    From (\ref{eq:vi_}), $v_2=(x_1, u_2,u_2,u_2,u_2)+\lambda(\one, \zero, \one, \two, \mathbf{3})$. 
If $\lambda=0$, it is clear that $v_2$ satisfies property \ref{proper_2b}. Note that $x_1+\lambda \one$ is a permutation of $(2\Z_4, \stackrel{2^{t_1+t_2-1}}{\dots}, 2\Z_4)$ if $\lambda=2$, and  a permutation of $(\Z_4{\setminus}2\Z_4, \stackrel{2^{t_1+t_2-1}}{\dots}, \Z_4{\setminus}2\Z_4)$ if $\lambda \in \{1,3\}$.   If $\lambda=2$, then by Lemma \ref{lemm:ord8_u2}, $(u_2,u_2,u_2,u_2)+(\zero, \two, \zero, \two)$ is a permutation of 
    $$
    (2\Z_{4}, \stackrel{4n}{\dots}, 2\Z_{4}, \Z_{4}{\setminus}2\Z_{4},\stackrel{4r+2m}{\dots},\Z_{4}{\setminus}2\Z_{4}).
    $$
    If $\lambda \in \{1,3\}$, then by Lemma \ref{lemm:uu+lambda}, $(u_2,u_2,u_2,u_2)+\lambda(\zero, \one, \two, \mathbf{3})$ is a permutation of $(\Z_4,\stackrel{\alpha_2}{\dots}, \Z_4)$.
    Therefore, $v_2$ satisfies property \ref{proper_2b}. From (\ref{eq:vi_}), $v_3=(u_3,u_3, u_3, u_3)+\lambda(\zero, \two, \four, \mathbf{6}).$ If $\lambda=0$, it is clear that $v_3$ satisfies property \ref{proper_2b}. Note that   $\lambda(\zero, \two, \four, \mathbf{6})$ is a permutation of $(\zero, \two, \four, \mathbf{6})$ if $\lambda \in \{1,3\}$, and $\lambda(\zero, \two, \four, \mathbf{6})=(\zero, \four, \zero, \four)$ if $\lambda=2$. Therefore, by Lemma \ref{lemm:ord4_u3}, $v_3$  satisfies property \ref{proper_2b}, and so does $\vv$.
    
    Now, we consider the second subcase when $o(\vv)=4$, that is, when $o(\zz)=2$ and $\lambda\in \{1,3\}$. Since $o(\zz)=2$, $o(\uu)=2$. By induction hypothesis, $\uu$ satisfies property \ref{proper_3a} or \ref{proper_3b}. Assume that $\uu$ satisfies property \ref{proper_3a}. Then, $u_1=\zero$,  $u_2=\zero$, and  $u_3$ contains the element in $4\Z_8{\setminus}\{0\}=\{4\}$ exactly 
    $m=8^{t_1-1}4^{t_2}$
    times and   
    $m'=8^{t_1-1}4^{t_2}-4^{t_1+t_2-1}$
    times the element $0$.
     By Item 3 of Lemma \ref{lem:z_t1t2p1_1}, we have that  $x_1=\zero$. 
      Since $v_1=(u_1, u_1)+\lambda(\zero, \one)$, $u_1=\zero$, and $\lambda \in \{1,3\}$, we have that $v_1$ contains every element of $\Z_2$ the same number of times.  From (\ref{eq:vi_}), $v_2=(x_1, u_2,u_2,u_2,u_2)+\lambda(\one, \zero, \one, \two, \mathbf{3})$, where $x_1=\zero$ is of length $2^{t_1+t_2}$. 
    By Lemma \ref{lemm:uu+lambda}, we have that $v_2$ is a permutation of 
    $$
    (\lambda, \stackrel{2^{t_1+t_2}}{\dots},\lambda,\Z_4,\stackrel{\alpha_2}{\dots}, \Z_4),
    $$ where $\lambda \in \{1,3\}$.
    From (\ref{eq:vi_}), $v_3=(u_3,u_3, u_3, u_3)+\lambda(\zero, \two, \four, \mathbf{6}).$ Note that $u_3$ is a permutation of $(4,\stackrel{m-m'}{\dots}, 4, 4\Z_8,\stackrel{m'}{\dots}, 4\Z_8)$ and, since $\lambda \in \{1,3\}$,  $\lambda(\zero, \two, \four, \mathbf{6})$ is a permutation of $(\zero, \two, \four, \mathbf{6})$. Thus, by Item \ref{2z22} of Lemma \ref{Lemma:cases}, $v_3=(u_3,u_3, u_3, u_3)+(\zero,\two, \four, \mathbf{6})$ is a permutation of $(2\Z_8,\stackrel{m+m'}{\dots}, 2\Z_8)$, so  $v_3$ satisfies property \ref{proper_2b}, and so does $\vv$. Therefore, if $o(\uu)=2$ and $\uu$ satisfies property 
     \ref{proper_3a}, we have that $\vv$ satisfies property \ref{proper_2b}. 
     
 We continue with the second subcase, when $o(\zz)=2$ and $\lambda \in \{1,3\}$. Again, we have that  $o(\uu)=2$. Now, we assume that $\uu$ satisfies property     
     \ref{proper_3b}. 
    Then, $u_1=\zero$,  $u_2$ contains the element in $2\Z_4{\setminus}\{0\}=\{2\}$ exactly $4^{t_1+t_2-1}$ 
    times and  $4^{t_1+t_2-1}-2^{t_1+t_2-1}$ 
    times the element  $0$, and $u_3$ is a permutation of $(4\Z_8,\stackrel{m}{\dots},4\Z_8)$ for some $m\geq 0$. By Item 3 of Lemma \ref{lem:z_t1t2p1_1}, we have that  $x_1=\zero$. 
    Since $v_1=(u_1, u_1)+\lambda(\zero, \one)$, $u_1=\zero$, and $\lambda \in \{1,3\}$, we have that $v_1$ contains every element of $\Z_2$ the same number of times. 
    From (\ref{eq:vi_}), $v_2=(x_1, u_2,u_2,u_2,u_2)+\lambda(\one, \zero, \one, \two, \mathbf{3})$, where $x_1=\zero$ is of length $2^{t_1+t_2}$. By Lemma \ref{lemm:uu+lambda}, we have that $v_2$ is a permutation of $$(\lambda, \stackrel{2^{t_1+t_2}}{\dots},\lambda,\Z_4,\stackrel{\alpha_2}{\dots}, \Z_4),$$ where $\lambda \in \{1,3\}$.
    From (\ref{eq:vi_}), $v_3=(u_3,u_3, u_3, u_3)+\lambda(\zero, \two, \four, \mathbf{6}).$ Since $\lambda \in \{1,3\}$,  $\lambda(\zero, \two, \four, \mathbf{6})$ is a permutation of $(\zero, \two, \four, \mathbf{6})$. 
    Thus, by Item \ref{2z22} of Lemma \ref{Lemma:cases}, $v_3=(u_3,u_3, u_3, u_3)+(\zero,\two, \four, \mathbf{6})$ is a permutation of $(2\Z_8,\stackrel{2m}{\dots}, 2\Z_8)$.  Therefore,  $v_3$  satisfies property \ref{proper_2b}, and so does $\vv$. 
    
Case 3: Assume that $o(\vv)=2$. Then,  $o(\zz)=2$ and $\lambda \in \{0,2\}$.  Since $o(\zz)=2$, we have that  $o(\uu)=2$ and, by Item 3 of Lemma \ref{lem:z_t1t2p1_1}, $x_1=\zero$. By induction hypothesis, $\uu$ satisfies property \ref{proper_3a} or \ref{proper_3b}. Assume that $\uu$ satisfies property \ref{proper_3a}. Then, $u_1=\zero$,  $u_2=\zero$, and  $u_3$ contains the element in $4\Z_8{\setminus}\{0\}=\{4\}$ exactly 
$m=8^{t_1-1}4^{t_2}$
times and   
$m'=8^{t_1-1}4^{t_2}-4^{t_1+t_2-1}$
times the element $0$.
If $\lambda=0$, then $\vv=( \zero \mid \zero \mid v_3)$ satisfies property \ref{proper_3a}, since $v_3$ contains $4m$ times the element $4$ and $4m'$ the element $0$. 
Now, we assume that $\lambda=2$. Since $v_1=(u_1, u_1)+\lambda(\zero, \one)$, $u_1=\zero$, and $\lambda=2$, we have that $v_1=\zero$. 
From (\ref{eq:vi_}),  $v_2=(x_1, u_2,u_2,u_2,u_2)+(\two, \zero, \two, \zero, \two)$, where $x_1=\zero$ is of length $2^{t_1+t_2}$ and $u_2=\zero$. Therefore, $v_2$ contains the element in $2\Z_4{\setminus}\{0\}=\{2\}$ exactly $\alpha_1+2\alpha_2=4^{t_1+t_2}$ times  and  $2\alpha_2=4^{t_1+t_2}-2^{t_1+t_2} $ times the element $0$, by (\ref{eq:t1t2}). From (\ref{eq:vi_}), $v_3=(u_3,u_3, u_3, u_3)+(\zero, \four, \zero, \four).$ Note that $u_3$ is a permutation of 
$$
(4,\stackrel{m-m'}{\dots}, 4, 4\Z_8,\stackrel{m'}{\dots}, 4\Z_8).
$$ 
Thus, by  Item \ref{2z22} of Lemma \ref{Lemma:cases}, $v_3$ is a permutation of $(4\Z_8,\stackrel{2m+2m'}{\dots}, 4\Z_8)$, so $v_3$ satisfies property \ref{proper_3b}, and so does $\vv$.   Therefore, if $o(\uu)=2$ and $\uu$ satisfies property \ref{proper_3a}, we have that $\vv$ satisfies property \ref{proper_3b}. 

We continue with the case when $o(\zz)=2$ and $\lambda \in \{0,2\}$. Again, we have that $o(\uu)=2$ and $x_1=\zero$. Now, we assume that $\uu$ satisfies property \ref{proper_3b}. 
Then, $u_1=\zero$,  $u_2$ contains the element in $2\Z_4{\setminus}\{0\}=\{2\}$ exactly $4^{t_1+t_2-1}$ times and  $4^{t_1+t_2-1}-2^{t_1+t_2-1}$ times the element  $0$, and $u_3$ is a permutation of $(4\Z_8,\stackrel{m}{\dots},4\Z_8)$ for some $m\geq 0$. 
If $\lambda=0$, then it is easy to see that $\vv$ satisfies property \ref{proper_3b}. Now, we assume that $\lambda=2$.
Since $v_1=(u_1, u_1)+\lambda(\zero, \one)$, $u_1=\zero$, and $\lambda=2$, we have that $v_1=\zero$.  From (\ref{eq:vi_}), $v_2=(x_1, u_2,u_2,u_2,u_2)+(\two, \zero, \two, \zero, \two)$, where $x_1=\zero$ is of length $2^{t_1+t_2}$. Note that $u_2+\two$ contains the element in $2\Z_4{\setminus}\{0\}=\{2\}$ as many times as $u_2$ contains the element $0$, and the element $0$ as many times as $u_2$ contains the element $2$. Therefore, $v_2$ contains the element in $2\Z_4{\setminus}\{0\}=\{2\}$ exactly
 $2^{t_1+t_2} + 2(4^{t_1+t_2-1}) + 2(4^{t_1+t_2-1} - 2^{t_1+t_2-1} ) =4^{t_1+t_2}$ times
     and  $2(4^{t_1+t_2-1}) + 2(4^{t_1+t_2-1} - 2^{t_1+t_2-1} ) =4^{t_1+t_2}-2^{t_1+t_2}$ times
     the element $0$.
From (\ref{eq:vi_}), $v_3=(u_3,u_3, u_3, u_3)+(\zero, \four, \zero, \four).$ By Item \ref{2z22} of Lemma \ref{Lemma:cases}, 
$v_3$ is a permutation of $(4\Z_8,\stackrel{4m}{\dots}, 4\Z_8)$.  Therefore,  $v_3$ satisfies property \ref{proper_3b}, and so does $\vv$. This completes the proof.
\end{proof}
 


 \begin{proposition}\label{Th:H1}
Let $t_1\geq 1$ and $t_2\geq 0$ be integers. The $\Z_2\Z_4 \Z_8$-additive code 
${\cH}^{t_1,t_2,1}$, generated by the matrix $A^{t_1,t_2,1}$, is a $\Z_2\Z_4 \Z_8$-additive Hadamard code. 
\end{proposition}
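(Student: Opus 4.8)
The plan is to verify the three defining properties of a Hadamard code for the binary image $H^{t_1,t_2,1}=\Phi(\mathcal{H}^{t_1,t_2,1})$: length $n$, exactly $2n$ codewords, and minimum distance $n/2$. By Lemma~\ref{relation:t1a1} the binary length is $n=\alpha_1+2\alpha_2+4\alpha_3=8^{t_1}4^{t_2}$, and since $\mathcal{H}^{t_1,t_2,1}$ is of type $(\alpha_1,\alpha_2,\alpha_3;t_1,t_2,1)$ we have $|\mathcal{H}^{t_1,t_2,1}|=8^{t_1}4^{t_2}\cdot 2=2n$; as $\Phi$ is injective (each $\phi_s$ is a bijection onto its image), $H^{t_1,t_2,1}$ indeed has $2n$ codewords. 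Moreover, by Proposition~\ref{disweight} the code $H^{t_1,t_2,1}$ is distance invariant, so the minimum distance equals the minimum weight, and it suffices to show that the weight enumerator consists of a single codeword of weight $0$, a single one of weight $n$, and all others of weight $n/2$.

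First I would single out the removed first row $\ww_1=(\one\mid\two\mid\four)$, which has order $2$, so that every codeword of $\mathcal{H}^{t_1,t_2,1}$ can be written as $\uu+\lambda\ww_1$ with $\uu\in\mathcal{G}^{t_1,t_2,1}$ and $\lambda\in\{0,1\}$. The crucial observation is that $\Phi(\ww_1)=(\one,\Phi_2(\two),\Phi_3(\four))=\one$, the all-one vector of length $n$, because $\phi_2(2)=(1,1)$ and $\phi_3(4)=(1,1,1,1)$. Applying Corollary~\ref{lemma2} coordinatewise (with $s=2$ and $s=3$, noting $\phi_s(2^{s-1})=\one$) shows that adding $\ww_1$ complements the Gray image, i.e. $\Phi(\uu+\ww_1)=\Phi(\uu)+\one=\overline{\Phi(\uu)}$, so that $\wt_H(\Phi(\uu+\ww_1))=n-\wt_H(\Phi(\uu))$. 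Hence the whole problem reduces to proving $\wt_H(\Phi(\uu))=n/2$ for every nonzero $\uu\in\mathcal{G}^{t_1,t_2,1}$: their complements then also have weight $n/2$, while $\uu=\zero$ and its complement $\ww_1$ account for the unique weights $0$ and $n$.

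The heart of the argument is this weight computation, which I would carry out case by case using Lemma~\ref{lemma:same_num2} together with the weights of $\phi_2$ and $\phi_3$ (under $\phi_2$: $0\mapsto0$, $2\mapsto2$, $1,3\mapsto1$; under $\phi_3$: $0\mapsto0$, $4\mapsto4$, every other element $\mapsto2$). In each structural case of Lemma~\ref{lemma:same_num2} one checks that every block contributes to $\wt_H(\Phi(\uu))$ exactly half the binary length of that block: a balanced $u_1$ contributes $\alpha_1/2$; any full coset block such as $2\Z_4=(0,2)$, $\Z_4{\setminus}2\Z_4=(1,3)$, $2\Z_8$, $\Z_8{\setminus}2\Z_8$, $4\Z_8=(0,4)$ or $2\Z_8{\setminus}4\Z_8=(2,6)$ contributes exactly half, and the prescribed tallies (the constant $\mu$-blocks, or the explicit counts of $2$'s and $0$'s in $u_2$ and of $4$'s and $0$'s in $u_3$) are arranged so that the $u_2$- and $u_3$-parts contribute exactly $\alpha_2$ and $2\alpha_3$. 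Summing yields $\wt_H(\Phi(\uu))=\tfrac{\alpha_1}{2}+\alpha_2+2\alpha_3=\tfrac12(\alpha_1+2\alpha_2+4\alpha_3)=n/2$ in all cases. I expect the main obstacle to be bookkeeping rather than anything conceptual: one must traverse every subcase of Lemma~\ref{lemma:same_num2} and match the explicit counts to exactly half the length, using $\alpha_1=2^{t_1+t_2}$ and $\alpha_1+2\alpha_2=4^{t_1+t_2}$ from Lemma~\ref{relation:t1a1}. Once this is established, the weight enumerator of $H^{t_1,t_2,1}$ is supported on $\{0,n/2,n\}$ with a single codeword of weight $n$, so the minimum distance is $n/2$ and $H^{t_1,t_2,1}$ is a Hadamard code, as claimed.
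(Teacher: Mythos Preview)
Your proposal is correct and follows essentially the same approach as the paper: decompose $\mathcal{H}^{t_1,t_2,1}$ as $\mathcal{G}^{t_1,t_2,1}\cup(\mathcal{G}^{t_1,t_2,1}+(\one\mid\two\mid\four))$, use Corollary~\ref{lemma2} to pass to the binary complement, and then run through the cases of Lemma~\ref{lemma:same_num2} to verify that every nonzero $\uu\in\mathcal{G}^{t_1,t_2,1}$ has $\wt_H(\Phi(\uu))=N/2$. Your framing of the weight computation as ``each block contributes half its binary length'' is a clean way to organize exactly the same case analysis the paper carries out.
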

\begin{proof}
Let ${\cH}^{t_1,t_2,1}$ be the $\Z_2\Z_4 \Z_8$-additive code of type $(\alpha_1, \alpha_2, \alpha_3;t_1, t_2,1)$ and  $H^{t_1,t_2,1}$ be the corresponding  $\Z_2\Z_4 \Z_8$-linear code of length $N$. We have that $N=\alpha_1+2\alpha_2+4\alpha_3$. The cardinality of $H^{t_1, t_2,1}$ is $8^{t_1}\cdot 4^{t_2}\cdot 2=2(\alpha_1+2\alpha_2+4\alpha_3)=2N$ by Lemma \ref{relation:t1a1}. By Proposition \ref{disweight}, the minimum distance of $H^{t_1, t_2,1}$ is equal to the minimum weight of $H^{t_1, t_2,1}$. Therefore, we just need to prove that the minimum weight of $H^{t_1, t_2,1}$ is $N/2$. 

We can write that ${\cH}^{t_1,t_2,1}={\cG}^{t_1,t_2,1} \cup ({\cG}^{t_1,t_2,1}+(\one \mid \two \mid \four))$. By Corollary \ref{lemma2}, $H^{t_1,t_2,1}=\Phi({\cG}^{t_1,t_2,1})\cup (\Phi({\cG}^{t_1,t_2,1})+\one)$. Let $\uu=(u_1 \mid u_2 \mid u_3) \in {\cH}^{t_1,t_2,1}{\setminus} \{\zero, (\one\mid \two\mid \four)\}$. We show that $\wt_H(\Phi(\uu))=N/2$. First, consider $\uu \in {\cG}^{t_1,t_2,1}{\setminus}\{\zero\}$. 
If $o(\uu)=8$, then by Lemma \ref{lemma:same_num2}, $u_1$ contains every element of $\Z_2$ the same number of times, and for $i \in \{2,3\}$, $u_i$ contains every element of $2^{i-1}\Z_{2^i}$ exactly $s_i$ times, $s_i\geq 0$, and the remaining $\alpha_i - 2s_i$ coordinates of $u_i$ are from $\Z_{2^i}{\setminus}2^{i-1}\Z_{2^i}$. Thus, from the definition of $\Phi$, we have that $\wt_H(\Phi(\uu))=\alpha_1/2+2s_2+(\alpha_2-2s_2)\cdot 1+ 4s_3+(\alpha_3-2s_3)\cdot 2=\alpha_1/2+\alpha_2+2\alpha_3=N/2$. If $o(\uu)=4$, then $\uu$ satisfies property \ref{proper_2a} 
or \ref{proper_2b} given in Lemma \ref{lemma:same_num2}. If $\uu$ satisfies property \ref{proper_2a}, then $u_3$ contains every element of $4\Z_8$ exactly $m$ times, $m\geq 0$, and the remaining coordinates of $u_3$ are from $\Z_8{\setminus}4\Z_8$. Thus, $\wt_H(\Phi(\uu))=\alpha_1/2+\alpha_2+4m+(\alpha_3-2m)\cdot 2=\alpha_1/2+\alpha_2+2\alpha_3=N/2$. Otherwise, if $\uu$ satisfies property \ref{proper_2b}, then $\wt_H(\Phi(\uu))=\alpha_1/2+2n+(\alpha_2-2n)\cdot 1+4t+(\alpha_3 -2t)\cdot 2=N/2$. If $o(\uu)=2$, then $\uu$ satisfies property \ref{proper_3a}  or \ref{proper_3b}  given in Lemma \ref{lemma:same_num2}. If $\uu$ satisfies property \ref{proper_3a}, then $\wt_H(\Phi(\uu))=\frac{1}{4}(\alpha_1/2+\alpha_2+2\alpha_3)\cdot4=N/2$. Otherwise, if $\uu$ satisfies property \ref{proper_3b}, then ${\setminus}=2\cdot \frac{1}{2}(\alpha_1/2+\alpha_2)+4m+(\alpha_3-2m)\cdot 2=N/2$.

Finally, note that $\wt_H(\Phi(\uu)+\one)=N/2$. Therefore, we have that the weight of every element of $H^{t_1,t_2,1}{\setminus}\{\zero, \one\}$ is $N/2$, that is, the minimum weight of $H^{t_1,t_2,1}$ is $N/2$.
\end{proof}

\begin{proposition}\label{Th:H2}
Let $t_1\geq 1$, $t_2\geq 0$, and $t_3\geq1$ be integers. If ${\cH}^{t_1,t_2,t_3}$ is a $\Z_2\Z_4\Z_8$-additive Hadamard code of type $(\alpha_1, \alpha_2,\alpha_3; t_1,t_2,t_3)$, then, by applying  construction (\ref{eq:recGenMatrix3}), ${\cH}^{t_1,t_2,t_3+1}$ is a $\Z_2\Z_4\Z_8$-additive Hadamard code of type $(2\alpha_1,2\alpha_2,2\alpha_3; t_1,t_2,t_3+1)$.
\end{proposition}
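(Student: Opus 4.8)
The plan is to verify directly the three defining properties of a Hadamard code for the Gray image $H^{t_1,t_2,t_3+1}=\Phi(\mathcal{H}^{t_1,t_2,t_3+1})$, namely length $N'$, cardinality $2N'$, and minimum distance $N'/2$, where I set $N=\alpha_1+2\alpha_2+4\alpha_3$ (the length of $H^{t_1,t_2,t_3}$) and $N'=2N$. The length and cardinality are immediate from the new type $(2\alpha_1,2\alpha_2,2\alpha_3;t_1,t_2,t_3+1)$: one has $N'=2\alpha_1+2(2\alpha_2)+4(2\alpha_3)=2N$ and $|\mathcal{H}^{t_1,t_2,t_3+1}|=8^{t_1}4^{t_2}2^{t_3+1}=2|\mathcal{H}^{t_1,t_2,t_3}|=2\cdot 2N=2N'$, using that $\mathcal{H}^{t_1,t_2,t_3}$ is Hadamard. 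By Proposition \ref{disweight} the minimum distance equals the minimum weight, so it remains to prove that every codeword of $H^{t_1,t_2,t_3+1}$ other than $\zero$ and the all-one word has weight exactly $N=N'/2$.

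Next I would describe a general codeword explicitly. Writing $A^{t_1,t_2,t_3}=(A_1\mid A_2\mid A_3)$ and letting $r=(\zero,\one\mid\zero,\two\mid\zero,\four)$ be the bottom row added in (\ref{eq:recGenMatrix3}), the duplication of the old generators shows that every element of $\mathcal{H}^{t_1,t_2,t_3+1}$ is of the form $(u_1,u_1\mid u_2,u_2\mid u_3,u_3)+\lambda r$ for some $\uu=(u_1\mid u_2\mid u_3)\in\mathcal{H}^{t_1,t_2,t_3}$ and $\lambda\in\{0,1\}$, the latter because $r$ has order $2$. Note that, unlike in Proposition \ref{Th:H1}, this step needs only the Hadamard property of the previous code, not the detailed structural description of its codewords given in Lemmas \ref{lemma:same_num1} and \ref{lemma:same_num2}.

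The weight of $\Phi$ of such a codeword is then computed in two cases. For $\lambda=0$ the codeword is $(u_1,u_1\mid u_2,u_2\mid u_3,u_3)$; since $\Phi$ acts componentwise, its image has weight $2\,\wt_H(\Phi(\uu))$, which by the Hadamard hypothesis equals $N$ whenever $\uu\notin\{\zero,(\one\mid\two\mid\four)\}$, while $\uu=\zero$ gives $\zero$ and $\uu=(\one\mid\two\mid\four)$ gives the all-one word of weight $N'$. For $\lambda=1$ the codeword is $(u_1,u_1+\one\mid u_2,u_2+\two\mid u_3,u_3+\four)$, so that $\wt_H(\Phi(\cdot))=\wt_H(\Phi(\uu))+\wt_H(\Phi((u_1+\one\mid u_2+\two\mid u_3+\four)))$. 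Applying Corollary \ref{lemma2} with $s=2$ and $s=3$ gives $\Phi_2(u_2+\two)=\Phi_2(u_2)+\one$ and $\Phi_3(u_3+\four)=\Phi_3(u_3)+\one$, which together with the complement $u_1+\one$ on the $\Z_2$ part yields $\Phi((u_1+\one\mid u_2+\two\mid u_3+\four))=\Phi(\uu)+\one$. Hence the $\lambda=1$ weight is $\wt_H(\Phi(\uu))+\bigl(N-\wt_H(\Phi(\uu))\bigr)=N$, independently of $\uu$.

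I expect this complementation identity for $\lambda=1$ to be the only real content of the argument: it shows that every codeword with $\lambda=1$ has weight exactly $N'/2$ (in particular $r$ itself and the images of the $(\one\mid\two\mid\four)$-type words), while the Hadamard hypothesis handles the $\lambda=0$ codewords, the only exceptions being $\zero$ and the all-one word. Since the minimum weight is therefore $N'/2$, and the length and cardinality are $N'$ and $2N'$, the code $\mathcal{H}^{t_1,t_2,t_3+1}$ is a $\Z_2\Z_4\Z_8$-additive Hadamard code of type $(2\alpha_1,2\alpha_2,2\alpha_3;t_1,t_2,t_3+1)$. The only points needing a little care are checking that the duplication in (\ref{eq:recGenMatrix3}) carries every $A_i$-combination to an $(u_i,u_i)$ block, so that the displayed form is exhaustive, and keeping track of the three block lengths $\alpha_1$, $2\alpha_2$, $4\alpha_3$ when invoking the complement; no genuine obstacle is anticipated.
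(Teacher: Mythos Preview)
Your proposal is correct and follows essentially the same route as the paper: you verify the length and cardinality from the new type, decompose every codeword of $\mathcal{H}^{t_1,t_2,t_3+1}$ as $(u_1,u_1\mid u_2,u_2\mid u_3,u_3)+\lambda(\zero,\one\mid\zero,\two\mid\zero,\four)$ with $\lambda\in\{0,1\}$, and use Corollary~\ref{lemma2} to turn the $\lambda=1$ shift into a binary complement, obtaining weight $N$ in that case and $2\wt_H(\Phi(\uu))$ in the $\lambda=0$ case. The paper's proof does the same computations (it also cites Corollary~\ref{coro4} to identify $\Phi_s(\lambda\cdot 2^{s-1})$ with $\lambda\one$, a step you absorb into the direct evaluation of $\phi_s(2^{s-1})$).
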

\begin{proof}
By construction (\ref{eq:recGenMatrix3}), ${\cH}^{t_1,t_2,t_3+1}$ is a $\Z_2\Z_4\Z_8$-additive code of type $(\alpha'_1, \alpha'_2, \alpha'_3; t_1, t_2,t_3+1)$, where $\alpha'_1=2\alpha_1$, $\alpha'_2=2\alpha_2$, and  $\alpha'_3=2\alpha_3$. 

Since $H^{t_1,t_2,t_3}$ is a Hadamard code of length $N=\alpha_1+2\alpha_2+4\alpha_3$, then its minimum distance is $N/2$ and $|H^{t_1,t_2,t_3}|=2N$. Note that $H^{t_1,t_2,t_3+1}$ is a $\Z_2\Z_4\Z_8$-linear code of length $N'=\alpha'_1+2\alpha'_2+4\alpha'_3=2N$ and $|H^{t_1,t_2,t_3+1}|=8^{t_1}4^{t_2}2^{t_3+1}=2|H^{t_1,t_2,t_3}|=2\cdot 2N=2N'$. By Proposition \ref{disweight}, the minimum distance of $H^{t_1, t_2,t_3+1}$ is equal to the minimum weight of $H^{t_1, t_2,t_3+1}$. Now, we only have to prove that the minimum weight of $H^{t_1,t_2,t_3+1}$ is $N'/2$. Let ${\cH}^{t_1,t_2,t_3}=({\cH}_1\mid {\cH}_2\mid {\cH}_3)$. Note that 
\begin{align*}
    {\cH}^{t_1,t_2,t_3+1}=\bigcup_{\lambda \in  \{0,1\}}(({\cH}_1, {\cH}_1\mid {\cH}_2, {\cH}_2\mid {\cH}_3, {\cH}_3)+\lambda(\zero, \one \mid \zero, \two \mid \zero, \four)).
\end{align*}
By Corollaries \ref{coro4} and  \ref{lemma2}, 
\begin{align}\label{eq:Ht3p1_}
    H^{t_1,t_2,t_3+1}&=\bigcup_{\lambda \in \{0,1\}}(\Phi({\cH}_1, {\cH}_1\mid {\cH}_2, {\cH}_2\mid {\cH}_3, {\cH}_3)+\lambda(\zero, \one,  \zero, \one,  \zero, \one))\nonumber\\
    &=A_0 \cup A_1,
\end{align}
where $A_\lambda=\Phi({\cH}_1, {\cH}_1\mid {\cH}_2, {\cH}_2\mid {\cH}_3, {\cH}_3)+\lambda(\zero, \one,  \zero, \one,  \zero, \one)$, ${\lambda \in \{0,1\}}$.
Next, we show that the minimum weight of $A_\lambda$ is $N'/2$.  Any element in $A_\lambda$ is of the form $\Phi(u_1, u_1\mid u_2, u_2 \mid u_3, u_3)+\lambda(\zero, \one,  \zero, \one,  \zero, \one)$, for $\uu=(u_1\mid u_2 \mid u_3) \in ({\cH}_1\mid {\cH}_2\mid {\cH}_3)$. Let $\uu=(u_1\mid u_2 \mid u_3) \in ({\cH}_1\mid {\cH}_2\mid {\cH}_3){\setminus}\{\zero\}$. 
When $\lambda =0$, we have that $\wt_H(\Phi(u_1, u_1\mid u_2, u_2 \mid u_3, u_3))=2\wt_H(\Phi(\uu))$. Thus, the minimum weight of $A_0$ 
is $2\cdot N/2=N'/2$. Otherwise, when $\lambda=1$, we have that $\wt_H(\Phi(u_1, u_1\mid u_2, u_2 \mid u_3, u_3)+(\zero, \one,  \zero, \one,  \zero, \one))= \wt_H(\Phi(\uu))+\alpha_1-\wt_H(u_1)+2\alpha_2-\wt_H(\Phi_2(u_2))+ 4\alpha_3-\wt_H(\Phi_3(u_3))=\wt_H(\Phi(\uu)) +\alpha_1+2\alpha_2+4\alpha_3-\wt_H(\Phi(\uu))=N=N'/2$. Thus, the minimum weight of $A_1$  
is $N'/2$. Therefore, from (\ref{eq:Ht3p1_}), the minimum weight of $H^{t_1,t_2,t_3+1}$ is $N'/2$.
\end{proof}

 \begin{theorem}\label{Th:H3}
Let $t_1\geq 1$, $t_2\geq 0$, and $t_3\geq 1$ be integers. The $\Z_2\Z_4 \Z_8$-additive code 
${\cH}^{t_1,t_2,t_3}$, generated by the matrix $A^{t_1,t_2,t_3}$, is a $\Z_2\Z_4 \Z_8$-additive Hadamard code. 
\end{theorem}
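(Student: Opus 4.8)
The plan is to prove this by a single induction on $t_3 \geq 1$, with the two preceding propositions doing essentially all of the work. I fix $t_1 \geq 1$ and $t_2 \geq 0$ throughout and treat $t_3$ as the induction variable, since the construction reaches $A^{t_1,t_2,t_3}$ by first building $A^{t_1,t_2,1}$ (through constructions (\ref{eq:recGenMatrix1}) and (\ref{eq:recGenMatrix2})) and then applying construction (\ref{eq:recGenMatrix3}) exactly $t_3 - 1$ times.

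For the base case $t_3 = 1$, the statement is exactly Proposition \ref{Th:H1}: the code ${\cH}^{t_1,t_2,1}$ generated by $A^{t_1,t_2,1}$ is a $\Z_2\Z_4\Z_8$-additive Hadamard code. No extra argument is needed here, so the base case is immediate.

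For the inductive step, I would assume that ${\cH}^{t_1,t_2,t_3}$ is a $\Z_2\Z_4\Z_8$-additive Hadamard code of type $(\alpha_1,\alpha_2,\alpha_3;t_1,t_2,t_3)$ for some $t_3 \geq 1$. By the recursive description, the matrix $A^{t_1,t_2,t_3+1}$ is obtained from $A^{t_1,t_2,t_3}$ by applying construction (\ref{eq:recGenMatrix3}), so Proposition \ref{Th:H2} applies verbatim and yields that ${\cH}^{t_1,t_2,t_3+1}$ is a $\Z_2\Z_4\Z_8$-additive Hadamard code (of type $(2\alpha_1,2\alpha_2,2\alpha_3;t_1,t_2,t_3+1)$). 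This closes the induction and gives the theorem for all $t_3 \geq 1$.

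The only genuine subtlety is that there is no real obstacle left at this level: the entire difficulty has been pushed into the earlier results. The hard part is Proposition \ref{Th:H1}, which in turn relies on the structural weight-distribution descriptions of the codewords of ${\cal{G}}^{t_1,t_2,1}$ established in Lemmas \ref{lemma:same_num1} and \ref{lemma:same_num2}, and on the distance-invariance coming from Proposition \ref{disweight}. Once those are available, and once Proposition \ref{Th:H2} has verified that construction (\ref{eq:recGenMatrix3}) preserves the Hadamard property (by checking that each of the two cosets $A_0$ and $A_1$ in (\ref{eq:Ht3p1_}) has minimum weight $N'/2$), the theorem itself is a routine two-line induction and requires no further computation.
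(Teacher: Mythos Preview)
Your proof is correct and matches the paper's approach exactly: the paper's proof of Theorem~\ref{Th:H3} is the single line ``It follows from Propositions~\ref{Th:H1} and~\ref{Th:H2},'' which is precisely the induction on $t_3$ you spell out. Your additional commentary about where the real work lies is accurate but not part of the formal argument.
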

\begin{proof}
It follows from Propositions \ref{Th:H1} and \ref{Th:H2}.
\end{proof}

\begin{example}\label{ex:linear101}
The $\Z_2\Z_4 \Z_8$-additive code ${\cH}^{1,0,1}$ generated by the  matrix $A^{1,0,1}$, given in (\ref{eq:recGenMatrix0}),
is a $\Z_2\Z_{4}\Z_8$-additive Hadamard code of type $(2,1,1;1,0,1)$.  
We can write ${\cH}^{1,0,1}=\bigcup_{\alpha \in\Z_2} ({\cA} + \alpha \textbf{1})$, where ${\cA}=\{\lambda (0, 1\mid 1\mid 1): \lambda \in \Z_8\}$.
Thus, $H^{1,0,1}=\Phi({\cH}^{1,0,1})= \bigcup_{\alpha \in\Z_2} (\Phi({\cA}) + \alpha \textbf{1})$, where $\Phi({\cA})$ consists of all the rows of the Hadamard matrix 
\begin{equation*}
    H(2,4)=
    \left(\begin{array}{cccccccc}
0 &0 &0 &0 &0 &0 &0 &0  \\
0 &1 &0 &1 &0 &1 &0 &1   \\
0 &0 &1 &1 &0 &0 &1 &1  \\
0 &1 &1 &0 &0 &1 &1 &0 \\
0 &0 &0 &0 &1 &1 &1 &1 \\
0 &1 &0 &1 &1 &0 &1 &0 \\
0 &0 &1 &1 &1 &1 &0 &0 \\
0 &1 &1 &0 &1 &0 &0 &1 \\
\end{array}\right).
\end{equation*}
Note that $\Phi({\cA})$ is linear and the minimum distance of $\Phi({\cA})$ is $4$, so $H^{1,0,1}$ is a binary linear Hadamard code of length $8$. 
\end{example}


\begin{proposition}\label{HadamardEq}
 Let $t_1\geq 1$, $t_2\geq 0$, and $t_3\geq 1$ be integers. Let $H^{t_1,t_2,t_3}$ be a $\Z_2\Z_4 \Z_8$-linear Hadamard code of length $2^t$. Then, $t+1=3t_1+2t_2+t_3$. 
\end{proposition}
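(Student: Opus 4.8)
The plan is to read off the identity directly from the length formula established in Proposition~\ref{relation:t1a1a3}, since the combinatorial content has already been carried out there. First I would recall that, by definition, the $\Z_2\Z_4\Z_8$-linear Hadamard code $H^{t_1,t_2,t_3}=\Phi(\cH^{t_1,t_2,t_3})$ has binary length $N=\alpha_1+2\alpha_2+4\alpha_3$, where $(\alpha_1,\alpha_2,\alpha_3;t_1,t_2,t_3)$ is the type of the additive code $\cH^{t_1,t_2,t_3}$ generated by $A^{t_1,t_2,t_3}$.

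The key step is to invoke the third equation of~(\ref{eq:propt1t2t3}) in Proposition~\ref{relation:t1a1a3}, namely
\begin{equation*}
\alpha_1+2\alpha_2+4\alpha_3=8^{t_1}4^{t_2}2^{t_3-1}.
\end{equation*}
Writing each factor as a power of $2$, I would rewrite the right-hand side as $2^{3t_1}2^{2t_2}2^{t_3-1}=2^{3t_1+2t_2+t_3-1}$, so that $N=2^{3t_1+2t_2+t_3-1}$. Since the hypothesis gives $N=2^t$ and a positive integer has a unique representation as a power of $2$, equating the exponents yields $t=3t_1+2t_2+t_3-1$, that is, $t+1=3t_1+2t_2+t_3$, as claimed.

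There is essentially no obstacle at this stage: the entire content is absorbed into Proposition~\ref{relation:t1a1a3}, whose proof in turn rests on Lemma~\ref{relation:t1a1} and the inductive bookkeeping through constructions~(\ref{eq:recGenMatrix1})--(\ref{eq:recGenMatrix3}). The only point worth emphasizing is that the three recursive constructions preserve the stated relationship between the type parameters and the length, which is exactly what Proposition~\ref{relation:t1a1a3} guarantees; hence the statement follows immediately, and no further induction or weight computation is needed here.
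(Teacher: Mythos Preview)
Your argument is correct. Both your proof and the paper's are one-line computations, but they equate different quantities. The paper counts codewords: since a Hadamard code of length $2^t$ has $2^{t+1}$ codewords and $|H^{t_1,t_2,t_3}|=8^{t_1}4^{t_2}2^{t_3}=2^{3t_1+2t_2+t_3}$ from the type, the identity follows. You instead match lengths directly, invoking Proposition~\ref{relation:t1a1a3} to get $N=\alpha_1+2\alpha_2+4\alpha_3=2^{3t_1+2t_2+t_3-1}$ and comparing with $N=2^t$. Your route is slightly more self-contained in that it does not use the Hadamard property (the relation $|H|=2N$) at all, only the recursive length formula; the paper's version, conversely, does not need to cite Proposition~\ref{relation:t1a1a3}. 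Either way the content is the same and there is nothing to fix.
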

\begin{proof}
Since $H^{t_1,t_2,t_3}$ is a binary Hadamard code of length $2^t$, we have that $|H^{t_1,t_2,t_3}|=2\cdot 2^t=2^{t+1}$. Note that $|H^{t_1,t_2,t_3}|=2^{3t_1+2t_2+t_3}$, and hence $t+1=3t_1+2t_2+t_3$. 
\end{proof}

\medskip
 Now, we recall the following theorem in order to compare the $\Z_2\Z_4\Z_8$-linear Hadamard codes (with $\alpha_1\neq 0$, $\alpha_2\neq 0$ and $\alpha_3\neq 0$) with the $\Z_2\Z_4$-linear Hadamard codes (with $\alpha_1\neq 0$, $\alpha_2\neq 0$).

\begin{theorem}\cite{PRV06}\label{rankZ2Z4}
Let $t\geq 3$ and $t_2 \in \{0,\dots,\lfloor t/2 \rfloor\}$. Let $H^{t_1,t_2}$ be the nonlinear $\Z_2\Z_4$-linear Hadamard code of length $2^t$ and type $(\alpha_1, \alpha_2;t_2,t_3)$, where $\alpha_1=2^{t-t_1}$, $\alpha_2=2^{t-1}-2^{t-t_1-1}$, and $t_2=t+1-2t_1$. Then, $$\rank(H^{t_1,t_2})=t_2+2t_1+ \binom{t_1}{2} \ \mbox{ and } \ \kernel(H^{t_1,t_2})=t_1+t_2.$$
\end{theorem}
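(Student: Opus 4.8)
The plan is to read off both invariants from the single source of nonlinearity of the Gray map, namely its behaviour on the quaternary block. First I would record the quadratic identity
\[
\Phi_2(\mathbf{u}+\mathbf{v})=\Phi_2(\mathbf{u})+\Phi_2(\mathbf{v})+\Phi_2(2\,\mathbf{u}\ast\mathbf{v}),
\]
where $\mathbf{u},\mathbf{v}\in\Z_4^{\alpha_2}$ and $\ast$ is the componentwise product in $\Z_4$. This follows coordinatewise from $\phi_2(a+b)=\phi_2(a)+\phi_2(b)+a_0b_0(1,1)$ together with $\phi_2(2a_0b_0)=a_0b_0\phi_2(2)=a_0b_0(1,1)$, and it is consistent with Corollaries \ref{coro4} and \ref{lemma2}; on the binary block $\Phi$ is additive, so only the $\Z_4$-coordinates produce corrections. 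I work with the generator matrix $A^{t_1,t_2}$ of $\mathcal{H}^{t_1,t_2}$ from the Remark, whose rows are $t_1$ generators $\mathbf{g}_1,\dots,\mathbf{g}_{t_1}$ of order $4$ and $t_2$ generators $\mathbf{h}_1,\dots,\mathbf{h}_{t_2}$ of order $2$, and I write $S_i$ for the set of quaternary coordinates where $\mathbf{g}_i$ is odd.

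For the kernel, the identity shows that $\Phi(\mathbf{c})\in\mathrm{K}(H^{t_1,t_2})$ exactly when $2\,\mathbf{c}\ast\mathbf{v}=\zero$ for every $\mathbf{v}\in\mathcal{H}^{t_1,t_2}$, i.e. exactly when the $\Z_4$-part of $\mathbf{c}$ is even in every coordinate in which the code attains an odd value. Arguing by induction on constructions (\ref{eq:recZ2Z4GenMatrix1}) and (\ref{eq:recZ2Z4GenMatrix3}), by the same weight analysis behind Lemma \ref{lemma:same_num2} specialized to the $\Z_8$-free setting, I would check that every quaternary coordinate of $\mathcal{H}^{t_1,t_2}$ does attain an odd value (the freshly added order-$4$ row carries the value $\mathbf{1}$ on the new $M_1$-columns). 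Hence the kernel subgroup is precisely the all-even subcode, generated over $\Z_2$ by $2\mathbf{g}_1,\dots,2\mathbf{g}_{t_1},\mathbf{h}_1,\dots,\mathbf{h}_{t_2}$; these $t_1+t_2$ elements are independent and $\Phi$ is additive on them, so $\kernel(H^{t_1,t_2})=t_1+t_2$.

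For the rank, iterating the quadratic identity gives that $\langle H^{t_1,t_2}\rangle$ is spanned over $\Z_2$ by the generator images $\Phi(\mathbf{g}_i),\Phi(\mathbf{h}_k)$ together with all correction terms $\Phi_2(2\,\mathbf{g}\ast\mathbf{g}')$ over pairs of generators. A correction vanishes unless both factors have order $4$, and the diagonal term $\Phi_2(2\,\mathbf{g}_i\ast\mathbf{g}_i)$ coincides with the image of $2\mathbf{g}_i\in\mathcal{K}$ and so already lies in the kernel; the only genuinely new vectors are the $\binom{t_1}{2}$ off-diagonal terms $\Phi_2(2\,\mathbf{g}_i\ast\mathbf{g}_j)$, $i<j$. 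Since $2\,\mathbf{g}_i\ast\mathbf{g}_j$ equals $\mathbf{2}$ on $S_i\cap S_j$ and $\zero$ elsewhere, each such term places $\phi_2(2)=(1,1)$ on the coordinate pairs indexed by $S_i\cap S_j$ and $(0,0)$ elsewhere. Thus the rank decomposes as $\kernel(H^{t_1,t_2})$, plus $t_1$ for the classes of $\Phi(\mathbf{g}_1),\dots,\Phi(\mathbf{g}_{t_1})$ modulo the kernel, plus the number of independent off-diagonal corrections; to land on $t_2+2t_1+\binom{t_1}{2}$ it remains to show this last number is exactly $\binom{t_1}{2}$.

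The main obstacle is precisely this last independence statement: one must show that the $\binom{t_1}{2}$ indicator vectors $\mathbf{1}_{S_i\cap S_j}=\mathbf{1}_{S_i}\ast\mathbf{1}_{S_j}$ are linearly independent over $\Z_2$ and independent of the kernel together with the order-$4$ generator images (these last carry the distinct pattern $(0,1)$ or $(1,0)$ on odd coordinates rather than $(1,1)$). I would handle this by making the support structure explicit: the fourfold block replication in construction (\ref{eq:recZ2Z4GenMatrix1}), labelled by $\mathbf{0},\mathbf{1},\mathbf{2},\mathbf{3}$, forces the reductions modulo $2$ of the order-$4$ generators to behave like the $t_1$ coordinate functions of a Boolean cube, so that their pairwise products realize the degree-$2$ monomials, which are linearly independent by the standard Reed--Muller argument. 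Carrying this bookkeeping through the induction, equivalently checking that the order-$2$ step (\ref{eq:recZ2Z4GenMatrix3}) raises both rank and kernel by $1$ while the order-$4$ step (\ref{eq:recZ2Z4GenMatrix1}) raises the kernel by $1$ and the rank by $t_1+2$, then yields the two claimed formulas.
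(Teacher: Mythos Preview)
The paper does not actually prove this theorem: it is quoted verbatim from \cite{PRV06} and used only as an input to Example~\ref{ex:classi}. So there is no ``paper's own proof'' to compare against; what you have written is essentially a reconstruction of the argument in the original reference.

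Your strategy is the standard one and is sound in outline. One point to tighten: the kernel criterion you state, ``$\Phi(\mathbf{c})\in\mathrm{K}(H^{t_1,t_2})$ exactly when $2\,\mathbf{c}\ast\mathbf{v}=\zero$ for every $\mathbf{v}$'', is too strong as written. The quadratic identity together with Corollary~\ref{lemma2} gives $\Phi(\mathbf{c})+\Phi(\mathbf{v})=\Phi(\mathbf{c}+\mathbf{v}+2\,\mathbf{c}\ast\mathbf{v})$, so the correct equivalence is $2\,\mathbf{c}\ast\mathbf{v}\in\mathcal{H}^{t_1,t_2}$ for all $\mathbf{v}$, not $=\zero$. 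To reach the all-even subcode you then need the extra step: if $\mathbf{c}$ has an odd quaternary coordinate, exhibit a generator $\mathbf{v}$ for which $2\,\mathbf{c}\ast\mathbf{v}$ is a nonzero order-$2$ word \emph{outside} $\mathcal{H}^{t_1,t_2}$. This is where the structure of the Hadamard code (few order-$2$ codewords, each of a very specific support shape) is actually used, and it should be made explicit rather than folded into the ``every quaternary coordinate attains an odd value'' remark.

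For the rank, your decomposition into kernel $+$ $t_1$ coset representatives $+$ $\binom{t_1}{2}$ cross-terms is exactly the right bookkeeping, and you correctly identify the independence of the $\binom{t_1}{2}$ vectors $\Phi_2(2\,\mathbf{g}_i\ast\mathbf{g}_j)$ modulo the rest as the crux. Your Reed--Muller heuristic (the mod-$2$ reductions of the order-$4$ rows behave like coordinate functions on a Boolean cube, so their pairwise products are degree-$2$ monomials) is the right intuition; to make it a proof you should pin down, via the recursive construction~(\ref{eq:recZ2Z4GenMatrix1}), a specific set of $\binom{t_1}{2}$ quaternary coordinate positions on which the incidence matrix of the $S_i\cap S_j$ is triangular, and check separately that none of these indicator vectors lies in the span of the generator images (the $(1,1)$-versus-$(0,1)$ pattern you mention handles independence from the $\Phi(\mathbf{g}_i)$, but independence from the order-$2$ generators $\Phi(\mathbf{h}_k)$ and from $\Phi(2\mathbf{g}_i)$ also needs a word).
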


Also, we recall the construction of the $\Z_{2^s}$-linear Hadamard codes with $s\geq 2$ studied in \cite{KernelZ2s}, and the following theorem given in \cite{EquivZ2s}, in order to compare these codes with the $\Z_2\Z_4\Z_8$-linear Hadamard codes having $\alpha_1\neq 0$, $\alpha_2\neq 0$, and $\alpha_3\neq 0$. Let $T_i=\lbrace j\cdot 2^{i-1}\, :\, j\in\lbrace0,1,\dots,2^{s-i+1}-1\rbrace \rbrace$ for all $i \in \{1,\ldots,s \}$.
Note that $T_1=\lbrace0,\dots,2^{s}-1\rbrace$. Let $t_1$, $t_2$,\dots,$t_s$ be non-negative integers with $t_1\geq1$. Consider the matrix $\bar{A}^{t_1,\dots,t_s}$ whose columns are exactly all the vectors of the form $\mathbf{z}^T$, $\mathbf{z}\in\lbrace1\rbrace\times T_1^{t_1-1}\times T_{2}^{t_2}\times\cdots\times T_s^{t_s}$. Let $\bar{\mathcal{H}}^{t_1,\dots,t_s}$ be the $\Z_{2^s}$-additive code of type $(n;t_1,\dots,t_s)$  generated by the matrix $\bar{A}^{t_1,\dots,t_s}$. 
 Let $\bar{H}^{t_1,\dots,t_s}=\Phi(\bar{\mathcal{H}}^{t_1,\dots,t_s})$ be the corresponding $\Z_{2^s}$-linear Hadamard code.

\begin{theorem}\cite{EquivZ2s}\label{theo:equi}
Let $\bar{H}^{t_1,\dots,t_s}$ be the $\Z_{2^s}$-linear Hadamard code, with $s\geq 2$ and $t_s\geq 1$.
Then,  for all $\ell\in\{1,\dots,t_s\}$, $\bar{H}^{t_1,\dots,t_s}$  is permutation equivalent to the
$\Z_{2^{s+\ell}}$-linear Hadamard code
$\bar{H}^{1,\zero^{\ell-1},t_1-1,t_2,\dots,t_{s-1},t_s-\ell}.$
\end{theorem}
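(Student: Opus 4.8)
The plan is to prove the statement by induction on $\ell$, reducing everything to the base case $\ell=1$. First I would record what a single application of the $\ell=1$ statement does: applied to a $\Z_{2^{s'}}$-linear Hadamard code of type $(t'_1,\dots,t'_{s'})$ with $t'_{s'}\geq 1$, it returns the $\Z_{2^{s'+1}}$-linear Hadamard code of type $(1,t'_1-1,t'_2,\dots,t'_{s'-1},t'_{s'}-1)$. Starting from $\bar{H}^{t_1,\dots,t_s}$ and iterating this $\ell$ times, the type parameters telescope exactly into $(1,\zero^{\ell-1},t_1-1,t_2,\dots,t_{s-1},t_s-\ell)$ over $\Z_{2^{s+\ell}}$; at the $k$-th step the code being transformed has last parameter $t_s-(k-1)$, which remains $\geq 1$ precisely because $k\leq\ell\leq t_s$. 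Since permutation equivalence is transitive, it then suffices to settle $\ell=1$.

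For $\ell=1$ I would compare the two binary images $C=\Phi_s(\bar{\mathcal{H}}^{t_1,\dots,t_s})$ and $C'=\Phi_{s+1}(\bar{\mathcal{H}}^{1,t_1-1,t_2,\dots,t_{s-1},t_s-1})$ directly. A length and cardinality count in the spirit of Proposition \ref{HadamardEq} shows both are binary Hadamard codes of the same length $2^t$ and the same size $2^{t+1}$, so only the existence of a coordinate permutation carrying one onto the other remains to be produced. The key structural tool is the nested block form of Carlet's map: writing the columns of $Y_s$ as the binary expansions of $0,\dots,2^s-1$ in ascending order yields $Y_s=\left(\begin{smallmatrix}Y_{s-1}&Y_{s-1}\\ \zero&\one\end{smallmatrix}\right)$, so $\phi_{s+1}$ refines $\phi_s$ by exactly one doubling level. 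In parallel, the generators of order at most $2^s$ in $\bar{\mathcal{H}}^{1,t_1-1,\dots,t_s-1}$ range over the sets $T'_i=2^{i-1}\Z_{2^{s+1}}$, and the group isomorphisms $2^{i-1}\Z_{2^{s+1}}\cong\Z_{2^{s+2-i}}$ let these mirror, level by level, the generators of $\bar{\mathcal{H}}^{t_1,\dots,t_s}$.

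Concretely, I would index the coordinates of $C$ by pairs (column of $\bar{A}^{t_1,\dots,t_s}$, Gray position) and the coordinates of $C'$ by pairs (column of $\bar{A}^{1,t_1-1,\dots,t_s-1}$, Gray position), and then build the permutation $\sigma$ from the two facts above: the extra doubling in $Y_s$ versus $Y_{s-1}$ is absorbed by the additional column duplication that the single order-$2^{s+1}$ generator forces, while the remaining generators are matched through the ideal isomorphisms. Using the compatibility properties of the Gray map recorded (for $\Z_{2^s}$) in Corollaries \ref{coro4} and \ref{lemma2} to expand $\Phi_{s+1}$ of an arbitrary codeword, I would verify that $\sigma$ maps each codeword of $C'$ to a codeword of $C$ and that the induced map is a bijection; distance invariance (Proposition \ref{disweight}) keeps weights under control throughout.

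The step I expect to be the main obstacle is showing that one fixed permutation $\sigma$ works simultaneously for all codewords. The subtlety is that the passage from $\Z_{2^s}$ to $\Z_{2^{s+1}}$ changes the underlying abelian group of the additive code, fusing one order-$2^s$ generator and one order-$2$ generator into a single order-$2^{s+1}$ generator, so $C$ and $C'$ are \emph{not} images of isomorphic additive codes and the equivalence is a genuine statement about binary vectors. Hence the verification cannot be reduced to a row-by-row matching of generator matrices; the heart of the argument is checking that the interaction of the unique order-$2^{s+1}$ generator with Carlet's map reproduces, after $\sigma$, exactly the binary pattern generated on the $\Z_{2^s}$ side by the corresponding full-order and order-$2$ rows, uniformly over every choice of coefficients. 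Making this column bookkeeping precise is the technical core.
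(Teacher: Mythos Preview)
This theorem is not proved in the present paper: it is quoted verbatim from \cite{EquivZ2s} and used only as a black box in Example~\ref{ex:classi}. Consequently there is no ``paper's own proof'' to compare your proposal against.

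As for the proposal itself: your reduction to the case $\ell=1$ by iterating and telescoping is correct (and is in fact how the original reference organizes the argument), and the condition $k\leq\ell\leq t_s$ does guarantee that the last type parameter stays $\geq 1$ at every intermediate step. For the base case, your structural observation $Y_s=\left(\begin{smallmatrix}Y_{s-1}&Y_{s-1}\\ \zero&\one\end{smallmatrix}\right)$ is exactly the right lever, and the difficulty you flag---that the two additive codes are \emph{not} isomorphic as groups, so the permutation must be exhibited and verified at the binary level---is real and is precisely where the work in \cite{EquivZ2s} goes. What you have written is a coherent plan rather than a proof: the explicit description of $\sigma$ and the codeword-by-codeword verification are announced but not carried out. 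If you want to complete it, the cleanest route is to expand $\Phi_{s+1}$ of the unique order-$2^{s+1}$ row using the block form of $Y_s$, identify the resulting binary vector with the pair $(\Phi_s$ of a full-order row, $\Phi_s$ of an order-$2$ row$)$ after a fixed column shuffle, and then check that this same shuffle is transparent to the remaining rows because they lie in $2\Z_{2^{s+1}}$ and hence factor through $\phi_s$ already.
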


For $5\leq t\leq 11$, Tables 1 and 3 given in \cite{KernelZ2s}  show all possible values of ($t_1,\dots,t_s$) corresponding to nonlinear $\Z_{2^s}$-linear Hadamard codes, with $s\geq 2$, of length $2^t$. For each of them, the values $(r,k)$ are shown, where $r$ is the rank  and $k$ is the dimension of the kernel.  Note that if two codes have different values $(r,k)$, they are not equivalent. 
 The following example shows that all the nonlinear $\Z_2\Z_4\Z_8$-linear Hadamard codes of length $2^{11}$, with $\alpha_1 \neq 0$, $\alpha_2 \neq 0$, and $\alpha_3 \neq 0$,  are not equivalent to any  $\Z_2\Z_4\Z_8$-linear Hadamard codes of any other type, any $\Z_2\Z_4$-linear Hadamard code, with $\alpha_1 \neq 0$ and $\alpha_2 \neq 0$, and any $\Z_{2^s}$-linear Hadamard code, with $s\geq 2$, of the same length $2^{11}$.

\begin{example}\label{ex:classi}
   Consider $t=11$. By solving equation $t+1=3t_1+2t_2+t_3$ given in Proposition \ref{HadamardEq}, all $\Z_2\Z_4\Z_8$-linear Hadamard codes  of length $2^{11}$ are the ones  in $$T=\{H^{1,0,9}, H^{1,1,7}, H^{1,2,5}, H^{1,3,3}, H^{1,4,1}, H^{2,0,6}, H^{2,1,4}, H^{2,2,2}, H^{3,0,3}, H^{3,1,1}\}.$$  By using Magma, their corresponding values of $(r,k)$,  where $r$ is the rank  and $k$ is the dimension of the kernel, are  $(12,12)$, $(14,9)$, $(17,8)$, $(21,7)$, $(26,6)$, $(17,8)$, $(22,7)$, $(28,6)$, $(28,6)$, and $(37,5)$, respectively. The code $H^{1,0,9}$ is the only linear code in $T$ since it has the same rank and the dimension of the kernel. By using Magma, we can check that the following codes in each pair are nonequivalent to each other: $(H^{1,2,5}, H^{2,0,6})$, $(H^{2,2,2}, H^{3,0,3})$. Therefore, none of the $\Z_2\Z_4\Z_8$-linear Hadamard codes of length $2^{11}$ is equivalent to another $\Z_2\Z_4\Z_8$-linear Hadamard code of any other type. 
   
   Let $\Bar{T}=T \setminus \{H^{1,0,9}\}$.  Similarly, by  solving equation $t+1=2t_1+t_2$ given in Theorem \ref{rankZ2Z4}, all nonlinear $\Z_2\Z_4$-linear Hadamard codes  of length $2^{11}$ are $H^{2,8}$, $H^{3,6}$, $H^{4,4}$ and $H^{5,2}$, and by Theorem \ref{rankZ2Z4}, their corresponding values of $(r,k)$ are $(13,10)$, $(15,9)$, $(18,8)$, and $(22,7)$, respectively. Note that if two codes have different values $(r,k)$, they are not equivalent.  By using Magma, we can check that $H^{2,1,4}$ and $H^{5,2}$ are nonequivalent. Therefore, all the codes in $\Bar{T}$ are nonequivalent to any $\Z_2\Z_4$-linear Hadamard code of length $2^{11}$. 
   
   Finally, note that all the codes in $\Bar{T}$, except $H^{1,1,7}$ and $H^{2,1,4}$,  are not equivalent to any $\Z_{2^s}$-linear Hadamard code, with $s\geq 2$, of length $2^{11}$, since they have different values of $(r,k)$. The $\Z_{2^s}$-linear Hadamard codes of length $2^{11}$, having the same values $(r,k)=(14,9)$ as $H^{1,1,7}$, are $\Bar{H}^{2,0,6}$, $\Bar{H}^{1,1,0,5}$, $\Bar{H}^{1,0,1,0,4}$, $\Bar{H}^{1,0,0,0,1,0,2}$, and $\Bar{H}^{1,0,0,0,0,0,1,0,0}$, which are equivalent to each other by Theorem \ref{theo:equi}. The $\Z_4$-linear Hadamard code $\Bar{H}^{6,0}$ is the only $\Z_{2^s}$-linear Hadamard code of length $2^{11}$, having the same values $(r,k)=(22,7)$ as $H^{2,1,4}$. However, by using Magma, we can check that the following codes in each pair are nonequivalent to each other: $(H^{1,1,7},\Bar{H}^{2,0,6})$, $(H^{2,1,4}, \Bar{H}^{6,0})$. 
   
   Therefore, all nonlinear $\Z_2\Z_4\Z_8$-linear Hadamard codes  of length $2^{11}$  are  not equivalent to any $\Z_2\Z_4\Z_8$-linear Hadamard code of any other type, any $\Z_2\Z_4$-linear and $\Z_{2^s}$-linear Hadamard code, with $s\geq 2$, of length $2^{11}$.
\end{example}

 \bibliography{Manuscript}
\end{document}